\newif\ifrevtex
\revtextrue
\ifrevtex
  \documentclass[reprint,twocolumn,superscriptaddress,secnumarabic,amssymb,aps,prx]{revtex4-2}
\else
  \documentclass[10pt]{article}
\fi
\ifrevtex
  \usepackage[pass]{geometry}
\else
  \usepackage[margin=1in]{geometry}
\fi
\usepackage[T1]{fontenc}   %
\usepackage{bold-extra}    %
\DeclareMathAlphabet{\mathtt}{T1}{lmtt}{m}{n}

\newcommand{\panelfont}{\bfseries\fontsize{9}{10.8}\selectfont}

\usepackage{amsmath,amssymb,amsthm}
\usepackage{empheq}
\usepackage{graphicx}
\usepackage{caption}
\ifrevtex
\makeatletter
\DeclareCaptionJustification{flushjustified}{\leftskip\z@skip\rightskip\z@skip\parfillskip\z@ plus 1fil\relax}
\makeatother
\else
\fi
\usepackage[table]{xcolor}
\usepackage{tikz}
\usetikzlibrary{positioning,arrows.meta}
\usepackage{booktabs}
\usepackage{threeparttable}
\usepackage{enumitem}
\setlist[itemize,enumerate]{itemsep=0.15em,parsep=0pt}
\ifrevtex
\usepackage{float}
\makeatletter
\let\newfloat\newfloat@ltx
\let\float@end\float@end@ltx
\makeatother
\fi
\usepackage{algorithm}
\usepackage{algpseudocode}
\algrenewcommand\algorithmicindent{1em}
\makeatletter
\providecommand*{\theHALG@line}{}
\renewcommand*{\theHALG@line}{\thealgorithm.\arabic{ALG@line}}
\newcommand*{\toclevel@algorithm}{0}
\makeatother
\floatstyle{ruled}
\restylefloat{algorithm}
\ifrevtex
\makeatletter
\renewcommand*{\fnum@algorithm}{\fname@algorithm~\thealgorithm}
\makeatother
\fi
\usepackage{microtype}
\usepackage{tocloft}
\usepackage{titletoc}   %
\newcommand{\maintocdepth}[1]{\setcounter{tocdepth}{#1}}
\definecolor{deepslate}{HTML}{355070}
\definecolor{linkink}{HTML}{2E5C9E}
\definecolor{sageink}{HTML}{4E7D68}
\definecolor{rustink}{HTML}{9A4F2A}
\usepackage[colorlinks,linkcolor=linkink,citecolor=linkink,urlcolor=linkink,filecolor=black,bookmarksdepth=2]{hyperref}
\hypersetup{
  pdftitle={Discovering state-of-the-art quantum circuits with large language models},
  pdfauthor={Andi Gu, Alek Kemeny, Susanne F. Yelin, Mikhail D. Lukin}}
\usepackage{footnotehyper}
\makesavenoteenv{figure}
\ifrevtex\else
\usepackage[backend=biber,style=numeric-comp,sorting=none,natbib=true,maxbibnames=99]{biblatex}
\AtEveryBibitem{\iffieldundef{doi}{}{\clearfield{eprint}\clearfield{eprinttype}\clearfield{eprintclass}}}
\fi
\usepackage{braket}
\usepackage{listings}
\usepackage[capitalize,nameinlink,noabbrev]{cleveref}
\crefname{algorithm}{Algorithm}{Algorithms}
\Crefname{algorithm}{Algorithm}{Algorithms}
\crefname{appendix}{Appendix}{Appendices}
\Crefname{appendix}{Appendix}{Appendices}
\crefname{extendeddatatable}{Extended Data Table}{Extended Data Tables}
\Crefname{extendeddatatable}{Extended Data Table}{Extended Data Tables}
\crefname{extendeddatafigure}{Extended Data Figure}{Extended Data Figures}
\Crefname{extendeddatafigure}{Extended Data Figure}{Extended Data Figures}
\lstdefinestyle{solver}{
  language=Python,
  basicstyle=\footnotesize\ttfamily,
  commentstyle=\rmfamily\itshape\color{black!55},
  keywordstyle=\color{deepslate}\bfseries,
  stringstyle=\color{rustink},
  emph={ccx,cx,cz,hmr,free,alloc_qubit,alloc_qubits,free_vec,
        declare_qubit_register,checkpoint,mirror,emit},
  emphstyle=\color{sageink},
  numbers=left,
  numberstyle=\tiny\color{black!50},
  numbersep=6pt,
  xleftmargin=1.5em,
  columns=fullflexible,
  keepspaces=true,
  showstringspaces=false
}
\lstdefinestyle{solver-loop}{
  style=solver,
  basicstyle=\ttfamily\scriptsize,
  numbers=none,
  xleftmargin=0pt,
  xrightmargin=0pt,
  aboveskip=0pt,
  belowskip=0pt,
  frame=none,
  escapeinside={(*@}{@*)}
}

\definecolor{deltabg}{HTML}{EEF7F0}
\definecolor{deltafg}{HTML}{19693E}

\usepackage{refcount}
\makeatletter
\newcommand{\eqnumpin}[2]{%
  \@ifundefined{r@#1}{}{%
    \edef\kmx@pinned{#2}%
    \edef\kmx@printed{\getrefnumber{#1}}%
    \ifx\kmx@pinned\kmx@printed\else
      \PackageError{kickmix}{Stale figure: equation \detokenize{#1} now
        prints as (\kmx@printed), but (#2) is baked into
        figs/fig-rs-locator-recursion.pdf}{Regenerate that figure and
        rebuild figs/fig-rs-shared-update.tex; see the note by \string\eqnumpin
        in the preamble.}%
    \fi}}
\makeatother

\DeclareRobustCommand{\kickmix}{\textsc{kickmix}}

\definecolor{claudecoral}{HTML}{D97757}
\definecolor{ink}{HTML}{25313D}
\definecolor{panelblue}{HTML}{466791}
\definecolor{panelbluebg}{HTML}{F5F8FC}
\definecolor{panelgreen}{HTML}{417E63}
\definecolor{panelgreenbg}{HTML}{F5FBF7}
\definecolor{histred}{HTML}{FF0000}

\newsavebox{\loopboxA}
\newsavebox{\loopboxB}

\newtheorem{theorem}{Theorem}[section]
\newtheorem{lemma}[theorem]{Lemma}
\newtheorem{corollary}[theorem]{Corollary}
\newtheorem{proposition}[theorem]{Proposition}
\theoremstyle{remark}

\makeatletter
\renewcommand\paragraph{%
  \@startsection{paragraph}{4}{\z@}%
  {.8125ex plus .25ex minus .05ex}{-1em}%
  {\normalfont\normalsize\bfseries}}
\makeatother

\newcommand{\Tr}{\operatorname{Tr}}

\ifrevtex
\newenvironment{widefigure}[1][]{\begin{figure*}[t]}{\end{figure*}}
\newenvironment{widetable}{\begin{table*}[t]}{\end{table*}}
\else
\newenvironment{widefigure}[1][H]{\begin{figure}[#1]}{\end{figure}}
\newenvironment{widetable}{\begin{table}[H]}{\end{table}}
\newenvironment{widetext}{}{}
\newcommand{\onecolumngrid}{}
\renewenvironment{abstract}{%
  \par\vspace{0.8em}%
  \noindent\hspace*{0.11\textwidth}%
  \begin{minipage}{0.78\textwidth}\small
}{%
  \end{minipage}%
  \par\vspace{1.2em}%
}
\fi

\begin{document}

\ifrevtex
\title{Discovering state-of-the-art quantum circuits with large language models}
\author{Andi Gu}
\email{andigu@g.harvard.edu}
\affiliation{Harvard Quantum Initiative for Science and Engineering, Cambridge, MA 02138, USA}
\affiliation{Department of Physics, Harvard University, Cambridge, MA 02138, USA}
\affiliation{Anthropic STEM Fellows Program}
\author{Alek H. Kemeny}
\affiliation{Anthropic}
\author{Susanne F. Yelin}
\affiliation{Harvard Quantum Initiative for Science and Engineering, Cambridge, MA 02138, USA}
\affiliation{Department of Physics, Harvard University, Cambridge, MA 02138, USA}
\author{Mikhail D. Lukin}
\affiliation{Harvard Quantum Initiative for Science and Engineering, Cambridge, MA 02138, USA}
\affiliation{Department of Physics, Harvard University, Cambridge, MA 02138, USA}
\affiliation{QuEra Computing, 1380 Soldiers Field Road, Boston, MA 02135, USA}
\else
\begingroup
\renewcommand{\thefootnote}{\fnsymbol{footnote}}
\begin{center}
\vspace*{-3.2em}
{\Large\bfseries Discovering state-of-the-art quantum circuits with large language models\par}
\vspace{1em}
{\normalsize Andi Gu$^{1,2,3,}$\footnotemark[1], Alek Kemeny$^{4}$, Susanne F. Yelin$^{1,2}$, Mikhail D. Lukin$^{1,2,5}$\par}
\vspace{0.2em}
{\small\itshape
$^1$Harvard Quantum Initiative for Science and Engineering, Cambridge, MA 02138, USA\\
$^2$Department of Physics, Harvard University, Cambridge, MA 02138, USA\\
$^3$Anthropic STEM Fellows Program\\
$^4$Anthropic\\
$^5$QuEra Computing Inc., 1380 Soldiers Field Road, Boston, MA 02135, USA\par}
\end{center}
\footnotetext[1]{\href{mailto:andigu@g.harvard.edu}{andigu@g.harvard.edu}}
\endgroup
\fi

\begin{abstract}
Significant efforts are currently being directed toward the implementation of large-scale quantum information processors, but realizing useful quantum advantage also requires the development of efficient %
quantum algorithms. 
While for many practical problems the dominant resource costs are associated with
a small set of computational primitives,
 optimizing them in the quantum setting presents unique challenges, including a
\emph{reversibility} constraint absent from ordinary classical computation.
Here we introduce an evolutionary framework in which large language models (LLMs) propose circuit designs and an independent evaluator verifies and scores them.
We apply this framework to tasks spanning numerical computation, chemistry, optimization, quantum simulation, and number theory.
In seven of the eight tasks, the discovered circuits significantly reduce \emph{both} Toffoli and qubit counts relative to the state of the art.
In the eighth, the search discovers a new reversible architecture for prime-field modular inversion.
By handling reversibility differently, the architecture makes divsteps---a classical algorithm previously impractical for reversible integer arithmetic---viable
and substantially improves the Toffoli--qubit trade-off for modular inversion.
These results open the door for evaluator-guided LLM co-design of
new algorithmic structures for quantum computation.

\end{abstract}
\ifrevtex\maketitle\fi

Computational primitives based on classical function evaluation play a central
role in a wide range of quantum algorithms. Such primitives include decoding routines for
optimization~\cite{jordan2024dqi}, numerical functions for quantum
simulation~\cite{babbush2016exponentially}, arithmetic
routines for number theory~%
\cite{roetteler2017resource}, and quantum read-only memory
(QROM), which coherently loads classically specified data into quantum
registers~\cite{babbush2018encoding}. To operate on quantum data, these
functions must be compiled into circuits that can be executed coherently on quantum hardware. This imposes
two constraints absent from ordinary classical programs. First, because the
input may be in superposition, the circuit cannot inspect an intermediate
value and select an ordinary classical control-flow path. Data-dependent
behavior is instead expressed through controlled operations within a
fixed gate sequence, so every branch contributes to the circuit's cost even if
it applies only to some inputs. Second, the computation must be reversible:
intermediate information cannot simply be overwritten or discarded. Temporary
values must eventually be uncomputed, returning their qubits to their initial
state and preventing residual information from remaining entangled with the
output. Until they are uncomputed, these values occupy additional qubits. At the same time, when bounded error is permitted, machinery needed only for rare or atypical
inputs can sometimes be omitted, reducing resource requirements while keeping
the failure probability controlled.
Efficient coherent compilation is of major practical importance, and recent reductions in resource requirements have come
from coordinated changes to arithmetic algorithms, data representations, and
their reversible
implementation~\cite{chevignard2025reducing,gidney2025factor}. Such redesigns
are difficult for general quantum circuits because directly
testing correctness can require exponentially expensive classical simulation or fully operational large-scale quantum processors. Automated circuit
optimizers therefore typically preserve correctness by restricting the search
to predefined equivalence-preserving rewrites~\cite{nam2018automated},
limiting them to variations of a fixed design.

\looseness=-1
Here we introduce a framework that pairs large language model (LLM) proposals with an independent evaluator that verifies and scores each candidate.
We focus on a class of quantum circuits that compute a classical function $f$: %
for an arbitrary input
$\ket{\psi}=\sum_{x,y}\alpha_{x,y}\ket{x}\ket{y}$, the circuit
must act as $\ket{\psi}
  \longmapsto
  \sum_{x,y}\alpha_{x,y}
  \ket{x}\ket{y\oplus f(x)}$, where $\ket{x}$ and $\ket{y}$ are computational-basis
states of the input and output registers, respectively. By linearity, this coherent action is
correct if the circuit performs the required map on every computational-basis
input: $\ket{x}\ket{y} \longmapsto \ket{x}\ket{y\oplus f(x)}$.
This basis-state reduction makes candidate circuits amenable to programmatic
evaluation, since every
allowed operation maps a computational-basis state to another
computational-basis state, possibly changing its phase, but never creating a
superposition~\cite{babbush2026kickmix}. The action of a candidate on any
chosen input can therefore be simulated by tracking a bit string and its phase,
rather than $2^n$ amplitudes. Resource accounting is also simple: we use the
Toffoli count $T$ and the peak qubit count $Q$, and rank circuits by their product, the $TQ$-cost.
Together, these properties let us
build an \emph{efficient, programmatic} evaluator that checks the correctness and resource cost of any candidate circuit design.

\begin{widefigure}
\centering
\includegraphics[width=\linewidth]{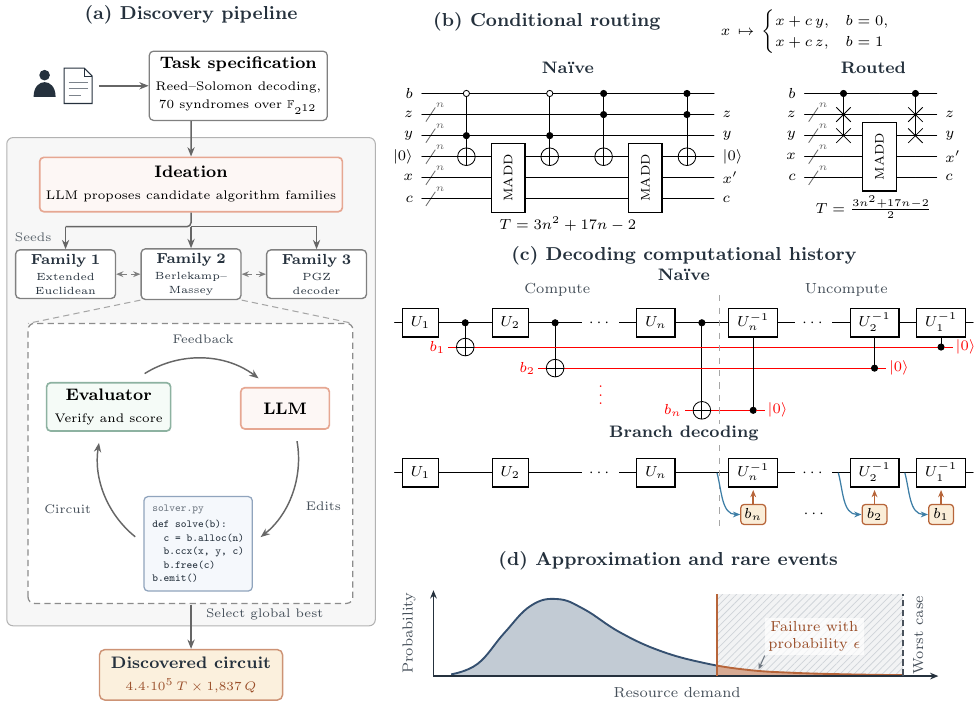}
\caption{Overview of the circuit-discovery framework. (a) A hand-written task specification defines the circuit design task (for example, Reed--Solomon decoding) and is given to an LLM. The LLM proposes architecturally different approaches (\emph{algorithm families}; here extended Euclidean, Berlekamp--Massey, and Peterson--Gorenstein--Zierler (PGZ) decoders), which are then developed and optimized in parallel. Strong candidates are periodically shared between families to allow for cross-pollination of useful ideas.
(b) Conditional routing reduces the cost of coherent control flow:
for instance, two controlled multiply-adds (MADDs) become
inexpensive controlled swaps and one shared arithmetic block. (c) Branch
decisions that a na\"ive reversible computation
stores during the forward pass (red) can instead be decoded from the current state
during uncomputation. (d) When an application permits bounded error, allowing rare failures on atypical, costly inputs reduces resource requirements.}
\label{fig:overview}
\end{widefigure}

Each search begins from a hand-written task specification and a blank
program; the specification states the required circuit behavior but provides
no implementation guidance. The LLM proposes a candidate and iteratively revises
it using evaluator feedback (\cref{fig:overview}(a)). Previous work
represents candidates as sequences of gates or predefined circuit
blocks~\cite{liang2023qgas,macarone2026circuit}, which can obscure
the algorithmic structure of the circuit. Our framework instead
lets the LLM propose and revise candidates through a high-level Python interface,
where reusable functions and ordinary programming constructs make that structure
explicit. This allows a single conceptual change to reorganize operations
throughout the circuit, while the evaluator tests its consequences for
correctness and resource cost. However, purely iterative refinement of a
single candidate can remain tied to the architecture from
which it began, resulting in convergence to local (but not necessarily global) optima. We therefore start with an ideation stage in which the LLM is prompted to propose several distinct \emph{algorithm families}, each
representing a different approach to the problem. Circuits within each family
are optimized independently in parallel, allowing the search to explore different strategies rather than
only variations of one design. Well-performing candidates are periodically shared
between families so that an idea discovered in one approach can inform
another.

Across eight diverse compilation tasks, the discovered circuits repeatedly
exhibit three common principles of algorithm--circuit co-design. These
principles address the features of
coherent compilation described above: expressing data-dependent control flow, making intermediate computations reversible
without retaining their entire history, and using a failure budget to remove circuit resources devoted to atypical cases (see \cref{fig:overview}).

\section*{Design principles}
\label{sec:principles-overview}

One recurring principle in the circuits discovered by the search is to replace
conditional arithmetic with \emph{conditional routing}. A na\"ive coherent
translation of a casewise classical algorithm includes a separately controlled
arithmetic block for every branch. The discovered circuits
instead use a representation in which the branches differ primarily in which
values occupy particular computational roles. Relatively cheap controlled swaps then
route the appropriate values into those roles, one unconditional arithmetic
path performs the expensive work, and further routing places the results in
their required registers. In this sense, where a classical program routes
execution to the selected operation, the coherent circuit routes the selected
data through a shared operation. The idea resembles bucket-brigade quantum random-access memory (QRAM)~\cite{giovannetti2008quantum}.

For example, consider four $n$-qubit fixed-point registers $c$, $x$, $y$, and
$z$, together with a control qubit $b$. Suppose that we wish to perform
\begin{equation}
  x\longmapsto
  \begin{cases}
    x+cy & \text{for } b=0,\\
    x+cz & \text{for } b=1,
  \end{cases}
  \label{eq:conditional-multiply-add}
\end{equation}
while returning $c$, $y$, and $z$ unchanged. A na\"ive implementation uses two
controlled multiply-adds. Conditional routing instead swaps $y$ and $z$
according to $b$, applies one uncontrolled multiply-add, and undoes the swap.
For the standard schoolbook construction~\cite{haner2018optimizing}, this
reduces the Toffoli count from
$T_{\mathrm{naive}}(n)=3n^2+17n-2$ to
$T_{\mathrm{routed}}(n)=(3n^2+17n-2)/2$, as shown in
\cref{fig:overview}(b).

More generally, conditional routing requires \emph{canonicalizing}
branch-dependent states into a representation on which the same expensive
computation applies. Schematically, when the branch can be expressed as $U_b = R_b^{-1} K R_b$, the circuit need only control the inexpensive routing transformation $R_b$;
the expensive kernel $K$ appears once and need not be controlled. Finding such
a decomposition typically takes more than recognizing identical arithmetic in two
branches: it exploits an equivalence
in the required output so that apparently different updates become instances
of the same computation. The difficulty lies less in the
routing circuit than in determining which
branch-dependent differences matter and which can be absorbed into a shared
representation.

Reversibility creates a second challenge: reversing a conditional update
requires knowing which branch was taken during the forward computation. A
na\"ive implementation records each decision in a history stack whose
size grows with the number of updates. But the branch decisions need not always
be stored explicitly: different branches may leave distinguishable patterns in
registers that remain available during uncomputation. As illustrated in \cref{fig:overview}(c), the circuit can recover
the branch bit from such a pattern, use it to select the appropriate inverse
update, and then clear the recovered bit, so the register contents serve as an
inexpensive record of the branch decisions. For example, suppose we have a register $g$ that begins at $g_0=0$. In round $j$, conditioned on some branch bit $b_j$, we add $2^j$, so that
\begin{equation}
  g_{j+1}=g_j+b_j2^j.
  \label{eq:binary-history-example}
\end{equation}
For arbitrary values of
$g_j$, there is no way to tell from $g_{j+1}$ which branch was taken. Globally, however, there is an important invariant: $g_j<2^j$, and therefore bit $j$ of $g_{j+1}$
must equal $b_j$. During uncomputation, the circuit can simply read this bit to recover $b_j$. The computation's
evolving state thus supplies the branch decisions in reverse order without a
separate history stack. In general, finding such a decoder requires the forward algorithm, its data
representation, and its uncomputation procedure to be designed together.
Viewed locally, an update may be ambiguous, since either branch can appear
to admit a valid predecessor. The ambiguity can disappear once the circuit is
restricted to states reachable from the prescribed initialization, where a
relation across the registers can distinguish histories that no single
register can.

The third principle exploits the small but finite failure probability permitted
by most quantum algorithms. Turning this failure budget into resource
savings, however, is not automatic. One must first identify the rare
inputs that determine worst-case resource requirements. Unlike a
classical program, a quantum circuit must contain the gates and ancillas
required by every supported path. A rare input can therefore determine
the circuit's Toffoli count or peak qubit count. Yet identifying such inputs is
not enough: the circuitry supporting them may not be separable from the
computation on typical inputs. The co-design task is to reorganize the
computation so that the additional resources required by rare inputs are
localized---for example, to extra iterations, stack entries, or arithmetic
lanes. Those components can then be omitted while confining failure to the
low-probability inputs that require them
(\cref{fig:overview}(d)). In what follows we show how these design principles result in significantly improved algorithmic circuits across a wide variety of quantum tasks.

\section*{Reversible syndrome decoding for DQI}
\label{sec:dqi-principles}

Decoded quantum interferometry (DQI) is a quantum algorithm that turns certain
optimization problems into decoding problems for classical error-correcting
codes, using quantum interference to favor better solutions. A central
application is optimal polynomial intersection (OPI), a form of polynomial
fitting over finite fields for which DQI offers a superpolynomial speedup over
known classical algorithms. DQI achieves this by using a Reed--Solomon decoder
as a subroutine within the quantum computation~\cite{jordan2024dqi}. The decoder identifies the positions and values of errors from a collection
of algebraic checks called \emph{syndromes}. DQI requires this decoding to run
coherently on a superposition of inputs, making the reversible decoder a
dominant resource cost. \citet{khattar2025optimized-dqi} compiled DQI circuits
for OPI over binary extension fields using reversible Reed--Solomon decoders
based on the extended Euclidean algorithm.

We study their primary Reed--Solomon instance, in which 70 syndrome values
uniquely determine an error pattern containing at most 35 errors. Our search
framework discovered a circuit that replaces their synchronized Euclidean
solve with the Berlekamp--Massey (BM) recurrence. BM substantially reduces the
arithmetic cost, but its updates overwrite intermediate values that a
na\"ive reversible implementation would retain in a large history
stack. The discovered circuit preserves BM's arithmetic advantage while
avoiding most of this storage (\cref{fig:rs-shared-update}).

\begin{widefigure}
\centering
\includegraphics[width=\linewidth]{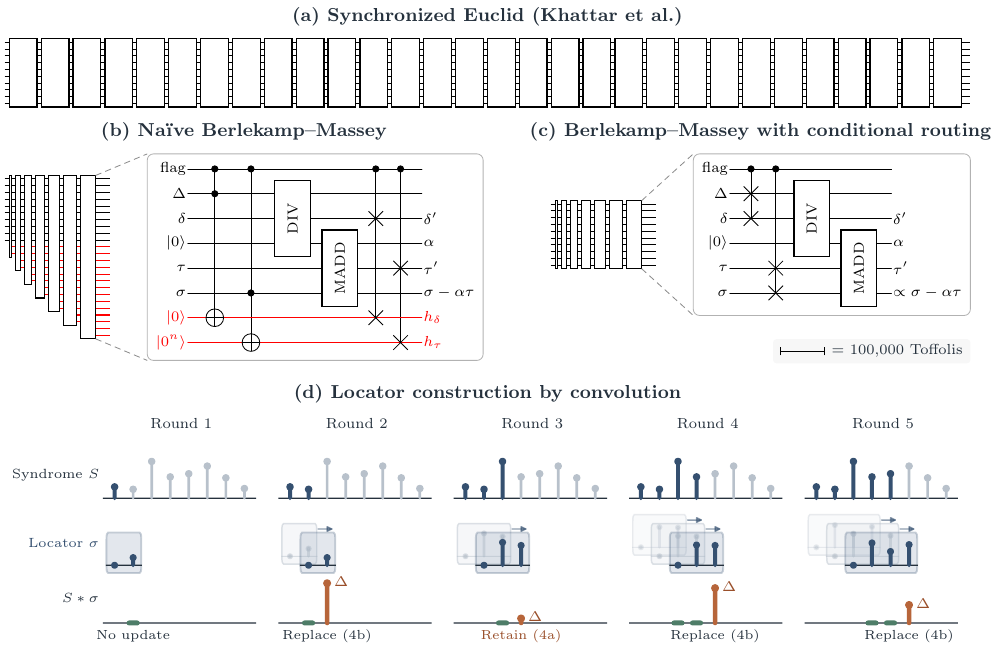}
\caption{Reversible Reed--Solomon decoding. (a) The synchronized extended-Euclidean construction of
\citet{khattar2025optimized-dqi} requires approximately $1.9$ million
Toffolis for the locator solve. (b) Berlekamp--Massey reduces this cost to approximately
$140{,}000$ Toffolis but na\"ively accumulates intermediate information (red). DIV forms $\alpha=\Delta/\delta$, MADD applies the multiply-add, and the flag marks the replace case.
(c) Conditional routing preserves this arithmetic advantage while avoiding
most history storage. Block widths are proportional to Toffoli count.
(d) Locator construction as convolution with the syndrome sequence. The
discrepancy $\Delta$ is the convolution at the next syndrome; a nonzero value
requires updating the locator.}
\label{fig:rs-shared-update}
\end{widefigure}
\IfFileExists{figs/fig-rs-locator-recursion-refs.tex}{\eqnumpin{eq:rs-bm-retain-earlier}{4a}
\eqnumpin{eq:rs-bm-retain-current}{4b}
}{}

Reed--Solomon decoding amounts to determining which positions in the
data contain errors. The $L$ error positions are encoded in a locator
polynomial $\sigma(z)=\sum_{k=0}^{L}\sigma_kz^k$, whose roots identify those
positions. To recover the locator coefficients, BM solves for the lowest-degree
nonzero polynomial whose convolution (see \cref{fig:rs-shared-update}(d)) with the syndrome sequence vanishes:
\begin{equation}
  (S*\sigma)_r=\sum_{k=0}^{L}\sigma_kS_{r-k}=0,
  \qquad r=L+1,\ldots,70.
\end{equation}
BM constructs the
locator incrementally. At each round, the convolution evaluated at the next
syndrome gives the discrepancy $\Delta$: if it vanishes, no correction is
needed; otherwise, BM adjusts the locator coefficients and, when necessary,
increases its degree.

To make a correction without invalidating previous rounds of updates, BM retains an earlier polynomial $\tau$ as a correction template, together with its nonzero discrepancy $\delta$. When $\Delta\ne0$, BM scales $\tau$ by
$\alpha=\Delta/\delta$ and applies the correction $\sigma \gets \sigma - \alpha \tau$. BM keeps the template until the current locator has accumulated enough information from subsequent syndromes to take over its role:
\begin{subequations}\label{eq:rs-bm-casewise-overview}
\begin{empheq}[left={(\sigma,\tau,\delta)\gets\empheqlbrace}]{alignat=2}
  &(\sigma-\alpha\tau,\tau,\delta)
    &&\quad\text{(retain)},
    \label{eq:rs-bm-retain-earlier}\\
  &(\sigma-\alpha\tau,\sigma,\Delta)
    &&\quad\text{(replace)}.
    \label{eq:rs-bm-retain-current}
\end{empheq}
\end{subequations}
The replace update in \cref{eq:rs-bm-retain-current} creates the main reversibility problem: it overwrites the previous template $\tau$ and discrepancy $\delta$. A na\"ive reversible implementation stores these displaced values so that the update can later be undone. Over repeated rounds, this accumulates a history of whole polynomials (\cref{fig:rs-shared-update}(b)).

The discovered circuit avoids this polynomial history by exploiting a freedom in the decoding task: the locator is needed only up to a nonzero multiplicative factor, because rescaling leaves its roots unchanged. The
circuit (\cref{fig:rs-shared-update}(c)) therefore implements a rescaled version of both branches of
\cref{eq:rs-bm-casewise-overview}, using simple conditional routing and a single
shared update:
\begin{subequations}\label{eq:rs-bm-rescaled-overview}
\begin{empheq}[left={(\sigma,\tau,\delta)\gets\empheqlbrace}]{alignat=2}
  &(\sigma-\alpha\tau,\tau,\delta)
    &&\quad\text{(retain)},
    \label{eq:rs-bm-rescaled-retain}\\
  &(\tau-\alpha^{-1}\sigma,\sigma,\Delta)
    &&\quad\text{(replace)}.
    \label{eq:rs-bm-rescaled-replace}
\end{empheq}
\end{subequations}
The result
$\tau-\alpha^{-1}\sigma=-\alpha^{-1}(\sigma-\alpha\tau)$
differs from the locator in the unrescaled update only by the nonzero scale factor
$-\alpha^{-1}$ and so has the same roots. Both branches can then use one shared
arithmetic path without storing the displaced template $\tau$.

\looseness=-1
This resolves the history problem for the polynomial registers, but the
discrepancies present a second reversibility problem. To uncompute a correction,
the circuit must recover whichever discrepancy is no longer present after
the update. The replace case of \cref{eq:rs-bm-rescaled-replace} overwrites the stored
value $\delta_{\mathrm{old}}$ with $\delta_{\mathrm{new}}=\Delta$, but the circuit can \emph{decode}
$\delta_{\mathrm{old}}$ from the constant coefficients of the updated locator registers. BM shifts the template by one power of $z$ every round, so $\tau$ has zero constant coefficient entering the round. Writing
$\sigma_0$ and $\tau_0$ for the constant coefficients after the update gives $\delta_{\mathrm{old}}=\sigma_0\tau_0^{-1}\delta_{\mathrm{new}}$.
This case thus adds no entry to the history
stack. In fact, the circuit never stores $\delta$ at all: the same relation fixes $\delta$ in terms of the locator's constant coefficient (Supplementary Information [SI]~\ref*{sec:case-dqi-rs}). The retain update of \cref{eq:rs-bm-rescaled-retain} is the exception: it keeps the previous discrepancy $\delta$, while the current discrepancy $\Delta$, which is needed to reverse the correction, remains only in a temporary register. Each retain update therefore stores $\Delta$ on the history stack. In the worst case, 35 stack entries (665 qubits) would be needed. Under the DQI input distribution, a second stack entry is required with probability $1.0\times10^{-4}$, so the discovered circuit allocates only one stack entry (19 qubits instead of 665) and permits failure on these rare inputs.

Together, these transformations make the arithmetically efficient BM algorithm suitable for reversible~\mbox{computation}.

\section*{Reversible modular inversion}
\label{sec:modinv-principles}

Modular inversion is the task of finding $x$ such that $wx\equiv1\pmod p$,
given a nonzero input $w$ and a prime $p$. It is a central arithmetic subroutine
in a range of quantum algorithms~\cite{vandam2002gausssums,roetteler2017resource}.
A standard way to compute the inverse is through the extended Euclidean algorithm.
Euclidean algorithms repeatedly simplify a pair of integers while preserving
their greatest common divisor. By tracking how this divisor is expressed in
terms of the original inputs, they also recover coefficients $\alpha,\beta$ satisfying
$\alpha p+\beta w=1$, so $\beta$ is the inverse of $w$ modulo $p$.

Divsteps is a variant of the Euclidean algorithm that replaces division with
conditional additions, subtractions, and halving~\cite{bernstein2019divsteps}.
Starting from $f=p$ and $g=w$, the goal is to reduce the pair to $f=\pm1$ and
$g=0$ while preserving their greatest common divisor. The basic strategy is to
hold the odd integer $f$ fixed while reducing $g$: if $g$ is even, it can be
halved directly; if $g$ is odd, adding or subtracting $f$ first makes it even.
These operations preserve the greatest common divisor because $f$ is odd. Holding $f$
fixed indefinitely, however, would leave only one integer being reduced, so
the algorithm must periodically swap their roles. A small signed counter
$z$ schedules these swaps without ever comparing $f$ and $g$: it roughly
tracks how many more bits $f$ has than $g$, so a swap occurs
once $g$ is likely the smaller of the two. This allows
both integers to be reduced using only inexpensive arithmetic and control. Two auxiliary integers $f_v$ and $g_v$ track the input's contribution in a scaled
form: after $t$ steps, they satisfy $f_vw\equiv2^t f\pmod p$ and
$g_vw\equiv2^t g\pmod p$. Thus, at termination, $ff_v$ gives the desired inverse
up to the scaling factor $2^t$. The scaling keeps the coefficient updates
free of division. Starting from
$(z,f,g,f_v,g_v)=(0,p,w,0,1)$, each step applies one of three updates:
\begin{widetext}
\begin{subequations}\label{eq:divstep-recurrence}
\begin{empheq}[left={(z',f',g',f_v',g_v')=\empheqlbrace}]{alignat=2}
  &(-z,\ g,\ (g-f)/2,\ 2g_v,\ g_v-f_v)
    &&\quad\text{for } z\ge0,\ g\text{ odd},
    \label{eq:divstep-swap}\\
  &(z+1,\ f,\ (g+f)/2,\ 2f_v,\ g_v+f_v)
    &&\quad\text{for } z<0,\ g\text{ odd},
    \label{eq:divstep-odd-nonswap}\\
  &(z+1,\ f,\ g/2,\ 2f_v,\ g_v)
    &&\quad\text{for } g\text{ even}.
    \label{eq:divstep-even}
\end{empheq}
\end{subequations}
\end{widetext}

\looseness=-1
Classically, each update overwrites the current state.
Reversing the computation, however, requires knowing which of the three updates was applied.
A na\"ive
implementation would store this choice at every step, accumulating a history
stack alongside the arithmetic registers. The discovered circuit instead
recovers decisions from the current arithmetic state. Some decisions are easy
to recover from the updated registers, while others require a more involved
decoding procedure. We label each update by two bits: $s$ records whether the step swaps the roles
of $f$ and $g$, and $b$ records whether the incoming $g$ is odd. The updates in
\cref{eq:divstep-swap,eq:divstep-odd-nonswap,eq:divstep-even} have labels $(s,b)=(1,1),(0,1),(0,0)$, respectively. The task
during uncomputation is to recover these bits from the updated state.

\begin{widefigure}[t]
\centering
\includegraphics[width=\linewidth]{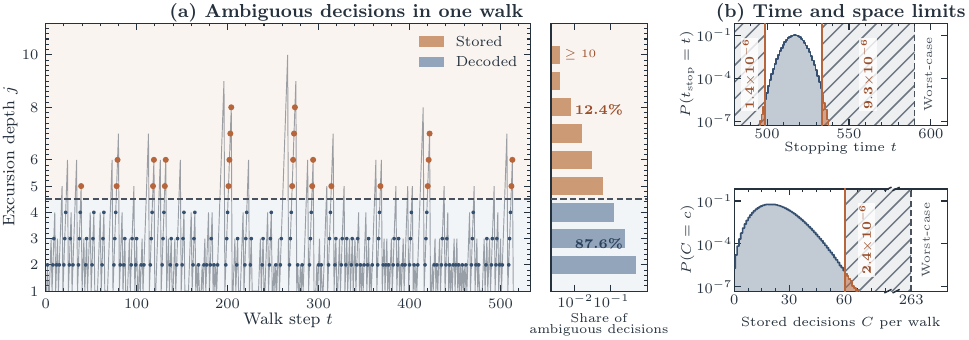}
\caption{Making divsteps reversible with minimal overhead.
(a) Excursion depth $j$ in one representative walk. Dots
mark ambiguous decisions. We use a cutoff $j_{\mathrm{store}}=5$, so all ambiguous decisions with $j \ge 5$ are stored rather than decoded. The histogram shows the distribution of excursion depths over uniformly sampled walks: only 12.4\% of ambiguous decisions reach $j \ge 5$. (b) Empirical stopping-time and
history-size distributions. The discovered circuit assumes that every walk terminates between steps $499$ and $533$; this assumption fails with probability $\sim 10^{-5}$. It also allocates enough qubits to store at most 60 branch decisions, which is sufficient for all but a $2.4\times10^{-6}$ fraction of walks.}
\label{fig:divstep}
\end{widefigure}

For many steps, recovering $b$ and $s$ requires only a few observations about
the updated registers. A swap (\cref{eq:divstep-swap}) always leaves a distinctive pattern in $f_v'$.
To see this, note that throughout the recurrence, $f_v$ is even and $g_v$ is
odd. A swap sets $f_v'=2g_v$, which has exactly one trailing zero bit, whereas
a nonswap sets $f_v'=2f_v$, which is divisible by four. Thus the updated
coefficient identifies every swap, recovering both $s=1$ and $b=1$; otherwise,
$s=0$, so the swap bit $s$ is \emph{always} cheaply recoverable. For nonswaps, two further tests determine $b$. If $z'\ge1$, the incoming $g$
must have been even, since an odd value would have triggered a swap. If the
step first reaches $g'=0$, the incoming $g$ must instead have been odd, since
halving a nonzero even integer cannot give zero. Together, these simple tests
recover about two thirds of the decisions in a typical walk (the sequence of divsteps applied to one input). The only
unresolved cases are nonterminating nonswaps with $z'\le0$, where the circuit
must determine whether the step added $f_v$ to $g_v$, that is, the bit $b$ in the update $(f_v',g_v')=(2f_v,g_v + b f_v)$.

Consider a swap followed by a sequence of these ambiguous nonswaps, which we
call an \emph{excursion}. Immediately after the swap, $f_v=2q$ for some odd $q$.
On the $j$th step of the excursion, $f_v=2^{j-1}q$ (hence $f_v'=2^j q$),
where the \emph{excursion depth} $j$ counts the initiating swap as step 1.
Equivalently, $j$ is the number of trailing zero bits in the updated
coefficient $f_v'$. The update to $g_v$ becomes
\begin{equation}
  g_v'=g_v+b2^{j-1}q.
\end{equation}
Note that this almost matches the binary-record construction in
\cref{eq:binary-history-example}. The only difference is the shared factor $q$: to recover $b$, we need to remove the factor $q$ from the update. Since $f_v'=2^j q$ is known, we can recover $q$ by removing the trailing zero bits from $f_v'$. To remove the same factor
from both sides of the update, we multiply by $q^{-1}$ modulo $2^j$; this inverse exists because $q$ is odd. The update rule then becomes
\begin{equation}
  g_v'q^{-1}\bmod2^j= (g_v q^{-1}\bmod2^j) +b2^{j-1}.
  \label{eq:divstep-history-record}
\end{equation}
Crucially, in \cref{app:divstep-recovery}, we show that $(g_v q^{-1}\bmod2^j)<2^{j-1}$. Therefore,
\begin{equation}
  b=\operatorname{bit}_{j-1}\!\left(g_v'q^{-1}\bmod2^j\right),
  \label{eq:divstep-decoding-rule}
\end{equation}
where bit positions are counted from zero at the least significant (rightmost) bit.
Every quantity on the right-hand side is determined by the updated registers, so the rule
recovers the branch decision without access to the previous state. This recovery
relies on a global invariant ensuring that earlier branch history cannot
contaminate the deciding bit. The LLM identified this invariant, which links
information accumulated across successive updates to the branch
decisions needed during uncomputation, and developed it into a
proof that the arithmetic state retains the information required for
recovery~\cite{gu2026lean}.

For a given $j$, the rule in \cref{eq:divstep-decoding-rule} depends only on
the lowest $j$ bits of $g_v'$ and
the lowest $j$ bits of $q$. Therefore, for each small value of $j$, we treat the
decoding rule as a truth table mapping $2j$ input bits to a single bit $b$,
and compile it directly into a Boolean circuit. For larger $j$, the decoder must inspect more bits
and becomes increasingly expensive. Although every decision is recoverable in
principle, decoding all of them does not give the cheapest circuit. A tunable cutoff $j_{\mathrm{store}}$ determines which ambiguous decisions
are decoded and which are stored. At depths $j<j_{\mathrm{store}}$, the circuit
uses a decoder; at depths $j\ge j_{\mathrm{store}}$, it writes the decision to
a history stack. Increasing the cutoff spends more Toffolis on decoding
while reducing the required history storage, giving a family of circuits that trade Toffoli count against qubit count
(SI~\ref*{sec:case-modinv}).

Further savings are possible when the application permits a small failure
probability. We use this failure budget to exclude rare, unusually demanding
inputs through two resource limits: a fixed history-stack size and a fixed
window of supported stopping times. The history-stack limit builds on the decoding trade-off. With
$j_{\mathrm{store}}=5$, only $12.4\%$ of ambiguous decisions are written to the stack
(\cref{fig:divstep}(a)). On average, the required stack size is 21 history qubits, but it can reach 263 in the worst
case. We instead reserve 60 and allow the circuit to fail on rare
inputs for which the stack overflows. Such overflows occur with probability
$2.4\times10^{-6}$ per inversion (\cref{fig:divstep}(b)), saving 203 history qubits relative
to the worst case. The stopping-time limit addresses a separate cost. A fixed quantum circuit
must schedule enough steps for every input it supports. Although 590 divsteps
suffice even in the worst case, stopping times are concentrated in a much
narrower range (\cref{fig:divstep}(b)). The discovered circuit therefore
schedules 533 steps and tests for termination only from step 499 onward;
inputs that terminate outside this window are allowed to fail.

Branch decoding and bounded resource allocation thus make divsteps
practical for reversible modular inversion. A complete benchmark circuit built on this inversion
(SI~\ref*{sec:case-modinv}) uses 4.0 million Toffolis and 935 qubits.

\section*{Discussion and outlook}\label{sec:discussion}

The DQI and modular inversion constructions illustrate how redesigning the
representation and uncomputation of an algorithm can substantially reduce its
quantum resource requirements. Similar gains arise across the other tasks studied here. For seven of the
eight tasks, the discovered
circuit uses both fewer Toffolis and fewer qubits than the corresponding baseline.
\Cref{tab:main-results} summarizes these improvements, and the \hyperref[sec:si]{Supplementary Information} gives details for each task.

\begin{widetable}
\caption{Toffoli--qubit results across the eight compilation tasks, comparing baselines against our discovered circuits (see \cref{tab:results} for details); $G_{TQ}=(TQ)_{\mathrm{baseline}}/(TQ)_{\mathrm{ours}}$. Where a baseline contains free parameters, we choose them to minimize its $TQ$-cost. HHL: Harrow--Hassidim--Lloyd.}
\label{tab:main-results}
\centering
\begin{threeparttable}
\begin{tabular}{lcc>{\columncolor{deltabg}\color{deltafg}\bfseries}c}
\toprule
\textbf{Task} & \textbf{Baseline} $\boldsymbol{(T,Q)}$ & \textbf{Ours} $\boldsymbol{(T,Q)}$ &
\multicolumn{1}{>{\columncolor{deltabg}}c}{$\boldsymbol{G_{TQ}}$} \\
\midrule
Reed--Solomon DQI decoder &
$(5.1{\cdot}10^6,\,1{,}938)$ &
$(4.4{\cdot}10^5,\,1{,}837)$ & $12$ \\
Hermitian DQI decoder &
$(1.0{\cdot}10^8,\,1.9{\cdot}10^5)$\tnote{\dag} &
$(9.3{\cdot}10^5,\,2{,}395)$ &
\multicolumn{1}{>{\columncolor{deltabg}\color{black}}c}{---\tnote{\dag}} \\
HHL rotation angle &
$(1.6{\cdot}10^4,\,300)$ &
$(1{,}348,\,115)$ & $32$ \\
Coulomb potential &
$(6{,}355,\,252)$ &
$(1{,}140,\,142)$ & $9.9$ \\
Fixed-point Gaussian &
$(2.4{\cdot}10^4,\,362)$ &
$(1{,}583,\,139)$ & $39$ \\
Fe$_2$S$_2$ alias--\textsc{Select} &
$(5.5{\cdot}10^4,\,139)$ &
$(1.5{\cdot}10^4,\,126)$ & $4.1$ \\
Nonabelian lattice-gauge simulation &
$(1.3{\cdot}10^6,\,1{,}927)$ &
$(1.7{\cdot}10^5,\,1{,}859)$ & $7.7$ \\
Prime-field modular inversion &
---\tnote{\ddag} &
$(4.0{\cdot}10^6,\,935)$ &
\multicolumn{1}{>{\columncolor{deltabg}\color{black}}c}{---} \\
\bottomrule
\end{tabular}
\begin{tablenotes}[flushleft]
\item[\dag] Na\"ive reversible implementation; no prior reversible compilation exists in the literature.
\item[\ddag] Costs are for the complete benchmark circuit; see SI~\ref*{sec:case-modinv} for the benchmark and prior constructions.
\end{tablenotes}
\end{threeparttable}
\end{widetable}

A recurring source of these gains is the recovery of classical algorithmic
efficiency under the constraints of quantum computation. In a na\"ive reversible
implementation, coherent control and cleanup can introduce enough overhead to
obscure the advantages of an efficient classical method. Our Berlekamp--Massey
and divstep constructions show how jointly choosing the algorithm,
representation, and uncomputation can recover those advantages. These results suggest that some
apparent costs of quantum arithmetic reflect limitations of existing reversible
constructions rather than of the algorithms themselves.

\looseness=-1
Finding such constructions requires reasoning about choices whose benefits
depend on one another. An arithmetic saving may become useful only after the
associated control and uncomputation costs have also been reduced. A candidate
design must therefore be carried through several
unresolved mathematical and
implementation problems before its resource advantage becomes apparent.

The framework is bounded by the choices that define each
task. One is the objective that ranks circuits: we use the $TQ$-cost
for a uniform comparison, but other objectives are equally reasonable. In practice, the relative importance of qubit count, depth, routing, and measurements
depends on details such as hardware constraints, decoder performance, and error-correction architecture. The scoring function can be
replaced with a hardware-appropriate combination of these resources, but
determining that objective remains a matter of application and hardware judgment. The circuit interface defines a second boundary. Here, the required
transformation is specified in a fixed computational basis, excluding
optimizations whose advantage comes from changing the encoding itself. For
example, coset representations can make modular addition cheaper by encoding
each residue as a superposition of congruent integers, which avoids conventional
comparisons and corrections at the cost of an exponentially small
error~\cite{zalka2006fewer,gidney2019approximate}, but such encodings lie outside
the present framework. More generally, the task boundary, input distribution, data representations, and accuracy target make a
search well posed while also encoding assumptions about how the circuit is
embedded in a larger algorithm (in particular, which states it receives and how its
outputs are consumed). %

\looseness=-1
The same choices that bound the search also make reliable programmatic feedback
possible: they specify what each candidate must preserve and which resource
trade-offs the evaluator should reward. Within these boundaries, our results
show that evaluator-guided LLM search can go beyond refining a fixed circuit
to uncover new reversible algorithms and representations. Related
approaches have produced new algorithms in combinatorics and linear
algebra~\cite{romera2024funsearch,novikov2025alphaevolve}. These observations suggest a
practical criterion for extending the approach: wherever a problem admits a
sufficiently expressive task interface and its correctness and resource costs can
be evaluated cheaply enough to sustain many rounds of search, the same evaluator-guided search
may provide a route to new algorithms.

These results, together with 
related progress in reinforcement learning and LLM-guided discovery of quantum
error-correcting codes and encoding
circuits~\cite{su2025quantum-codes,olle2024noise-aware,yan2026omniqec}, create an opportunity for end-to-end
co-design of fault-tolerant computation. Choices that are usually made
separately---the algorithm and its logical circuit, the error-correcting code
and its method for universal logic, the supply of magic states, and the physical
execution architecture---strongly determine one another's costs. A programmatic evaluator that propagates logical errors, resource
demand, and hardware requirements across these layers could allow an LLM search to
redesign them together, potentially discovering fault-tolerant algorithms and architectures
that cannot be found by optimizing each layer in isolation.

\section*{Acknowledgments}
We thank Pablo Bonilla, Arjun Kar, Nazli Ugur \mbox{Koyluoglu}, Rohan Mehta, Varun Menon, and Vinay Ramasesh for
feedback on the manuscript. AG conducted part of this work under the Anthropic STEM
Fellows Program.
Work at Harvard was supported by IARPA and the Army Research Office under the Entangled Logical Qubits program (Cooperative Agreement Number \mbox{W911NF-23-2-0219})
and by the National Science Foundation (NVQL grant PHY-2410716).

\clearpage
\onecolumngrid

\section*{Methods}\label{sec:platform}

A classical function cannot generally be applied directly within a quantum
algorithm, because quantum evolution must be reversible. Instead, a function
$f$ is embedded in a reversible map $\ket{x}\ket{y}\longmapsto\ket{x}\ket{y\oplus f(x)}$.
A natural model for these subroutines is the \kickmix{} circuit model of
\citet{babbush2026kickmix}, which augments reversible classical computation
with measurement-based uncomputation (MBU) and diagonal sign gates.
Despite these additional operations, \kickmix{} circuits remain efficiently
classically simulable. This allows candidate circuits to be tested for their
action on coherent quantum superpositions, providing the reliable correctness
signal needed for evolutionary search.

Circuits that implement the same specification can have very different time
and space requirements. We use Toffoli count and peak qubit count as
proxies for these costs. Toffoli gates require expensive
non-Clifford resources, so their count tracks a
large part of the logical work and often the
runtime~\cite{fowler2012time,gidney2019autoccz}. The peak qubit count measures the
data and ancillas that must coexist during execution and therefore
approximates the required logical processor
footprint~\cite{beverland2022assessing}. The two resources also typically trade against
each other: additional ancillas can avoid recomputation, while recycling
ancillas can reduce space at the cost of additional gates. We multiply these
quantities to obtain the $TQ$-cost, which provides a single objective for
comparing such designs. 

Together, the correctness tests and $TQ$-cost define a constrained
optimization problem over circuit-generating programs. Motivated by recent
advances in coding and reasoning agents and LLM-guided algorithmic discovery,
we use an LLM to propose coordinated, task-specific redesigns within an
evolutionary search~\cite{anthropic2026opus5,anthropic2026mythos5,novikov2025alphaevolve}.

\subsection*{Circuit model and validity}
\label{sec:mbu}

A compilation task specifies a reversible action on computational-basis
states: a permutation $\pi$ and, when needed, a Boolean phase $g$. A circuit
may use $a$ ancilla qubits, but must return them to zero:
\begin{equation}
  V\ket{x}\ket{0^a}
  =(-1)^{g(x)}\ket{\pi(x)}\ket{0^a}.
  \label{eq:kickmix-action}
\end{equation}
Every ancilla must return to $\ket{0}$ because
quantum operations must be unitary, and thus reversible. For a
numerical approximation, the required output relation may instead specify an
error bound; interface preservation, phase correctness, and ancilla uncomputation
remain exact. We separately allow some circuits to fail on a bounded fraction
of inputs drawn from a stated distribution; this fraction is the \emph{failure budget}. Following \citet{babbush2026kickmix}, the \kickmix{} model permits three
classes of operation:
\begin{enumerate}
  \item classical reversible gates: $X$, CNOT, Toffoli, and swap;
  \item diagonal sign gates: $Z$, CZ, CCZ, and global sign flips; and
  \item destructive $X$-basis measurements.
\end{enumerate}
Before any measurement, the evolution of a computational-basis input is
simple. Reversible gates update its stored bit string, while
diagonal gates update only its sign, so the state remains of the form
$(-1)^\phi\ket{x}$. A classical simulator therefore need only track the
current bits $x$ and the sign $\phi$, rather than a full state vector. The operations apply phase
\textbf{kick}back while \textbf{mix}ing basis states, giving \kickmix{} its
name.

To illustrate why uncomputation is costly, consider a temporary AND. A
Toffoli computes $t=uv$ into an ancilla. After $t$ has been used, applying the
same Toffoli again returns that qubit to zero:
\begin{equation}
  \ket{u,v}\ket{0}
  \xrightarrow{\mathbf{Toffoli}}
  \ket{u,v}\ket{uv}
  \xrightarrow{\text{use }uv}
  \ket{u,v}\ket{uv}
  \xrightarrow{\mathbf{Toffoli}}
  \ket{u,v}\ket{0}.
\end{equation}
Computing and uncomputing this temporary AND therefore costs two Toffolis.

MBU, introduced by
\citet{gidney2018halving}, replaces the second Toffoli with a destructive
$X$-basis measurement and a cheap phase correction. Measuring $t$ produces a
uniformly random outcome $r$ and releases the ancilla, which can be restored
to $\ket{0}$:
\begin{equation}
  \ket{u,v}\ket{0}
  \xrightarrow{\mathbf{Toffoli}}
  \ket{u,v}\ket{uv}
  \xrightarrow{\text{use }uv}
  \ket{u,v}\ket{uv}
  \xrightarrow{\ X\text{-measure }r\ }
  (-1)^{ruv}\ket{u,v}\ket{0}
  \xrightarrow{\mathrm{CZ}\ \text{if }r=1}
  \ket{u,v}\ket{0}.
\end{equation}
Because CZ is Clifford, the temporary AND is computed and cleared using only
one Toffoli. By linearity, the same reasoning applies component by component
when the retained bits are in superposition.

More generally, suppose an ancilla contains a Boolean function $f(y)$ of
retained basis data $y$. If $m$ records any earlier measurement outcomes, the
simulator represents the ancilla's destructive $X$-basis measurement by the transition
\begin{equation}
  ((y,f(y)),m,\phi)
  \longmapsto
  (y,mr,\phi\oplus r f(y)),
  \qquad \Pr(r=0)=\Pr(r=1)=\frac12.
  \label{eq:kickmix-measurement-transition}
\end{equation}
On a coherent input, the retained state is therefore
\begin{equation}
  \sum_y\alpha_y(-1)^{r f(y)}\ket{y}.
  \label{eq:kickmix-mbu}
\end{equation}
When $r=1$, a classically conditioned diagonal circuit for
$(-1)^{f(y)}$ cancels this sign.

For a fixed computational-basis input and a particular sequence of
measurement outcomes $m$, we call the resulting evolution a
\emph{computational-basis trajectory}.
A \kickmix{} circuit may perform several such measurements. Their outcomes
may be combined by classical XOR and may control diagonal corrections, but
cannot control gates that change the stored basis string. Every measurement
record therefore induces the same basis permutation:
\begin{equation}
  V_m\ket{x}=(-1)^{\phi(x,m)}\ket{\pi(x)}.
  \label{eq:kickmix-record}
\end{equation}
Exact coherent correctness additionally requires
\begin{equation}
  \phi(x,m)=g(x)
  \label{eq:kickmix-phase-correctness}
\end{equation}
for every allowed input and every record of nonzero probability. The
measurement register then factors from the logical output. Allowing an
outcome to control a basis-changing gate would instead make the permutation
depend on the record and could leave the logical output correlated with the
measurement environment.

\subsection*{Validation and statistical guarantees}
\label{sec:verification}

The circuits are ultimately applied to coherent superpositions, whereas the
evaluator follows one computational-basis trajectory at a time. Each test
must therefore capture enough information to determine the trajectory's
coherent action. For a basis input $x$ and measurement record $m$, the
evaluator checks that the output matches the target, required inputs are
unchanged, every ancilla has returned to $\ket{0}$, and the phase agrees with
the target exactly. These checks compare the complete final state rather than
only its classical output. An ancilla that retains input-dependent information
can become entangled with the logical registers, while an incorrect sign can
alter subsequent interference without changing any output bit. Resetting or tracing out a dirty
ancilla is therefore not a substitute for uncomputing it.

If every pair $(x,m)$ passes these tests, the measurement record factors from
the logical output, and linearity extends agreement on every trajectory to any
coherent input. For large input domains, however, the evaluator samples these
trajectories rather than checking them all.

Consider a coherent input $\ket{\psi}=\sum_x\alpha_x\ket{x}$, with $\mathcal D(x)=|\alpha_x|^2$, and let $\mathcal M$ be the set of possible measurement records. After
deferring the measurements, the candidate and target states can be written as
\begin{equation}
  \ket{\Phi_V}
  =\frac{1}{\sqrt{|\mathcal M|}}\sum_{x,m}
    \alpha_x V_m\ket{x}\ket{m},
  \qquad
  \ket{\Phi_U}
  =\frac{1}{\sqrt{|\mathcal M|}}\sum_{x,m}
    \alpha_x U\ket{x}\ket{m}.
\end{equation}
Let $B$ be the set of pairs $(x,m)$ on which $V_m\ket{x}\ne U\ket{x}$,
and let
\begin{equation}
  \varepsilon=\frac{1}{|\mathcal M|}
  \sum_{(x,m)\in B}|\alpha_x|^2
\end{equation}
be the probability of $B$ under $\mathcal D$ and the uniform distribution over measurement records.
Projecting onto the span of these defective trajectories gives
\begin{equation}
  \bigl\|\ket{\Phi_V}-\ket{\Phi_U}\bigr\|
  \le 2\sqrt{\varepsilon}.
  \label{eq:transfer}
\end{equation}
Consequently, the state fidelity is at least $(1-2\varepsilon)^2$. Any subsequent
measurement distribution changes by at most $2\sqrt{\varepsilon}$ in total variation
distance. We bound $\varepsilon$ by drawing $x$ from $\mathcal D$ and sampling every
measurement outcome uniformly, exactly as in a basis trajectory. For a
candidate fixed before the test draw, $k$ independent tests all pass with
probability $(1-\varepsilon)^k$. Thus, if all $k$ tests pass, then with confidence $1-\delta$
\begin{equation}
  \varepsilon\le 1-\delta^{1/k}
  \approx\frac{\ln(1/\delta)}{k}.
\end{equation}
More generally, if a test with $k$ random samples accepts at most $b$ failures, a candidate with true failure probability $\varepsilon$ passes with probability
\begin{equation}
  P_{k,b}(\varepsilon)
  =\sum_{j=0}^{b}\binom kj
    \varepsilon^j(1-\varepsilon)^{k-j}.
\end{equation}

\subsection*{Cost model and scoring}\label{sec:costmodel}

We define the $TQ$-cost of a circuit by
\begin{equation}
  TQ = \text{Toffoli count} \times \text{peak qubit count},
\end{equation}
where $T$ counts Toffoli and CCZ gates equally and $Q$ is the largest number of logical
qubits allocated at once.

We weight space and time equally in this cost function for simplicity. In
practice, their relative importance depends on the hardware and application.
When qubits are scarce, for example, a cost of the form
$(\text{qubit count})^{\kappa}\times\text{Toffoli count}$ with
$\kappa>1$ may be more appropriate; resource estimates for factoring
have used $\kappa$ between 1 and
3~\cite{gidney2021factor,gidney2025factor}. Either $T$ or $Q$ can also be replaced with other measures of time or space.
In particular, when there is limited magic-state-factory throughput (as is typically the case), runtime is bottlenecked by the Toffoli count, motivating its widespread use as a time proxy~\cite{fowler2012time,gidney2019autoccz}. However, with
enough factories to supply magic states in parallel, Toffoli depth rather than
Toffoli count limits runtime. Similarly, for architectures with cheap cold storage for inactive qubits~\cite{gidney2023yoked}, $Q$ can be replaced by active
volume~\cite{litinski2022active}, which counts only the qubits
participating in logical operations.

\looseness=-1
The search is not tied to the specific form of the $TQ$-cost: it can run with any
computable objective. A later case study
illustrates the available trade-offs: the modular-inversion search
(SI~\ref*{sec:case-modinv}) produces a family of circuits along a
Toffoli--qubit frontier (SI~\ref*{sec:modinv-divsteps}), and an
objective that weights qubit count more heavily selects a lower-qubit member
of that family.

\subsection*{Evolutionary search}
\label{sec:search}

We encode each candidate as Python code that constructs a circuit. Functions,
loops, and named registers preserve the organization of the computation,
allowing the LLM to reason about arithmetic subroutines and their composition
without reconstructing that structure from a flattened gate sequence. For
example, changing a shared arithmetic routine changes every place where it is
used, while reorganizing calls to that routine can change when intermediate
values are computed, retained, and uncomputed. The representation thus makes
coordinated changes expressible as revisions to the underlying algorithm.

We search over these circuit-generating programs with
OpenEvolve~\cite{openevolve}, using Claude
Mythos 5~\cite{anthropic2026mythos5} to propose architectures and program
rewrites. Each of the eight searches uses a fixed, hand-written task specification, and all
share the same generic circuit-building interface. During ideation, the LLM receives
only the specification, circuit model, and correctness criteria, and proposes several
distinct circuit architectures. The LLM has no access to the
evaluator beyond the feedback returned
after each submission. Starting from a blank program, it chooses the algorithm,
representation, and circuit structure and produces the discovered circuits
without substantive human intervention.

Each architecture proposed during ideation founds a separate family of
candidate programs. Later iterations improve descendants of the first working
implementation. Keeping the families separate at first gives each approach
time to mature, even if its earliest circuits are less efficient than those
based on another approach. Periodically, leading programs are shared across
families, allowing successful ideas from different approaches to combine.

The search retains candidate programs and their evaluation results, tracking
which family each belongs to and which earlier program it was derived from.
It also saves feedback on failures and short notes in which the LLM records
useful insights. These programs, results, and notes carry information between
iterations; the sessions do not share a continuously growing conversation.

Each iteration starts a new LLM agent session with the task specification, a
parent program to improve, and its evaluation feedback. The agent also
receives alternative programs from the archive and their resource costs, brief
summaries of earlier attempts, and recorded insights where available. This
context is limited; the agent never sees the complete search history. Within
the session, it may rewrite the program, run drafts on test inputs, query a
classical oracle for the required function on chosen inputs, and use the
results to revise its design. It may also delegate focused subtasks to subagents.
The iteration ends with submission of a revised
program. The evaluator rejects a circuit that does not
meet the task specification; a valid one is scored by its $TQ$-cost and
considered for retention in its family. Thus the LLM is free to propose
broad changes without having to judge their correctness: the evaluator
supplies an independent test of every completed design.
\Cref{fig:prompts} shows the ideation and rewrite calls and a concrete
feedback-and-repair cycle for a small finite-field inversion task.

\enlargethispage{2\baselineskip}
To decide which valid candidates remain available within each family, the
search retains circuits with different trade-offs between Toffoli count and
qubit count. We implement this selection with
MAP-Elites~\cite{mouret2015illuminating}, which divides the $(T, Q)$ resource plane
into cells and keeps the circuit with the lowest $TQ$-cost in each cell. A
space-efficient circuit can therefore remain available for later development
even when its higher Toffoli count gives it a worse $TQ$-cost.

Because the families preserve alternatives, progress need not be
monotonic in either resource. The Gaussian lineage in
\cref{fig:gaussian-lineage}, for example, contains descendants that
temporarily worsen Toffoli count or qubit count before reaching a region of lower
$TQ$-cost. We report the lowest $TQ$-cost reached across all families at the end
of the search.
\clearpage

\ifrevtex\bibliography{refs}\else\printbibliography\fi

\clearpage

\section*{Extended Data}
\begingroup
\setcounter{figure}{0}
\captionsetup[figure]{name=Extended Data Figure}
\renewcommand{\thefigure}{\arabic{figure}}
\renewcommand{\theHfigure}{ED.\arabic{figure}}

\begin{figure}[!htbp]
\centering
\input{figs/fig-search-loop}
\caption{Sample inputs and outputs for inversion in
$\mathbb F_{2^4}$. (a) During ideation, the LLM is given the complete task
specification but no prior implementation and returns distinct candidate
architectures. (b) During evolution, the input additionally contains a
selected parent program, its evaluator feedback, and limited archive context;
the LLM returns a proposed rewrite. (c) One such rewrite computes the inverse with five nonlinear
products but fails to return an ancilla to $\ket{0}$, so the evaluator rejects
it. Seeing this feedback, the LLM adds MBU; the
resulting circuit is valid with $T=5$, $Q=10$, and a $TQ$-cost of 50.}
\label[extendeddatafigure]{fig:prompts}
\end{figure}

\begin{figure}[!htbp]
\centering
\includegraphics[width=0.5\linewidth]{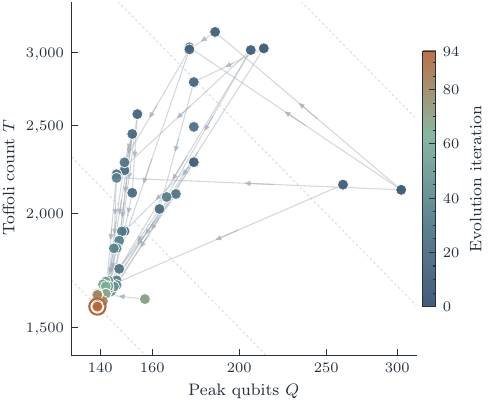}
\caption{Search lineage for the Gaussian circuit in the resource plane.
Each marker is a distinct valid circuit, colored by the iteration in which it was
discovered; arrows point from each parent to its mutated child. Dashed
curves are contours of constant $TQ$-cost, which decreases toward
the lower left. The ring marks the discovered circuit at $T=1{,}583$ and
$Q=139$.}
\label[extendeddatafigure]{fig:gaussian-lineage}
\end{figure}
\endgroup

\clearpage
\ifrevtex
\makeatletter
\edef\si@portrait{\pdfpagewidth=\the\pdfpagewidth \pdfpageheight=\the\pdfpageheight
  \textwidth=\the\textwidth \textheight=\the\textheight
  \oddsidemargin=\the\oddsidemargin \evensidemargin=\the\evensidemargin
  \topmargin=\the\topmargin \headheight=\the\headheight
  \headsep=\the\headsep \footskip=\the\footskip\relax}
\newgeometry{layoutsize={11in,8.5in},margin=1in}
\edef\si@landscape{\pdfpagewidth=11in \pdfpageheight=8.5in
  \textwidth=\the\textwidth \textheight=\the\textheight
  \oddsidemargin=\the\oddsidemargin \evensidemargin=\the\evensidemargin
  \topmargin=\the\topmargin \headheight=\the\headheight
  \headsep=\the\headsep \footskip=\the\footskip\relax}
\appdef\@outputpage@head{%
  \ifnum\value{page}=\getpagerefnumber{ed-landscape}\relax\si@landscape\else\si@portrait\fi}
\makeatother
\else
\pdfpagewidth=11in \pdfpageheight=8.5in
\newgeometry{layoutsize={11in,8.5in},margin=1in}
\fi
\label{ed-landscape}
\vspace*{3em}
\begingroup %
\setlength{\tabcolsep}{2.1pt}
\renewcommand{\arraystretch}{1.0}
\captionsetup{hypcap=false}
\captionsetup[table]{name=Extended Data Table}
\setcounter{table}{0}
\renewcommand{\thetable}{\arabic{table}}
\renewcommand{\theHtable}{ED.\arabic{table}}
\captionof{table}{Main results
across eight compilation tasks. Input/output gives the number of qubits in
read-only input registers and target registers, respectively; a target may
carry an initial value and be updated in place, and ``---'' denotes a task
with no target register. The correctness columns give the
error tolerance and the number of tests. The resource columns report
Toffoli count $T$ and peak qubit count $Q$ for the baseline and discovered
circuits; $G_{TQ}=(TQ)_{\mathrm{baseline}}/(TQ)_{\mathrm{ours}}$. HHL: Harrow--Hassidim--Lloyd.}
\label[extendeddatatable]{tab:results}
\begin{center}
\begin{threeparttable}
\begin{tabular}{lcclccc>{\columncolor{deltabg}\color{deltafg}\bfseries}c}
\toprule
& & & \multicolumn{2}{c}{\textbf{Correctness criteria}} &
\multicolumn{3}{c}{\textbf{Resources}} \\
\cmidrule(lr){4-5}\cmidrule(lr){6-8}
\textbf{Task} & \textbf{Details} & \textbf{Input/output} &
\multicolumn{1}{c}{\textbf{Error tolerance}} &
\textbf{Tests} &
\textbf{Baseline} $\boldsymbol{(T,Q)}$ &
\textbf{Ours} $\boldsymbol{(T,Q)}$ &
\multicolumn{1}{>{\columncolor{deltabg}}c}{$\boldsymbol{G_{TQ}}$} \\
\midrule
Prime-field modular inversion &
SI~\ref*{sec:case-modinv} & $512/512$ &
Failure budget $\le2{\cdot}10^{-4}$~\cite{babbush2026kickmix,schrottenloher2026pointaddition} & $10^5$ &
---\tnote{\dag} &
$(4.0{\cdot}10^6,\,935)$ &
\multicolumn{1}{>{\columncolor{deltabg}\color{black}}c}{---} \\
Reed--Solomon DQI decoder &
SI~\ref*{sec:case-dqi-rs} & $840/\text{---}$ &
Failure budget $\le10^{-3}$ & $10^5$ &
$(5.1{\cdot}10^6,\,1{,}938)$~\cite{khattar2025optimized-dqi} &
$(4.4{\cdot}10^5,\,1{,}837)$ & $12$ \\
Hermitian DQI decoder &
SI~\ref*{sec:case-dqi-hopi} & $738/\text{---}$ &
Failure budget $\le10^{-3}$ & $10^5$ &
$(1.0{\cdot}10^8,\,1.9{\cdot}10^5)$\tnote{\ddag}~\cite{sakata1995feng-rao} &
$(9.3{\cdot}10^5,\,2{,}395)$ &
\multicolumn{1}{>{\columncolor{deltabg}\color{black}}c}{---\tnote{\ddag}} \\
HHL rotation angle &
SI~\ref*{sec:case-hhl-angle} & $21/33$ &
$<10^{-7}$ rad & $9{\cdot}10^3$ &
$(1.6{\cdot}10^4,\,300)$~\cite{haner2018optimizing} &
$(1{,}348,\,115)$ & $32$ \\
Coulomb potential &
SI~\ref*{sec:case-coulomb} & $48/25$ &
$\le2{\cdot}10^{-6}$ absolute error & $9{\cdot}10^3$ &
$(6{,}355,\,252)$~\cite{haner2018optimizing} &
$(1{,}140,\,142)$ & $9.9$ \\
Fixed-point Gaussian &
SI~\ref*{sec:case-gaussian} & $39/39$ &
$<10^{-7}$ absolute error & $9{\cdot}10^3$ &
$(2.4{\cdot}10^4,\,362)$~\cite{haner2018optimizing} &
$(1{,}583,\,139)$ & $39$ \\
Fe$_2$S$_2$ alias--\textsc{Select} &
SI~\ref*{sec:case-alias-fe2s2} & $32/40$ &
Exact on the entire domain & $2{\cdot}10^7$ &
$(5.5{\cdot}10^4,\,139)$~\cite{babbush2018encoding} &
$(1.5{\cdot}10^4,\,126)$ & $4.1$ \\
Nonabelian lattice-gauge simulation &
SI~\ref*{sec:case-nonabelian} & $1{,}728/51$ &
Exact on the entire domain & $10^5$ &
$(1.3{\cdot}10^6,\,1{,}927)$~\cite{osorio2025sigma72} &
$(1.7{\cdot}10^5,\,1{,}859)$ & $7.7$ \\
\bottomrule
\end{tabular}
\begin{tablenotes}[flushleft]
\item[\dag] Costs are for the complete benchmark circuit of SI~\ref*{sec:case-modinv}, whose cost is dominated by modular inversion.
Several prior implementations trade Toffolis against qubits
differently (\cref{tab:modinv-frontier})~\cite{babbush2026kickmix,schrottenloher2026pointaddition,luo2026space}, so we report no single baseline or gain.
Separately from the divstep design reported in the table, we specialize Luo's space-efficient
construction, reducing its Toffoli count by a factor of $3.4$ at the cost of five additional qubits (SI~\ref*{sec:modinv-eea-specialized}).
\item[\ddag] Na\"ive reversible implementation; no prior reversible compilation exists in the literature.
\end{tablenotes}
\end{threeparttable}
\end{center}
\vspace*{\fill}
\endgroup
\restoregeometry
\ifrevtex\else\pdfpagewidth=8.5in \pdfpageheight=11in\fi
\clearpage

\setcounter{page}{1}
\ifrevtex\makeatletter\let\si@landscape\si@portrait\makeatother\fi
\renewcommand{\thepage}{\texorpdfstring{\arabic{page}}{S\arabic{page}}}
\section*{Supplementary Information}
\phantomsection\label{sec:si}
\renewcommand{\thesection}{\Roman{section}}
\setcounter{figure}{0}
\setcounter{table}{0}
\setcounter{algorithm}{0}
\renewcommand{\thefigure}{S\arabic{figure}}
\renewcommand{\thetable}{S\arabic{table}}
\renewcommand{\thealgorithm}{S\arabic{algorithm}}
\renewcommand{\theHfigure}{S.\arabic{figure}}
\renewcommand{\theHtable}{S.\arabic{table}}

\startcontents[si]
\printcontents[si]{}{1}{%
  \renewcommand{\maintocdepth}[1]{}\let\appendix\relax\setcounter{tocdepth}{2}}
\bigskip

\section{Syndrome decoding for DQI}\label{sec:case-dqi}

Consider an optimization problem over a finite field $\mathbb F_q$. An
assignment $x\in\mathbb F_q^n$ produces $m$ linear values $Bx$. Given a set
$F_i\subset\mathbb F_q$ of acceptable values for each constraint, let
$f(x)=|\{i:(Bx)_i\in F_i\}|$ count the constraints satisfied by $x$; the goal
is to maximize $f(x)$. This family, known as max-LINSAT, includes linear
constraint-satisfaction problems such as Max-Cut and max-XORSAT, as well as
nearest-codeword problems and polynomial fitting over finite fields. Decoded
quantum interferometry (DQI) is a quantum algorithm that aims to solve these
optimization problems by arranging for constructive interference on assignments satisfying many
constraints and destructive interference on assignments satisfying
fewer~\cite{jordan2024dqi}.

DQI prepares a superposition over sparse vectors
$e\in\mathbb F_q^m$, each supported on at most $\ell$ constraints. Here
$H=B^{\mathsf T}$ can be viewed as a parity-check matrix, making $e$ a
low-weight error and $He$ its syndrome. If the associated code has minimum
distance $d_{\min}\ge2\ell+1$, a decoder can recover $e$ uniquely from that syndrome.
DQI uses the decoder in the following state-preparation sequence:
\begin{equation}
\begin{aligned}
  \sum_{\operatorname{wt}(e)\le\ell}a_e\ket e\ket0
  &\xrightarrow{\ s=He\ }
  \sum_{\operatorname{wt}(e)\le\ell}a_e\ket e\ket{He}\\
  &\xrightarrow{\text{uncompute }e}
  \ket0\sum_{\operatorname{wt}(e)\le\ell}a_e\ket{He}
  \xrightarrow{\operatorname{QFT}^{-1}}
  \sum_x A_x\ket x.
\end{aligned}
\label{eq:dqi-state-preparation}
\end{equation}
The middle step must erase $e$ while retaining its syndrome $s=He$; otherwise the
different contributions remain distinguishable and cannot interfere. The
inverse quantum Fourier transform (QFT) then produces amplitudes $A_x$ that favor assignments
satisfying more constraints. The dominant cost in \cref{eq:dqi-state-preparation} is uncomputing $e$, which requires solving a classical decoding problem; DQI thus reduces optimization to
decoding.

The space-efficient construction of \citet{khattar2025optimized-dqi} replaces
the middle decoding-and-uncomputation step with a phase correction. It
generates one field symbol $e_i$ at a time, adds its contribution to $s$, and
then removes it with an $X$-basis measurement. This avoids retaining all $m$
symbols of $e$, but the measurement outcomes $c_i$ leave a known phase. We expand each symbol of $e \in \mathbb F_q^m$ in the fixed binary basis used by the circuit,
so that $c$ and $e$ are bit strings of equal length; the phase is then
$(-1)^{\langle c,e\rangle_2}$, where $\langle \cdot, \cdot \rangle_2$ denotes the binary inner product. We call $c$ the \emph{mask}. In this section, we use a fixed, randomly chosen mask. A reversible decoder must remove
this phase using $s$, after which the inverse
Fourier transform produces the desired interference over assignments
$x$. The case studies below isolate and optimize this decoder-dependent phase
correction. We require
\begin{equation}
  \ket{s}\longmapsto
  (-1)^{\langle c,e(s)\rangle_2}\ket{s}.
  \label{eq:dqi-phase-fixup}
\end{equation}

We study two instances of DQI. In optimal polynomial intersection (OPI), $x$
specifies a univariate polynomial and the constraints restrict its values
at prescribed points on a line; the associated code is a Reed--Solomon code.
Hermitian optimal polynomial intersection (HOPI) generalizes this picture from
a line to a Hermitian curve: $x$ specifies a function on the curve, the
constraints restrict its values at curve points, and the corresponding code is a Hermitian
code~\cite{gu2025hermitian-dqi}.
\Cref{tab:dqi-instances} lists the parameters of the two benchmark instances.

\begin{table}[htbp]
\caption{Parameters of the two DQI decoding instances. The syndrome-input
column gives the number of syndrome symbols and the bits per symbol; $\ell$ is
the maximum decoded error weight. The last column compares expected satisfied
fractions for ideal DQI and a single trial of Prange's information-set decoding~\cite{prange1962information}, which is the best known efficient classical algorithm for these instances.}
\label{tab:dqi-instances}
\centering
\begin{tabular}{lcrcrc}
\toprule
\textbf{Problem} & \textbf{Code field} & \textbf{Positions} &
\textbf{Syndrome input} & $\boldsymbol{\ell}$ & \textbf{DQI/Prange} \\
\midrule
OPI & $\mathbb F_{2^{12}}$ & $4{,}095$ & $70\times12$ & $35$ &
$0.5763/0.5009$ \\
HOPI & $\mathbb F_{64}$ & $512$ & $123\times6$ & $47$ &
$0.7145/0.5726$ \\
\bottomrule
\end{tabular}
\end{table}

We test implementations on syndromes drawn from the DQI input distribution (defined by $|a_e|^2$) and accept
implementations with at most 100 detected wrong signs in 100,000 tests. Let
$\varepsilon$ be the failure probability: the total probability, under this distribution, of the inputs
on which the circuit fails. A failure applies the wrong sign or resets an
ancilla that still holds input-dependent data. The discovered circuit
satisfies~\cite{gu2026lean}
\begin{equation}
\bigl|\langle f\rangle_{\mathrm{impl}}-\langle f\rangle\bigr|
\le 4\varepsilon\langle f\rangle.
  \label{eq:dqi-budget-bound}
\end{equation}

Under our testing framework, a wrong sign is detected with probability at
least $\tfrac12$: for any fixed input, whether the sign is wrong is given by a
constant plus a linear function (modulo 2) of the uniformly random
measurement record, and a reset of a
nonzero qubit adds a fresh uniformly random bit to the sign. Thus the
acceptance limit of 100
failures in 100,000 tests gives $\varepsilon\le2.434\times10^{-3}$ at $97.5\%$
confidence. At that value, \cref{eq:dqi-budget-bound} bounds the loss in
satisfied fraction by $5.6\times10^{-3}$ (Reed--Solomon) and
$7.0\times10^{-3}$ (Hermitian)---at most $7.4\%$ and $4.9\%$ of the two
instances' advantage over single-trial Prange.

\subsection{Reed--Solomon decoding for OPI}\label{sec:case-dqi-rs}

We use the Reed--Solomon instance and reversible decoder of
\citet{khattar2025optimized-dqi} as our baseline. Let $\gamma$ be a primitive
element of $\mathbb F_{2^{12}}$. An error word
$e\in\mathbb F_{2^{12}}^{4095}$ produces 70 syndromes
\begin{equation}
  S_i=\sum_{j=0}^{4094}e_j\gamma^{ij},
  \qquad i=1,\ldots,70.
  \label{eq:rs-syndromes}
\end{equation}
This syndrome map defines a $[4095,4025,71]_{2^{12}}$ Reed--Solomon code.
The 70 twelve-bit syndromes occupy 840 qubits. Assuming the error weight is at most 35, the decoding task is to recover the error from the syndromes.

Decoding proceeds in two stages: first find the positions of the errors, then
recover their values in $\mathbb F_{2^{12}}$. The first stage packages the
unknown set of error positions into a single polynomial. For each nonzero
error symbol $e_j$, include one factor $1-\gamma^jz$, giving
\begin{equation}
  \sigma(z)=\prod_{j:e_j\ne0}(1-\gamma^jz).
  \label{eq:rs-locator}
\end{equation}
The factor associated with position $j$ vanishes at $z=\gamma^{-j}$, so
the roots of $\sigma$ identify exactly the positions at which the error word
is nonzero. The decoder's first task is therefore to recover this
\emph{locator polynomial} from the syndromes. The input $\ket{s}$ supplies the
first 70 syndromes, which we collect into a polynomial
\begin{equation}
  S(z)=S_1+S_2z+\cdots+S_{70}z^{69}.
  \label{eq:rs-syndrome-polynomial}
\end{equation}
Summing the geometric series in \cref{eq:rs-syndromes} shows that the product
$\sigma(z)S(z)$ agrees, through degree 69, with a polynomial $\omega(z)$ of
degree less than $\deg\sigma$, called the \emph{error-evaluator polynomial}:
\begin{equation}
  \sigma(z)S(z)=\omega(z)\pmod {z^{70}}.
  \label{eq:rs-key-equation}
\end{equation}
The decoder recovers the locator $\sigma$ of least degree and the
corresponding evaluator $\omega$ from this \emph{key equation}. The baseline
does so with the extended Euclidean algorithm (EEA) applied to $S(z)$ and
$z^{70}$. It repeatedly divides the higher-degree polynomial by the
lower-degree one, replaces the former by the remainder, and stops when the
remainder has degree below 35. That remainder is $\omega$, and the
coefficient that accompanies it is $\sigma$, up to normalization. The promise
of at most 35 errors ensures $\deg\sigma\le35$ and determines this pair
uniquely.

The baseline next locates the errors by evaluating $\sigma$ at every possible
position. Associate position $j$ with the field point
$\beta_j=\gamma^{-j}$ (since $\gamma$ is primitive, these points enumerate
all 4,095 nonzero field elements). The scan visits them one by one and tests whether
$\sigma(\beta_j)=0$. It need not evaluate $\sigma$ independently each time:
advancing to $\beta_{j+1}=\gamma^{-1}\beta_j$ multiplies its degree-$r$ term
by the fixed constant $\gamma^{-r}$. This efficient scan is known as a Chien
search~\cite{chien1964search}. Finding a root reveals an error's position but
not its value. At $\beta_j$, the evaluator $\omega(\beta_j)$ contains the
error value multiplied by the contributions of the other locator factors,
while the formal derivative $\sigma'(\beta_j)$---the usual coefficientwise
polynomial derivative over the field---contains those same factors without
the error value. Their ratio therefore isolates $e_j$, up to a fixed
factor $\kappa_j$ that depends only on the position:
\begin{equation}
  e_j=\kappa_j\frac{\omega(\beta_j)}{\sigma'(\beta_j)}.
  \label{eq:rs-forney}
\end{equation}
This recovery step is Forney's formula~\cite{forney1965bch}. The circuit
immediately uses that value and the matching twelve-bit block $c_j$ to apply the phase
\eqref{eq:dqi-phase-fixup}. It then reverses the computation to restore the
syndrome and clear its ancillas.

The EEA is the dominant cost of this baseline because its classical control flow
depends on the syndromes: different inputs require different shifts, swaps,
and numbers of iterations. A quantum circuit must instead apply one fixed
sequence of gates across the syndrome superposition. The baseline therefore
spreads the Euclidean solve over 222 synchronized cycles, controlling each
operation so that it acts only when needed. Our implementation of this baseline, based on the authors' released code~\cite{khattar2025dqi-data}, uses 5.1 million
Toffoli gates and 1,938
qubits at peak, with the forward Euclidean solve and its reversal accounting for 3.8 million
of the Toffolis~\footnote{This agrees with the decoder-stage estimate recomputed
from the authors' released code~\cite{khattar2025dqi-data} (5.0 million
Toffolis, 1,885 qubits) within about $2\%$ in Toffoli count; the qubit counts differ because
the estimate takes stage-wise maxima while the composed circuit retains the
Euclidean state during the scan.}.

The discovered circuit changes both stages of this baseline. First, it replaces the
Euclidean solve with the arithmetically cheaper Berlekamp--Massey (BM) recurrence
and arranges the recurrence so that reversibility does not require retaining
much history state. Second, although the decoded error word contains 49,140
bits, DQI extracts only the single bit $\langle c,e\rangle_2$ from it. The
circuit therefore computes this phase bit directly from the locator and
syndromes rather than materializing the full decoded error, and it
precomputes one reciprocal polynomial so that no field division is performed
during the scan.

An alternative to solving the key equation by EEA is the standard
BM algorithm, which recovers the locator from a linear
recurrence among the syndromes. We first review how BM reads
\cref{eq:rs-key-equation} one coefficient at a time. Write
$\sigma(z)=\sum_{k=0}^L\sigma_kz^k$, where $L=\deg\sigma$. The coefficient
of $z^{r-1}$ in $\sigma(z)S(z)$ is the convolution
$\sum_{k=0}^L\sigma_kS_{r-k}$. Because $\deg\omega<L$, the key equation
requires every such coefficient from degree $L$ through degree 69 to vanish:
\begin{equation}
  0=[z^{r-1}]\,\sigma(z)S(z)
   =\sum_{k=0}^L\sigma_kS_{r-k},
  \qquad r=L+1,\ldots,70.
  \label{eq:rs-syndrome-recurrence}
\end{equation}
Since we can set $\sigma_0=1$ without loss of generality, and subtraction is the same as addition in
$\mathbb F_{2^{12}}$, every syndrome after the first $L$ satisfies
\begin{equation}
  S_r=\sigma_1S_{r-1}+\cdots+\sigma_LS_{r-L}.
  \label{eq:rs-syndrome-prediction}
\end{equation}
Thus the locator coefficients give a rule for predicting each syndrome from
the preceding $L$ syndromes. We call $L$ the \emph{order} of this recurrence. BM finds this recurrence by maintaining a
candidate locator $\sigma$ and refining it as the syndromes are
incorporated~\cite{massey1969shift}. At the start of round
$r$, the candidate defines a lowest-order recurrence for the already processed
syndromes $S_1,\ldots,S_{r-1}$. The round tests whether that recurrence also
fits $S_r$. The discrepancy $\Delta=\sum_{k=0}^{L}\sigma_kS_{r-k}$ is the sum in
\cref{eq:rs-syndrome-recurrence} for round $r$, computed with the current candidate. It measures the amount by which
the recurrence fails: $\Delta=0$ means that the current candidate already fits
$S_r$, while $\Delta\ne0$ requires the candidate to be refined.

When $\Delta\ne0$, BM needs a correction that fixes the equation for $S_r$
without invalidating any of the earlier equations. For this purpose, it retains an
earlier candidate $\tau$ as a correction template, together with the nonzero
discrepancy $\delta$ that candidate produced. At the start of each subsequent
round, BM multiplies $\tau$ by $z$, shifting every coefficient one degree
higher as the recurrence advances to the next syndrome. Thus, if the template
was saved $k$ rounds earlier, it has acquired a factor $z^k$ and is aligned
with the current syndrome equation while leaving the earlier equations unchanged. BM
then scales this aligned template by $\alpha=\Delta/\delta$ and applies
$\sigma\gets\sigma-\alpha\tau$: at $S_r$, the correction contributes exactly
$\alpha\delta=\Delta$ and so cancels the discrepancy.

On a round with a nonzero discrepancy, BM must also decide whether the
current candidate takes over as the template. A candidate of order $L$ that
fits $n$ syndromes has predicted $n-L$ of them. BM seeks the recurrence of
lowest order, and a correction raises the order less when its template had
predicted more syndromes at the time it was saved. BM therefore replaces the
template when the current candidate has predicted more syndromes than the
stored template had when it was saved, and retains it when the current candidate has
predicted fewer. When the two counts are equal, either choice gives a correct
decoder~\cite[p.~125]{massey1969shift}. Textbook BM retains in that case. The
circuit instead chooses to replace, which reduces the history it must store
(see ``Reconstructing BM history'' below).

The circuit makes this decision with a signed counter $d$. The counter is
nonnegative whenever a replacement is required and negative whenever the
template must be retained; when the counts are equal it may have either sign.
Starting from $(\sigma,\tau,\delta,d)=(1,1,1,0)$, each round
$r=1,\ldots,70$ applies
\begin{equation}
\begin{gathered}
\widetilde\tau=z\tau,
\qquad
\Delta=\sum_{i=0}^{\min(\deg\sigma,r-1)}\sigma_iS_{r-i},\\[1mm]
(\sigma,\tau,\delta,d)\gets
\begin{cases}
 (\sigma,\widetilde\tau,\delta,d+1),
   & \Delta=0,\\
 (\sigma-(\Delta/\delta)\widetilde\tau,\sigma,\Delta,-d),
   & \Delta\ne0,\ d\ge0,\\
 (\sigma-(\Delta/\delta)\widetilde\tau,\widetilde\tau,\delta,d+1),
   & \Delta\ne0,\ d<0.
\end{cases}
\end{gathered}
\label{eq:rs-bm-round-update}
\end{equation}
After the final round, $\sigma$ is the error locator.

\paragraph{Adaptive coefficient ranges.}
A na\"ive implementation would apply every BM update across all 36
possible coefficient positions in every round. Even in the worst case, this
is unnecessary: after $r$ syndromes have been processed, only coefficients
through degree $\min(r,35)$ can be active, so the update range need only grow
by one position per round. The dashed line in \cref{fig:dqi-rs}(b) shows this
worst-case bound. On the DQI input distribution, however, the candidate
polynomials typically grow more slowly still. A degree-$L$ locator $\sigma$
has $L$ unknown coefficients after fixing $\sigma_0=1$, while its evaluator
$\omega$, whose degree is less than $L$, has at most $L$ unknown
coefficients. Since each syndrome supplies one coefficient constraint, $r$
syndromes typically determine a locator of degree $L \approx r/2$, consistent
with the samples in \cref{fig:dqi-rs}(b).

The discovered circuit exploits this typical behavior, at the price of
failing on rare inputs. Before compiling the circuit, we
classically simulate BM on the DQI input distribution to choose a
cutoff $k$ for each round. The circuit then updates only the prefix
$\sigma_0,\ldots,\sigma_k$, treating higher coefficients as zero. The prefixes expand as the rounds
progress, which produces the
rising per-round cost in \cref{fig:dqi-rs}(a).

\paragraph{Shared BM updates.}\label{sec:rs-shared-updates}
Compiled directly, the two correction cases of \cref{eq:rs-bm-round-update}
would produce two separately controlled arithmetic paths. Only one path acts
on any basis state, but both would appear in the fixed circuit and contribute
to its Toffoli count. The main text shows how the circuit avoids this
(\cref{eq:rs-bm-rescaled-overview}). Under a flag that marks the replace case,
it exchanges $\sigma$ with the shifted template $\widetilde\tau$. One shared
update then subtracts a multiple of the template register from the locator
register. The multiplier is $\alpha$ in the retain case and $\alpha^{-1}$ in
the replace case. The replace result differs from the standard update only by
the nonzero factor $-\alpha^{-1}$ and therefore has the same roots.

The circuit also stores no discrepancy $\delta$. The rescaled update
maintains the invariant
\begin{equation}
  \sigma_0\delta=1.
  \label{eq:rs-scale-invariant}
\end{equation}
The invariant holds initially. The shifted template has zero constant
coefficient, so the retain case changes neither $\sigma_0$ nor $\delta$. In
the replace case, the new constant coefficient comes entirely from the
correction term: $\sigma_0\gets\alpha^{-1}\sigma_0=(\delta/\Delta)\sigma_0
=\Delta^{-1}$, while $\delta\gets\Delta$. The circuit therefore reads
$\delta=\sigma_0^{-1}$ from the locator. In each round it forms the product
$\Delta\sigma_0$, which equals $\alpha$,
and inverts it. A controlled selection between
the product and its inverse supplies the multiplier for the shared update.

\begin{figure}[t]
\centering
\includegraphics[width=\linewidth]{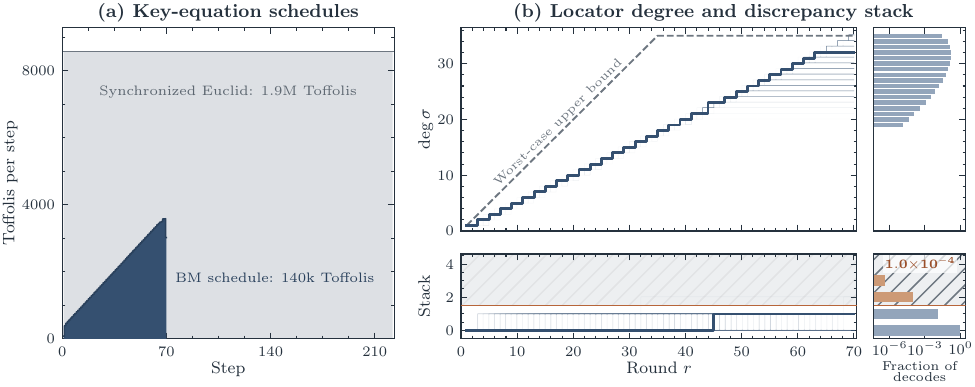}
\caption{The Reed--Solomon decoder. (a) Measured forward Toffoli cost per
step: one cycle of the 222-cycle synchronized EEA baseline of
\citet{khattar2025optimized-dqi}, or one round of our 70-round BM decoder. (b) Locator degree (top) and number of stored discrepancies (bottom) by round, across 20,000
samples drawn from the DQI input distribution, with one sample highlighted;
the dashed line is the worst-case degree bound $\min(r,35)$.
The right-hand panels show the final locator-degree distribution and the distribution of maximum
discrepancy-stack depth. Inputs in the hatched region need more than the one
allocated stack entry and may produce a wrong sign; this occurs with probability $1.0 \times 10^{-4}$.}
\label{fig:dqi-rs}
\end{figure}

\paragraph{Reconstructing BM history.}
Reversing a round requires the discrepancy that the round used, which the
round computes into a temporary register. Keeping a
twelve-bit history word for every round would raise the qubit count to 2,677.
A replace round leaves $\sigma_0=\Delta^{-1}$ by
\cref{eq:rs-scale-invariant}, so the locator determines the round's
discrepancy. The circuit therefore erases the temporary discrepancy register
by MBU, and the reverse pass recomputes the erased value from the
locator. A round with $\Delta=0$ changes nothing and has nothing to recover.

A retain round with $\Delta\ne0$ is the exception. No exchange occurs and
$\sigma_0$ is unchanged, so the constant coefficient that reveals the
discrepancy on a replace round carries no trace of it here. The current
$\Delta$ remains only in the temporary discrepancy register, and the circuit
stores it on the history stack before erasing that register.
A stack that never overflows needs 35 entries, or 665 qubits.

Retain rounds are rare on the DQI input distribution. Starting from $d=0$,
a run of nonzero discrepancies repeatedly selects replace and leaves $d=0$. A zero
discrepancy raises $d$; a later replacement can then make it negative,
producing retain rounds while $d$ returns to zero. Such accidental zero
discrepancies are uncommon over $\mathbb F_{2^{12}}$ before the locator reaches its
final degree. Once it has, subsequent discrepancies stay zero and do not
start new runs of retain rounds. Consequently, most inputs need few saved
discrepancies. The discovered circuit allocates only one stack entry,
which costs 19 qubits (\cref{fig:dqi-rs}(b)).

Across 400 million samples from the DQI input
distribution, the two failure modes (overflow of the one-entry stack and activation of a coefficient beyond the cutoff) occur with a combined probability of
$1.4\times10^{-4}$. This completes locator construction; the circuit next
forms the evaluator $\omega$, which will be used together with $\sigma$ to
compute the required phase.

\paragraph{In-place construction of the evaluator polynomial.}
The key equation~\eqref{eq:rs-key-equation} gives $\omega$ by multiplying the
locator by the syndrome polynomial. The circuit first normalizes $\sigma$ so
that $\sigma_0=1$. This leaves the locator roots unchanged, and the
corresponding $\omega$ acquires the same nonzero factor, so the ratio
$\omega/\sigma'$ used to recover the error values is unchanged. Write
$\omega_{<35}=(\omega_0,\ldots,\omega_{34})$ for the first 35 coefficients of $\omega$:
\begin{equation}
  (\sigma;S_1,\ldots,S_{35};S_{36},\ldots,S_{70})
  \longmapsto
  (\sigma;\omega_{<35}=\sigma S\bmod z^{35};S_{36},\ldots,S_{70}).
  \label{eq:rs-inplace-evaluator}
\end{equation}
With this normalization, each output coefficient is
\begin{equation}
  \omega_i=S_{i+1}+\sum_{j=1}^{i}\sigma_jS_{i-j+1},
  \qquad 0\le i<35,
  \label{eq:rs-evaluator-triangular}
\end{equation}
so it equals the syndrome symbol being overwritten plus a function of earlier
symbols. The transformation is therefore reversible:
the circuit can overwrite the first 35 syndrome symbols with the first 35
coefficients of $\omega$ and later restore them. This avoids a separate
420-qubit register for $\omega$.
The same map extends to all 70 syndrome symbols; by the key
equation, the coefficients of $\sigma S$ from degree 35 through 69 vanish
whenever the locator is correct, so the last 35 become zero and serve as
workspace until the map is reversed.

\paragraph{Phase-only error recovery.}
At this point the decoder could use $\sigma$ and $\omega$ to recover every
error value. An ordinary decoder does so by applying Forney's formula
\eqref{eq:rs-forney} to construct the twelve-bit field element $e_j$ at each
located error. The DQI task~\eqref{eq:dqi-phase-fixup}, however, needs only the
single accumulated parity $\sum_j\langle c_j,e_j\rangle_2$. Constructing all
twelve bits of each $e_j$ only to retain one parity would waste most of this
work, so the discovered circuit computes the required phase bit directly.

Any parity of a field element can be expressed as a field trace after
multiplication by a fixed constant. In particular, with
$\Tr(x)=\sum_{r=0}^{11}x^{2^r}\in\mathbb F_2$, the twelve mask bits $c_j$
determine a unique element $\mu_j\in\mathbb F_{2^{12}}$ such that
\begin{equation}
  \langle c_j,e_j\rangle_2
  =\Tr(\mu_j e_j).
\end{equation}

Part of the total phase is linear in the supplied syndromes and can be applied
with Clifford $Z$ gates. Choosing that linear part offline cancels the mask at 71
positions and leaves residual constants $\widetilde\mu_j$ at the remaining
4,024 positions to scan.

At each remaining position the circuit tests whether $\beta_j$ is a
root of the locator. If it is, the error word is nonzero there. Absorb the fixed factor
$\kappa_j$ from Forney's formula into
$\nu_j=\widetilde\mu_j\kappa_j$. Substituting that formula directly into the
trace gives the residual phase bit
\begin{equation}
  \Tr(\widetilde\mu_j e_j)
  =\Tr\!\left(
    \nu_j
    \frac{\omega(\beta_j)}{\sigma'(\beta_j)}
  \right),
  \label{eq:rs-contracted-forney}
\end{equation}
without materializing $e_j$.

\paragraph{Precomputed reciprocal.}
The quotient in \cref{eq:rs-contracted-forney} is required only at the roots
of $\sigma$, of which there are at most 35. A quantum circuit, however,
applies one fixed gate sequence to every syndrome in the superposition, so it
cannot skip the positions that are not roots. A direct implementation would therefore perform the
division at each of the 4,024 positions, and this division would dominate
the decoder's cost.

The circuit instead performs one division before the scan. Because only the
values at the roots matter, any function that agrees with $1/\sigma'$ there
serves equally well. The roots of $\sigma$ are distinct, so
$\sigma'$ is nonzero at each of them, and a single polynomial $\Psi$ of degree
less than $\deg\sigma$ takes the value $1/\sigma'(\beta)$ at every root
$\beta$. Equivalently, $\Psi\sigma'\equiv1\pmod\sigma$. Evaluating $\Psi$ and
$\omega$ at a fixed point is linear in their coefficients and uses only
CNOTs, so the residual phase bit becomes
\begin{equation}
  \Tr(\widetilde\mu_j e_j)
  =\Tr\!\bigl(\nu_j\,\omega(\beta_j)\,\Psi(\beta_j)\bigr),
  \label{eq:rs-psi-pairing}
\end{equation}
which is bilinear in $\omega(\beta_j)$ and $\Psi(\beta_j)$.
In a basis of $\mathbb F_{2^{12}}$ over $\mathbb F_2$, it splits into
12 terms, each the product of one parity of the bits of $\omega(\beta_j)$ with
one parity of the bits of $\Psi(\beta_j)$. The parities cost only CNOTs, and each
product costs one Toffoli, applied only when the root test succeeds. The root
test itself is a zero test on the twelve bits of $\sigma(\beta_j)$ and costs
11 Toffolis, so each position costs $11+12=23$ Toffolis.

Computing $\Psi$ is itself a division: $\Psi$ is the inverse of $\sigma'$
modulo $\sigma$. The circuit computes it with the EEA (the
algorithm that BM replaced for the key equation) on a smaller
problem. Two observations make it affordable.

The Toffoli cost of the EEA grows roughly with the square of
the operand degree. The first observation halves that degree, which cuts the
Toffoli cost to about a quarter. In characteristic 2, squaring is linear: $(a+b)^2=a^2+b^2$. It therefore acts
on the twelve bits of a field element as a fixed linear map, as does its
inverse, the square root, and both cost only CNOTs. Group the coefficients of
$\sigma$ by the parity of their degree:
\begin{equation}
  \sigma(z)=\bigl(\sigma_0+\sigma_2z^2+\sigma_4z^4+\cdots\bigr)
    +z\bigl(\sigma_1+\sigma_3z^2+\sigma_5z^4+\cdots\bigr).
\end{equation}
Each group is a polynomial in $z^2$ and hence a square,
$\sigma_0+\sigma_2z^2+\cdots=\bigl(\sqrt{\sigma_0}+\sqrt{\sigma_2}\,z+\cdots\bigr)^2$,
so $\sigma(z)=E(z)^2+z\,O(z)^2$ with $E_i=\sqrt{\sigma_{2i}}$,
$O_i=\sqrt{\sigma_{2i+1}}$, and $\deg E,\deg O\le17$. In characteristic 2 the derivative of any square
is zero, since $(f^2)'=2ff'$, so $\sigma'=O^2$. The polynomials $E$ and $O$ share no factor, because
a common factor would divide both $\sigma$ and $\sigma'$. The EEA on this half-degree pair $(E,O)$ returns $A$ and $B$ of
degree at most 17 with $AE+BO=1$. Then $BO=1+AE$, so
$(B^2+zA^2)\,\sigma'=(BO)^2+zA^2O^2=1+A^2E^2+zA^2O^2=1+A^2\sigma$,
and $B^2+zA^2$ agrees with $\Psi$ at every root of $\sigma$. Forming $E$ and
$O$ from $\sigma$, and $B^2+zA^2$ from $(A,B)$, uses only square roots and
squares. The only costly step, the EEA, thus runs on
degree-17 rather than degree-35 operands.

The second observation concerns qubit usage. A fresh register for $\Psi$ would raise the peak
qubit count by 420. Instead, the circuit makes room inside the syndrome
register itself. As described above, the in-place product over all 70
syndrome symbols leaves $\omega$ in the first 35 and zeros in the last 35 on
every correct decode. The circuit stores $A$ and $B$
in these zeroed words and evaluates
$\Psi(\beta_j)=B(\beta_j)^2+\beta_jA(\beta_j)^2$ at each position with CNOTs
only. After the scan, it reverses the Euclidean pass, which returns these
words to zero, and then reverses the product to restore the syndromes.

Unlike the baseline's Euclidean solve, this pass follows a fixed schedule.
Every input runs the same 17 elimination blocks. Each
block parks its quotients in the coefficient words that it clears, so the pass
needs no separate history.

In total, the 4,024-position scan costs 93,000 Toffolis, and the Euclidean
pass 44,000 in each direction, compared with 1.3 million for the baseline's
scan. The scan's Toffoli count barely depends on the choice of $c$: every scanned position uses
the same 23-Toffoli circuit, and a different $c$ changes the number of scanned
positions by at most one.

The discovered circuit uses 440,000 Toffolis and 1,837
qubits, for a $TQ$-cost of $8.1\times10^8$. Relative to the Reed--Solomon
baseline, this reduces the Toffoli count by $12\times$, the qubit count by
$5.2\%$, and the $TQ$-cost by $12\times$. In 100,000 tests of the complete circuit,
seven produce the wrong sign.

\subsection{Hermitian-code decoding for HOPI}\label{sec:case-dqi-hopi}

HOPI requires the same phase~\eqref{eq:dqi-phase-fixup} as the Reed--Solomon
instance above, but its error word belongs to a 512-symbol Hermitian code over
$\mathbb F_{64}$. A Reed--Solomon code with 512 positions would require
symbols of about nine bits, since each position needs a distinct field element.
A Hermitian code instead obtains 512 positions while retaining six-bit
symbols by using the points $(x,y)\in\mathbb F_{64}^2$ on
\begin{equation}
  y^8=x^9+y.
  \label{eq:hermitian-curve}
\end{equation}
Polynomials in $x$ and $y$ are ordered by a \emph{weighted degree}, the
analogue of polynomial degree for Reed--Solomon codes. Here $x$
has weight 8 and $y$ has weight 9, so the monomial $x^ay^b$ has weighted degree
$8a+9b$. These weights give the leading terms $y^8$ and $x^9$ of
\cref{eq:hermitian-curve} the same weighted degree, 72.
Each finite point $P_j=(x_j,y_j)$ is one code position. The syndrome
construction is the two-variable analogue of the Reed--Solomon map above. A function
$\psi(x,y)$ assigns a field value to every position and therefore defines one
parity check
\begin{equation}
  S(\psi)=\sum_{j=1}^{512}e_j\psi(P_j).
  \label{eq:hopi-generalized-syndrome}
\end{equation}
The value $S(\psi)$ is one syndrome symbol. This instance uses the 123
monomials
$\psi_{a,b}(x,y)=x^ay^b$ of weighted degree
$8a+9b\le150$~\cite{gu2025hermitian-dqi}. We enumerate them as
$\psi_0,\ldots,\psi_{122}$ in increasing weighted degree~\footnote{Here $a\ge0$ and $0\le b\le7$.}. The input $\ket{s}$ supplies the
123 symbols
\begin{equation}
  S_k:=S(\psi_k), \qquad k=0,\ldots,122.
\end{equation}
This syndrome map defines a $[512,389,d_{\min}\ge96]_{64}$ Hermitian code, so the syndrome uniquely determines every error affecting at most
47 positions. The 123 six-bit syndromes occupy 738 qubits. \Cref{tab:hopi-dictionary} lists the
correspondences with the Reed--Solomon decoder that the rest of this
subsection uses.

\begin{table}[htbp]
\caption{Correspondence between the Reed--Solomon and Hermitian decoders.}
\label{tab:hopi-dictionary}
\centering
\begin{tabular}{lll}
\toprule
 & \textbf{Reed--Solomon} & \textbf{Hermitian} \\
\midrule
Positions & 4,095 nonzero field elements & 512 points $(x,y)$ on the curve \\
Symbol size & 12 bits & 6 bits \\
Check functions & $z^i$ & $x^ay^b$ \\
Ordered by & degree $i$ & weighted degree $8a+9b$ \\
Syndromes & 70 & 123 \\
Locator & polynomial $\sigma(z)$ & function $\sigma(x,y)$ \\
Evaluator & $\omega=\sigma S$ & $\omega=\sigma\widehat S$ \\
Error value & $\kappa_j\,\omega/\sigma'$ & $\omega/\sigma'$ \\
\bottomrule
\end{tabular}
\end{table}

As in Reed--Solomon decoding, the first task is to find a locator $\sigma$ that
vanishes at every error point; here it is a function of two variables on
the curve. Multiplying each root condition by $e_j\psi_{k'}(P_j)$ and summing over
positions gives, for $\sigma=\sum_k\sigma_k\psi_k$,
\begin{equation}
  0=S(\sigma\psi_{k'})
   =\sum_k\sigma_k S(\psi_k\psi_{k'}).
  \label{eq:hopi-locator-recurrence}
\end{equation}
This has the same form as the Reed--Solomon recurrence
\eqref{eq:rs-syndrome-recurrence}: a sum of syndromes weighted by the locator
coefficients. Whereas multiplying Reed--Solomon
monomials simply shifts the syndrome index, the curve relation
\eqref{eq:hermitian-curve} reduces each product $\psi_k\psi_{k'}$ to one or two
basis monomials. Beyond error weight 33, some of the syndromes that the recurrence needs have weighted
degree above 150, which the input does not supply. The Feng--Rao baseline reconstructs these
missing syndromes by majority voting among recurrence predictions, then uses
the extended sequence to locate and evaluate the errors~\cite{sakata1995feng-rao}.

Our reversible implementation of this baseline uses 100 million Toffolis at a peak of
190,000 qubits, for a $TQ$-cost of
$2.0\times10^{13}$. The
large qubit count comes from retaining the intermediate states of the recurrence updates until the circuit reverses the computation.

The discovered circuit instead turns locator finding into a null-space problem
for a symmetric matrix $A$ built directly from the supplied
syndromes. It never stores $A$; instead, it computes the effect of multiplying
a vector by $A$ directly from the syndromes. Repeating this operation produces
a sequence of vectors that must eventually become linearly dependent, and
such a dependence yields the desired null vector. The circuit processes this
sequence incrementally, retaining only a few vectors at a time and grouping
several consecutive vectors when a vector cannot be processed alone. Once
the null vector supplies a locator $\sigma$, the circuit still avoids constructing the
whole evaluator $\omega$ or the decoded error word. It scans the curve points in small batches called \emph{stripes},
combines the locator and syndromes only for the current stripe, applies those
points' phase contributions, and clears the temporary data before continuing.
The circuit thus has two stages: a locator walk that never stores the
matrix, and a space-efficient phase scan. The walk freezes on the rare inputs that need
extra work and repairs them at scheduled rounds. Its reversal recomputes most
decisions and stores one bit per round.

\paragraph{The locator as a null vector.}
The discovered circuit looks for the locator in a fixed space that is large
enough to always contain one. Following the linear-algebra strategy of
\citet{skorobogatov1990decoding}, it uses the span of the 48 monomials
$\psi_0,\ldots,\psi_{47}$ of weighted degree at most 75. Writing
$\sigma=\sum_{k=0}^{47}\sigma_k\psi_k$, each error point imposes one linear
condition $\sum_{k=0}^{47}\sigma_k \psi_k(P_j)=0$ on the 48 coefficients.
There are at most 47 conditions on 48 unknowns, so a nonzero locator always
exists.

The error points are unknown, so the circuit cannot impose these
conditions directly. It replaces them with a condition that it can compute
from the syndromes. Define the symmetric $48\times48$ matrix
$A_{kk'}=S(\psi_k\psi_{k'})$. Each product $\psi_k\psi_{k'}$ has weighted degree at most 150,
so every entry of $A$ is a linear function of the supplied syndromes. The coefficient vector
$\boldsymbol{\sigma}=(\sigma_0,\ldots,\sigma_{47})^{\mathsf T}$ of every
locator lies in the null space of $A$, and almost always the converse holds
as well.

To see this, let $\mathcal E$ be the set of error points and let
$M_{jk}=\psi_k(P_j)$ for $P_j\in\mathcal E$. Then
\begin{equation}
  A=M^{\mathsf T}\operatorname{diag}(e_j)\,M,\qquad
  M\boldsymbol{\sigma}=(\sigma(P_j):P_j\in\mathcal E)^{\mathsf T}.
  \label{eq:hopi-null-vector}
\end{equation}
A locator has $M\boldsymbol{\sigma}=0$ and therefore $A\boldsymbol{\sigma}=0$.
Conversely, when the conditions $\sigma(P_j)=0$ are independent, $M$ has full
row rank. The error values are nonzero, so $A\boldsymbol{\sigma}=0$ then
implies $M\boldsymbol{\sigma}=0$, and $\sigma$ is a locator. The conditions
are dependent with probability $\approx8\times10^{-8}$ under the DQI input
distribution.

\paragraph{Applying $A$ without storing it.}
The circuit needs only the product $A\boldsymbol u$ for
a given vector $\boldsymbol u$. Computing it directly would take
$48^2=2{,}304$ field multiplications. Let $U=\sum_k u_k\psi_k$ be the function
with coefficient vector $\boldsymbol u$. Then $(A\boldsymbol u)_{k'}=S(\psi_{k'}U)$:
each entry is the syndrome of a product of two functions. Every such product
lies in the span of the 123 monomials, so its values at 123 suitably chosen
curve points $R_1,\ldots,R_{123}$ determine it. On this span the syndrome is
therefore a weighted sum of point values,
$S(f)=\sum_{i=1}^{123}\eta_i f(R_i)$, with weights $\eta_i$ that are a fixed
linear function of the supplied syndromes. Writing
$\Phi_{ik}=\psi_k(R_i)$ gives
\begin{equation}
  A\boldsymbol{u}
  =\Phi^{\mathsf T}\operatorname{diag}(\boldsymbol{\eta})\Phi\boldsymbol{u}.
  \label{eq:hopi-walk-factorization}
\end{equation}
Here $\Phi\boldsymbol{u}$ evaluates $U$ at the points, the diagonal map
multiplies each value by its weight, and $\Phi^{\mathsf T}$ returns the
weighted values to the 48 coefficients. The maps $\Phi$ and
$\Phi^{\mathsf T}$ are fixed and linear, so they cost only CNOTs. The only
Toffolis are in the 123 pointwise products, each a 15-Toffoli multiplication
in $\mathbb F_{64}$~\cite{toom1963complexity,cenk2010multiplication}. One
application of $A$ thus costs $123\times15=1{,}845$ Toffolis. The same
pattern recurs below, and we call it \emph{evaluation--interpolation}: the
circuit evaluates at fixed points, multiplies the values pointwise, and maps
the products back linearly.

\paragraph{Lanczos recurrence.}
A null vector of $A$ can be found from repeated applications of $A$. Starting
from a fixed seed $\boldsymbol{y}_0$, generate the \emph{Krylov vectors}
$\boldsymbol{y}_0,A\boldsymbol{y}_0,A^2\boldsymbol{y}_0,\ldots$. They all lie
in a 48-dimensional space, so they must become linearly dependent: there are
coefficients, not all zero, with
$\lambda_0\boldsymbol{y}_0+\lambda_1A\boldsymbol{y}_0+\cdots+
\lambda_NA^N\boldsymbol{y}_0=0$. If the seed is not orthogonal to the entire
null space, symmetry forces $\lambda_0=0$. To see this, choose
$\boldsymbol{v}$ in the null space of $A$ with $\boldsymbol{v}^{\mathsf T}\boldsymbol{y}_0\ne0$.
Symmetry gives $\boldsymbol{v}^{\mathsf T}A^r\boldsymbol{y}_0=0$ for $r\ge1$,
so the inner product of the dependence with $\boldsymbol v$ leaves
$\lambda_0\boldsymbol{v}^{\mathsf T}\boldsymbol{y}_0=0$. Then
\begin{equation}
  \boldsymbol{\sigma}=\sum_{r=1}^N \lambda_rA^{r-1}\boldsymbol{y}_0,
  \qquad
  A\boldsymbol{\sigma}=\sum_{r=1}^N \lambda_rA^r\boldsymbol{y}_0=0
  \label{eq:hopi-walk-kernel}
\end{equation}
is the desired null vector. In the rare case where $\lambda_0\ne0$, the
decoding fails.

Used directly, this would require storing every Krylov vector until the
dependence appears. The Lanczos algorithm avoids that. It maintains an
orthogonal basis $\boldsymbol{w}_0,\boldsymbol{w}_1,\ldots$ whose first $r+1$ vectors span
the first $r+1$ Krylov vectors: it applies $A$ to the newest basis vector and subtracts the
components along the existing basis. Because $A$ is symmetric,
$A\boldsymbol{w}_r$ is already orthogonal to every basis vector except
$\boldsymbol{w}_{r-1}$ and $\boldsymbol{w}_r$, so
\begin{equation}
  A\boldsymbol{w}_r\in
  \operatorname{span}\{\boldsymbol{w}_{r-1},\boldsymbol{w}_r,
  \boldsymbol{w}_{r+1}\}.
  \label{eq:hopi-three-term}
\end{equation}
Constructing the next vector therefore needs only the current and the
preceding basis vectors. A zero result means that the Krylov vectors have
become dependent.

The null vector can be built in the same pass. By \cref{eq:hopi-three-term},
the coefficients $\xi_r$ of a null vector
$\boldsymbol{\sigma}=\sum_r\xi_r\boldsymbol{w}_r$ satisfy a three-term
recurrence of their own, so each follows from the two before it. The circuit
computes each coefficient while its basis vector is still stored and
immediately adds $\xi_r\boldsymbol{w}_r$ into a 48-symbol register, the
\emph{locator register}. When the
walk ends, that register holds $\boldsymbol{\sigma}$.

\paragraph{Breakdowns and scheduled repairs.}
Lanczos removes the component of a vector $\boldsymbol v$ along a basis
vector $\boldsymbol w$ by subtracting $\rho\boldsymbol w$, with
$\rho=(\boldsymbol w^{\mathsf T}\boldsymbol v)/
(\boldsymbol w^{\mathsf T}\boldsymbol w)$. Over $\mathbb F_{64}$, however, a
nonzero vector can be orthogonal to itself, which leaves this quotient
undefined. Block Lanczos repairs such a breakdown by treating the consecutive
Krylov vectors $\boldsymbol w,A\boldsymbol w,\ldots,A^{d-1}\boldsymbol w$ as
one block and removing the components along all of them at once. This is possible once their Gram
matrix, the matrix of their inner products, is invertible:
\begin{equation}
  G_d=\bigl[(A^i\boldsymbol w)^{\mathsf T}
             (A^j\boldsymbol w)\bigr]_{i,j=0}^{d-1}
      =\bigl[g_{i+j}\bigr]_{i,j=0}^{d-1},
  \qquad g_k=\boldsymbol w^{\mathsf T}A^k\boldsymbol w.
  \label{eq:hopi-walk-block-pairing}
\end{equation}
We call $d$ the \emph{depth} of the block. The circuit supports depths two
and three.

A breakdown is rare, and the machinery that repairs it is expensive. An
ordinary round of the walk costs 5,400 Toffolis; a round that can also repair
at depth two costs 12,000, and one that can repair at depth three costs
17,000. A fixed circuit that repaired each breakdown where it occurred would
carry this machinery in every round. The circuit instead concentrates it in
scheduled \emph{repair rounds}. It conditions the ordinary update on
$g_0=\boldsymbol w^{\mathsf T}\boldsymbol w\ne0$. This test is cheap: in
characteristic 2, $\boldsymbol w^{\mathsf T}\boldsymbol w$ is the square of the
sum of the components of $\boldsymbol w$, which needs only CNOTs. On an input with $g_0=0$ the update gates do not act,
and the walk state stays frozen until the next repair round, which forms the
block and advances the walk past it in one step. The walk has 52 rounds, four more than the dimension, so that delayed
inputs can finish. Of these, 17 carry repair machinery. An input that needs a deeper block, or that
has not finished after 52 rounds, fails.

\paragraph{Ending and reversing the walk.}
For a given input, the walk ends when $g_0=g_1=g_2=0$, because no block of depth up
to three is then invertible. The circuit freezes the walk there and records the
round, as the divstep circuit does
(SI~\ref*{sec:modinv-divsteps}). After the last round, it tests which of three
cases holds and routes the corresponding vector into the locator register. If
$\boldsymbol w=0$, the Krylov vectors have become dependent, and the
accumulated vector is the null vector. If $\boldsymbol w\ne0$ and
$A\boldsymbol w=0$, then $\boldsymbol w$ itself is a null vector. If
$A\boldsymbol w\ne0$ and $A^2\boldsymbol w=0$, then $A\boldsymbol w$ is a
null vector. An input in none of these cases fails.

After the phase scan described below, the circuit must undo the walk. It
first measures the locator register in the $X$ basis. The walk built that
register by adding multiples of basis vectors, so the reverse walk can
correct the phase left by the measurement as it retraces the rounds: each
addition becomes a set of controlled-$Z$ gates conditioned on the measurement
outcomes, at no Toffoli cost. The reverse walk recomputes most of the
decisions of the forward walk, such as its zero tests, while the vectors they
depend on are present. One decision per round cannot be recomputed in time:
whether the round advanced the walk, which controls an exchange of the two
stored basis vectors. The forward walk stores that decision in one history
qubit per round, 52 in all, and the reverse walk clears them. The reverse walk
uses 52,000 fewer Toffolis than the forward walk.

\paragraph{Error values.}
The locator gives the possible error points, and the phase also needs the
error values. The circuit uses two facts, both counterparts of the
Reed--Solomon decoder (\cref{tab:hopi-dictionary}). First, there is an
evaluator $\omega$ whose coefficients are bilinear in the locator and the
syndromes, as $\omega=\sigma S$ is for Reed--Solomon codes. Second, the error
value at a simple zero of the locator is
\begin{equation}
  e_j=\frac{\omega(P_j)}{\sigma'(P_j)},
  \qquad \sigma'=\partial_x\sigma+x^8\partial_y\sigma,
  \label{eq:hopi-simple-residue}
\end{equation}
the counterpart of Forney's formula~\eqref{eq:rs-forney}. Here $\sigma'$ is
the derivative of the locator along the curve. Differentiating
\cref{eq:hermitian-curve} in characteristic 2 gives $dy=x^8dx$, so a step $dx$
along the curve changes $\sigma$ by $\sigma'\,dx$.
The evaluator is $\omega=\sigma\widehat S$, where $\widehat S$ is the
counterpart of the syndrome polynomial $S(z)$: a series whose leading
coefficients, in decreasing weighted degree, are the 123
syndromes~\cite{stichtenoth2009function-fields}.

The locator comes from a 48-dimensional space and is not necessarily the
minimal-degree locator. It can therefore also vanish at a point that carries no error,
or have a repeated zero at an error point. At a point with no error,
$\omega$ vanishes as well and the point contributes nothing. At a repeated
zero the quotient becomes $0/0$. As in L'H\^opital's rule, the error value
then comes from the first few coefficients of the expansions of $\omega$ and
$\sigma$ at that point, which we call \emph{local coefficients}. The circuit
supports repeated zeros up to multiplicity three.

\paragraph{Striped phase scan.}
For a curve point $P_j$, write $c_j$ for its six mask bits and $e_j$ for its
error value. As in the Reed--Solomon decoder, the mask bits define
$\mu_j\in\mathbb F_{64}$ with
$\langle c_j,e_j\rangle_2=
\Tr_{\mathbb F_{64}/\mathbb F_2}(\mu_je_j)$, and part of the phase is linear
in the syndromes and costs only $Z$ gates. The residual constant
$\widetilde\mu_j$ vanishes at 128 points, which leaves 384 points to scan.

At each of these points the circuit needs three local coefficients of
$\omega$. Computing
$\omega$ once and storing it would hold more than 100 field symbols through
the whole scan. The circuit instead works in 24 stripes of 16 points.
For each stripe it computes only the $16\times3=48$ symbols that the stripe
needs, uses them, and erases them by MBU at no Toffoli cost. The
matching local coefficients of the locator are linear in its coefficient
vector and cost only CNOTs.

Recomputing for every stripe is affordable only because $\omega$ is
cheap to form. A direct construction multiplies the expansion of
$\widehat S$ by $\sigma$ coefficient by coefficient, which uses 2,452 field
multiplications, or 37,000 Toffolis. The circuit uses
evaluation--interpolation instead, as in \cref{eq:hopi-walk-factorization}.
One obstacle is that the syndromes give $\widehat S$ only as the leading
coefficients of an expansion, which cannot be evaluated at a curve point. The
circuit therefore replaces $\widehat S$ by a function that can be evaluated:
a fraction $h_{\boldsymbol s}=n_{\boldsymbol s}/q$ of two polynomials in $x$
and $y$, with a fixed denominator and a numerator that is linear in the
syndromes $\boldsymbol s=(S_0,\ldots,S_{122})^{\mathsf T}$. The fraction is
chosen so that its product with any locator agrees with $\sigma\widehat S$ in
every coefficient that the scan uses. All products $\sigma h_{\boldsymbol s}$
lie in a fixed 206-dimensional space, so their values at 206 suitably chosen
curve points determine those coefficients, at $206\times15=3{,}090$ Toffolis
per stripe. The 24 stripes
cost 74,000 Toffolis in total. This is more than the 37,000 Toffolis of one
direct construction, but it avoids storing $\omega$.

At each point, the circuit then forms the phase bit
$\Tr(\widetilde\mu_j\,\omega/\sigma')$ from the local coefficients. Division
in $\mathbb F_{64}$ is cheap: the circuit computes the inverse with arithmetic in the smaller
field $\mathbb F_8$, and the whole computation costs 33 Toffolis per point.

On 100,000 random inputs, 30 produce the
wrong sign. The circuit uses 930,000 Toffolis and 2,395 qubits, for a $TQ$-cost
of $2.2\times10^9$. Relative to the Feng--Rao baseline, this reduces
the Toffoli count by $110\times$, the qubit count by $81\times$, and the $TQ$-cost by
$9{,}100\times$.

\section{Prime-field modular inversion}\label{sec:case-modinv}

We benchmark modular inversion within a complete arithmetic circuit: point
addition on an elliptic curve over a 256-bit prime field. Each point on this curve
is a pair of integers $(x,y)$ satisfying the curve equation
\begin{equation}
  y^2\equiv x^3+7\pmod p,
  \label{eq:modinv-curve}
\end{equation}
where $p=2^{256}-2^{32}-977$ is a 256-bit prime number. The curve defines \emph{point addition}, an operation that combines two points $P$
and $Q$ into a third point $R=P+Q$ on the same curve (note that this addition is not the usual coordinate-wise addition). Geometrically, one draws the line through $P$ and $Q$, finds its third
intersection with the curve, and reflects that point across the horizontal
axis. Algebraically, for the inputs considered here, with $Q_x\ne P_x$, one first
computes the slope
\begin{equation}
  \lambda\equiv(Q_y-P_y)(Q_x-P_x)^{-1}\pmod p,
  \label{eq:modinv-slope}
\end{equation}
where $(Q_x-P_x)^{-1}$ denotes the inverse modulo $p$. The coordinates of the sum are then
\begin{equation}
  R_x=\lambda^2-P_x-Q_x,\qquad
  R_y=\lambda(P_x-R_x)-P_y.
\end{equation}

We study the reversible point adder~\cite{proos2003quantum}: given an
input point $P$ and another, classically supplied point $Q$, it performs
\begin{equation}
  \ket{P}\longmapsto\ket{P+Q}.
  \label{eq:modinv-point-add-task}
\end{equation}
The modular inverse required by \cref{eq:modinv-slope} is the dominant cost of point addition: once the slope $\lambda$ is known, the remaining coordinate
updates require only a fixed sequence of relatively cheap modular additions and
multiplications. In contrast, modular inversion follows a long input-dependent sequence of
arithmetic transitions, which the circuit must later recover to reverse the
computation and clear its ancillas.

Prior point-addition
constructions~\cite{babbush2026kickmix,schrottenloher2026pointaddition}
use the binary greatest common divisor (gcd) algorithm for modular inversion.
The connection between gcd and inversion comes from expressing the gcd in
terms of the original inputs. Because $p$ is prime and $w=Q_x-P_x\ne0$, there are
integers $\alpha$ and $\beta$ such that
$\gcd(p,w)=1=\alpha p+\beta w$.
Reducing this identity modulo $p$ gives $\beta w\equiv1$, so
$\beta\equiv w^{-1}\pmod p$. The extended binary-gcd algorithm finds these
coefficients by tracking expressions for the two current values $u$ and $v$ as integer
combinations of $p$ and $w$. Binary gcd rests on two elementary identities. If $u$ is even and $v$ is odd, then
$\gcd(u,v)=\gcd(u/2,v)$; if $u$ and $v$ are both odd, then, for $u>v$,
$\gcd(u,v)=\gcd((u-v)/2,v)$. Applying these identities symmetrically gives
the reduction
\begin{equation}
  (u,v)\longmapsto
  \begin{cases}
    (u/2,v) & \text{for } u\text{ even},\\
    (u,v/2) & \text{for } v\text{ even},\\
    ((u-v)/2,v) & \text{for } u>v,\quad u,v\text{ odd},\\
    (u,(v-u)/2) & \text{for } v>u,\quad u,v\text{ odd}.
  \end{cases}
  \label{eq:modinv-binary-gcd}
\end{equation}
Each case preserves the gcd and makes one value smaller.

Making this reduction reversible is difficult because its control flow is
input-dependent: which line of \cref{eq:modinv-binary-gcd} applies changes from
one step to the next, and the reduced pair does not in general reveal which
line produced it. A quantum circuit therefore cannot simply run the reduction
backward to restore its input and clear its workspace; it must preserve enough
information to identify each step. The prior constructions do so using the
\emph{dialog} introduced by \citet{khattar2025optimized-dqi}. The direct
approach would update the coefficients that become $\alpha$ and $\beta$
alongside the two values throughout the reduction. The dialog avoids holding
these growing coefficients in memory. Its first pass updates only the pair
being reduced and records a compact symbol identifying each subtraction,
swap, and halving choice. A second pass replays the same transformations to
construct the required coefficient. Initializing this replay from the
numerator $a=Q_y-P_y$ produces $a\beta=aw^{-1}\bmod p$ directly, without ever storing
$w^{-1}$ in a separate register. The simple binary updates and the replay trick
reduce the gate count, but the complete dialog record
requires hundreds of additional qubits. \Cref{tab:modinv-frontier} compares
these constructions with the two circuits reported below.

\begin{table}[htbp]
\caption{Resource comparison for point addition. All circuits use the
same affine-coordinate input and output; counts include reversible cleanup but exclude
application-specific storage or lookup used to supply the classical addend.}
\label{tab:modinv-frontier}
\centering
\begin{tabular}{@{}llrrr@{}}
\toprule
\textbf{Construction} & \textbf{Inversion strategy} & \textbf{Toffolis} &
\textbf{Qubits} & $\boldsymbol{TQ}$ \\
\midrule
\citet{schrottenloher2026pointaddition} & Binary gcd and dialog trick~\cite{khattar2025optimized-dqi} & $1.9{\cdot}10^6$ & 1,446 & $2.7{\cdot}10^9$ \\
\textbf{Divstep circuit,} SI~\ref*{sec:modinv-divsteps} & \textbf{Divsteps with decoding} & $\boldsymbol{4.0{\cdot}10^6}$ & \textbf{935} & $\boldsymbol{3.7{\cdot}10^9}$ \\
\citet{luo2026space} & Packed Euclidean walk & $7.0{\cdot}10^7$ & 838 & $5.8{\cdot}10^{10}$ \\
\textbf{Specialized Euclidean,} SI~\ref*{sec:modinv-eea-specialized} & \textbf{Specialized packed walk} & $\boldsymbol{2.1{\cdot}10^7}$ & \textbf{843} & $\boldsymbol{1.7{\cdot}10^{10}}$ \\
\bottomrule
\end{tabular}
\end{table}

\subsection{Divsteps}\label{sec:modinv-divsteps}

Rather than reorganize the binary-gcd constructions above, the search finds a new reversible
prime-field inversion architecture built on
\emph{divsteps}~\cite{bernstein2019divsteps,safegcd-explained}. In the divstep
recurrence of \cref{eq:divstep-recurrence}, a small signed counter $z$, called the half
delta, replaces the repeated magnitude comparisons in
\cref{eq:modinv-binary-gcd}: the sign of $z$ and the parity of one value alone
determine the next update. We call the choice of update at a step its
\emph{decision}. \Cref{app:divstep-recovery} proves that every decision can
be computed exactly from the registers as they stand after the step. When a decision
is cheap to decode, the circuit decodes it from the updated state. Otherwise,
the circuit stores the decision as a single bit on a fixed-size history stack.
We call the sequence of divsteps for one input a \emph{walk}.
The point adder uses this inversion twice, once to form the slope $\lambda$
and once to erase it after the coordinate update. Each use runs a forward
walk and a reverse walk, for four walks in total.

Unlike the quotient-based extended Euclidean algorithm (EEA) of
SI~\ref*{sec:modinv-eea-specialized}, which divides one remainder by the other at
every iteration, a divstep never computes a
multi-bit quotient or adds a multiple of one full register to another.
A divstep walk needs more steps than the quotient-based EEA needs divisions, but each step is
much cheaper. Reversible divsteps were previously
developed over binary fields and polynomial
rings~\cite{banegas2021binary,khattar2025optimized-dqi}. There, each step
shrinks the polynomial pair by one coefficient, and the freed space can hold
that step's record. Integer divsteps free no such fixed space, so a stored
history costs additional qubits. To our knowledge, this
is the first reversible divstep inversion circuit over a prime field.

We recall the notation of the main text. The walk state is
$(z,f,g,f_v,g_v)$, and each step applies one of the three updates in
\cref{eq:divstep-recurrence}. Two bits label the update: $s$ records whether
the step swaps the roles of $f$ and $g$, and $b$ records whether the incoming
$g$ is odd. To run a step backward, the circuit must know $s$ and $b$, and the
updated state does not hold them explicitly. Cheap tests on a few bits of the
updated state recover $s$ at every step and $b$ at about two thirds of the
steps. We call the remaining decisions \emph{ambiguous}. Each ambiguous
decision has a \emph{depth} $j$, the number of trailing zero bits of the
updated coefficient $f_v'$, and we write $f_v'=2^jq$ with $q$ odd. A swap
followed by a run of ambiguous decisions is an \emph{excursion}, and the depth
counts the steps of the excursion so far, with the swap as step 1. The cost of a
depth-$j$ decoder grows with $j$. The cutoff $j_{\mathrm{store}}$ is the depth from
which the circuit stops decoding: it decodes an ambiguous decision at depth
$j<j_{\mathrm{store}}$ and stores one at depth $j\ge j_{\mathrm{store}}$.

The cutoff $j_{\mathrm{store}}$ trades qubits for Toffolis: a higher cutoff decodes more and stores less. At $j_{\mathrm{store}}=0$, the circuit uses only the cheap tests and
stores every remaining decision, requiring 198 history qubits for the 533-step
schedule and 1,065 qubits in total. The discovered circuit chooses the
intermediate cutoff $j_{\mathrm{store}}=5$: it decodes depths two through four
with the rules given below. Storing only the deeper decisions shrinks the
history stack from 198 to 60 qubits. At the low-qubit
extreme, $j_{\mathrm{store}}=9$ decodes through $j=8$ and reserves 24 history
qubits.
Varying the cutoff and the workspace produces a family of circuits that spans from
3.6 million Toffolis and 1,031 qubits to 7.9 million Toffolis and 893 qubits; an additional
higher-qubit configuration uses 3.3 million Toffolis and 1,191 qubits.

\paragraph{Branch decoders.}
A forward step computes $b$ from the parity of $g$ and uses it to control the
update. After the update, $g$ no longer holds that parity, so the step is left
with $b$ in a qubit of its own. Uncomputing $b$ there would require the
decoder on the forward walk as well. The forward step instead disposes of
$b$ in one of two ways. If the decision is ambiguous at depth
$j\ge j_{\mathrm{store}}$, the step pushes $b$ onto the history stack.
Otherwise, the step removes $b$ by MBU: it measures $b$ in the $X$ basis. The measurement frees the
qubit but leaves a phase that depends on $b$ and on the measurement outcome.

The reverse walk is the uncomputation of the forward walk. Once the circuit
has used the inverse, it undoes the forward steps in the opposite
order, which restores $w$ and returns the walk registers to zero. The matching reverse step obtains $b$
again, either by popping the stack or by
applying the tests and decoders to the updated state. It applies the
phase correction for the measurement outcome and uses $b$ to choose which update
to undo. The restored $g$ then holds $b$ as its parity, so a CNOT from the low bit of $g$ clears it.
Decoders therefore appear only in reverse steps.
A walk that stores more decisions than the stack holds overflows it and
fails.

The decoders evaluate the second form of \cref{eq:app-divstep-ladder},
$b=\operatorname{bit}_{j-1}(p(f'q)^{-1}\bmod2^j)$. Both $f'$ and $q$ are odd, so
the depth-$j$ decoder is a Boolean function of $2(j-1)$ bits. Write
$f_i=f'[i]$ and $q_i=q[i]=f_v'[j+i]$ for $1\le i<j$, where $u[i]$ is bit $i$ of $u$. For the prime $p$ of
\cref{eq:modinv-curve}, the decoders at depths two through five are
\begin{subequations}\label{eq:divstep-decoder-rules}
\begin{align}
  j=2:\quad b&=1\oplus f_1\oplus q_1,\\
  j=3:\quad b&=1\oplus f_2\oplus q_2,\\
  j=4:\quad b&=1\oplus f_1\oplus f_2\oplus f_3\oplus q_1\oplus q_2\oplus q_3
    \oplus(f_1\oplus f_2)(q_1\oplus q_2),\\
  j=5:\quad b&=f_3\oplus f_4\oplus q_3\oplus q_4\oplus M_f\oplus M_q
    \oplus f_1q_1\oplus f_2q_3\oplus f_3q_2\oplus f_3q_3\notag\\
    &\quad\oplus M_f(q_1\oplus q_2)\oplus(f_1\oplus f_2)M_q,
\end{align}
\end{subequations}
where $M_f=f_1f_2\oplus f_1f_3\oplus f_2f_3$ is the majority of
$f_1,f_2,f_3$ and $M_q$ is the majority of $q_1,q_2,q_3$. Written out as
sums of products, these four rules have 3, 3, 11, and 26 terms. The decoders
at depths six, seven, and eight have 79, 268, and 1,092 terms. The growth
reflects \cref{eq:app-divstep-shift-register}: the deciding bit sits above
the bit of every earlier decision in the excursion, and the multiplication by $q^{-1}$
carries through all of them.

The circuit implements equivalent forms of these rules. The determinant
identity of \cref{eq:app-divstep-det} fixes $g_v'\bmod2^j$ once $f'$ is known, so
a decoder may read a bit of $g_v'$ in place of bits of $f'$. The depth-two
decoder is \cref{eq:app-divstep-j2}, which costs two Toffolis. At depths
three and four, the rule is an XOR of bits of $q$, with no products, once
$f'\bmod16$ is fixed. From depth five on, the
rule contains products of bits of $q$.

These rules explain where the cutoff sits. Every reverse step must contain
the decoder for every depth below $j_{\mathrm{store}}$, whether or not that
step needs it. Each decoder is conditioned on a flag marking the
depth, so each term that reads a register bit costs at least one Toffoli. The benefit shrinks as the
cost grows: among ambiguous decisions in sampled walks, $50.1\%$ occur at depth
two, $25.0\%$ at depth three, $12.5\%$ at depth four, and $12.4\%$ at all
greater depths. Beyond depth four, each further depth roughly triples the
number of terms and removes half as much history as the depth before it.

\paragraph{Early stopping.}
A walk terminates when $g$ reaches zero: at that point $f=\pm1$ and the
inverse can be read from $f_v$. Different inputs terminate at different steps,
typically between steps 505 and 530 (\cref{fig:divstep}(b)), but the circuit
runs a fixed schedule of 533 steps. The steps after termination must leave
the state unchanged. Left alone, the recurrence would treat $g=0$ as an even
value, so it would keep doubling $f_v$ and incrementing $z$ until both outgrow
their registers.

The circuit therefore freezes a terminated walk. Each forward step tests
whether $g=0$. On the first step where the test succeeds, the circuit sets a
one-qubit frozen flag, and from then on the flag disables every update. The
reverse walk must clear the flag at that same step. A frozen state looks
identical at every later step, so the state alone cannot reveal when the walk
froze. The circuit writes the step index into a small register
when it sets the flag. Each reverse step compares the register with its own
index, and a match clears both the flag and the register.

This machinery costs Toffolis at every step: a test that every bit of
$g$ is zero, and one more control on every update. The test is most expensive in the
early steps, where $g$ is widest and termination is least likely. The first
498 steps omit the test, the flag, and the flag's controls. A walk
that terminates within those steps fails. The measured probability is
$1.4\times10^{-6}$ per inversion, and \cref{tab:divstep-failures} uses
$3\times10^{-6}$ as an upper bound.

A walk whose counter $z$ outgrows its register also fails, with probability
$9\times10^{-8}$ per inversion; the two events give the early-stopping row of
\cref{tab:divstep-failures}.

\paragraph{Removing the power-of-two scale.}
At termination, $ff_v$ equals the desired inverse times $2^t$, where $t$
counts the active steps, those before
the walk freezes (\cref{prop:app-divstep-scale}):
\begin{equation}
  f f_v\equiv2^t w^{-1}\pmod p.
  \label{eq:divstep-scaled-inverse}
\end{equation}
The stopping time $t$ depends on the input, so different branches of a
superposition carry different scales. Classical implementations avoid this by
running a fixed number of steps, which makes the scale a known constant; the
circuit instead freezes when $g=0$. A direct approach would count $t$ in a register, apply a
multiplication by $2^{-t}$ controlled on that register, and then uncompute the
count.

The circuit avoids this count. It sets $r=aff_v\equiv2^taw^{-1}\pmod p$, where
$a$ is the field value to be multiplied by the inverse. For the
slope of \cref{eq:modinv-slope}, $w=Q_x-P_x$ is the denominator and $a=Q_y-P_y$
is the numerator. The reverse walk then
undoes exactly $t$ active steps, and the frozen flag marks which steps are
active. The circuit attaches one modular halving of $r$ to each active reverse
step: $r\mapsto r/2$ if $r$ is even and $r\mapsto(r+p)/2$ otherwise. After the reverse walk, $r=aw^{-1}$, the
input $w$ is restored, and no record of $t$ remains.

\paragraph{Step-dependent qubit reuse.}
A na\"ive layout would give each of the four signed integers
$f,g,f_v,g_v$ its own permanent 256-bit register, consuming 1,024 qubits before
the counter, controls, or history stack are included. These registers do not
all need their full capacity simultaneously: as the walk progresses, $f$ and
$g$ need progressively fewer bits, while $f_v$ and $g_v$
need more. Each step halves $g$, and swaps alternate which value plays the
role of $g$. Each step also doubles a coefficient. The packed Euclidean construction of
\citet{luo2026space} and its specialization in
SI~\ref*{sec:modinv-eea-specialized} use the same fact: a shrinking remainder and
growing coefficient occupy opposite ends of one register, as visualized in
\cref{fig:modinv-walk}(a). The discovered divstep circuit applies this idea to
$f,g,f_v,g_v$ using two 256-qubit registers rather than four: $g$ and $f_v$ share the
register that initially holds $w$, and $f$ and $g_v$ share the other. Within each register, the growing coefficient claims
positions vacated by its shrinking partner. The width and position of every
value at every step are fixed when the circuit is built. A walk in which
a value outgrows its assigned width fails; this event is the integer-size
limit in \cref{tab:divstep-failures}.

\paragraph{Fused coordinate arithmetic.}
For the remaining coordinate arithmetic, the circuit does not call general
multiplication, squaring, or reduction circuits. It adds each term of the
coordinate formulas straight into the register that needs it. For the square, the
coordinate update needs $\lambda^2$ only while forming $R_x$. Splitting
$\lambda=\lambda_0+2^{128}\lambda_1$ gives
\begin{equation}
  \lambda^2=\lambda_0^2
    +2^{129}\lambda_0\lambda_1
    +2^{256}\lambda_1^2\pmod p.
\end{equation}
This expresses the 256-bit square as three products involving only 128-bit
halves. The circuit adds them one at a time directly into the coordinate
register and clears each temporary product before computing the next. It
therefore avoids both a register for the completed square and the second full
multiplication of the usual approach, which computes the square, copies it,
and uncomputes it.

The remaining coordinate products exploit the special form of the
prime $p$. Since
\begin{equation}
  2^{256}\equiv2^{32}+977\pmod p,
  \label{eq:modinv-sparse-fold}
\end{equation}
the circuit can replace any bits above bit 255 with a small correction to the low
bits instead of performing a general modular reduction. The circuit applies
this correction while it forms each product, so it never holds a complete
product that it must later clear.

We require the exact map in
\cref{eq:modinv-point-add-task}, including its phase and the return of every
ancilla to zero, except on at most a $2\times10^{-4}$ fraction of inputs~\footnote{We
draw $P$ and $Q$ independently and nearly uniformly from the points of the
curve, omitting the cases on which the affine formulas are incomplete.}. The 16-bit-window
algorithm of \citet{babbush2026kickmix} invokes the point adder 28 times; by a union
bound, the failure budget adds less than $0.6\%$ retry overhead.

The four divstep walks use 2.5 million Toffolis, $63\%$ of the circuit. The
field multiplications around the walks use 1.4 million, $35\%$ of the circuit. The fused square and the coordinate additions use the
remaining 0.10 million. The 935-qubit peak occurs during a walk. The algebraic identities behind the
recurrence, branch decoders, fused square, and sparse-prime reductions are
exact. Failures arise only from the finite
stopping-time, integer-size, history, carry, and comparison limits.

Reducing modulo the sparse prime adds a short constant, $2^{32}+977$ or a small
multiple of it, to the low end of a 256-bit register. In principle the carry from
this addition can travel up the whole register. In practice it rarely travels
far, because it keeps moving only while the bits above it are ones. The circuit
therefore propagates each such carry through only the low 58 bits (60 in the
fused square) and drops any carry that would go further. This happens with
probability about $2^{-26}$ per reduction. The comparison limit uses the same
idea: to decide whether a modular addition overflowed, the circuit compares only
the top 27 bits of the two operands, which gives the wrong answer only when those
bits all agree. \Cref{tab:divstep-failures} collects all contributions~\footnote{An exact worst-case analysis guarantees that every nonzero
input modulo $p$ terminates within 590 divsteps, compared with
the circuit's 533-step schedule~\cite{wuille2021safegcd}.
Within 533 scheduled steps, an excursion whose swap fires at counter value $z=D$ consumes $2D+1$ steps
and stores at most $D-3$ decisions, so a walk can store at most
$(533-1)/2-3=263$ decisions. The hatching in
\cref{fig:divstep}(b) extends to this ceiling.}; each entry is the per-use failure
rate times the number of uses in one point addition. Among 100,000 tests of the full circuit, four fail.

\begin{table}[htbp]
\caption{Contributions to the union bound on the point-addition
failure probability. Each entry is per point addition, which contains two
inversions. Probabilities for the walk limits are estimated from
200 million independent random 256-bit walks.}
\label{tab:divstep-failures}
\centering
\begin{tabular}{lr}
\toprule
\textbf{Failure source} & \textbf{Probability (upper bound)} \\
\midrule
Multiplier carry and comparison limits & $7.07\times10^{-5}$ \\
Stopping-time and integer-size limits & $7.56\times10^{-5}$ \\
Early-stopping and $z$-register limits & $6.18\times10^{-6}$ \\
Branch-history storage & $6.8\times10^{-6}$ \\
Fused-square carry limit & $6\times10^{-7}$ \\
\midrule
Union bound & $1.599\times10^{-4}$ \\
\bottomrule
\end{tabular}
\end{table}

\subsection{Specialized Euclidean construction}\label{sec:modinv-eea-specialized}

This subsection reports a second discovered circuit, at the low-qubit end of
the Toffoli--qubit frontier (\cref{tab:modinv-frontier}): 843 qubits against 935 for the
divstep circuit, at about five times the Toffoli count. Here the search does
not find a new inversion algorithm. It starts from the space-efficient
Euclidean point adder of \citet{luo2026space}, which we rebuilt as a
complete circuit, and specializes its schedule for the prime $p$. The
schedule is the fixed list of steps, and of bit positions within each step,
that the circuit runs.
Our reconstruction, the baseline below, uses 69.5 million Toffolis and 838 qubits. The discovered
specialization uses 843 qubits while reducing the Toffoli count
to 20.5 million, a factor of $3.4$. We use the same inputs and outputs, input
distribution, and failure budget as above.

To invert $w$ modulo $p$, the EEA starts from
$r_{-1}=p$ and $r_0=w$ and computes
\begin{equation}
  a_i=\left\lfloor\frac{r_{i-1}}{r_i}\right\rfloor,\qquad
  r_{i+1}=r_{i-1}-a_i r_i.
  \label{eq:modinv-eea-remainder}
\end{equation}
Alongside the remainders, it maintains coefficients
$t_{-1}=0,t_0=1$, with $t_{i+1}=t_{i-1}+a_i t_i$. When a remainder
reaches one, the corresponding coefficient gives the modular inverse up to
sign.

Each iteration of \cref{eq:modinv-eea-remainder} is an integer division. The
quotient $a_i$ of the two current remainders drives both updates: the
remainder update subtracts $a_i$ copies of one remainder from the other, and
the coefficient update adds $a_i$ copies of one coefficient to the other. The
quotients depend on the input and vary in size, so the circuit must compute
each one coherently before it can apply it. It performs each division by
binary long division. We call one elementary move of this division a
\emph{step}; a division whose quotient has $\ell$ bits
takes $4\ell$ steps. The total over an inversion depends on the input and is at most 1,616 for the prime
$p$, so the baseline schedules 1,616 steps.

The baseline circuit exploits the fact that the remainders $r_i$ shrink as the
coefficients $t_i$ grow. It stores a remainder and a coefficient from
opposite ends of the same register, and stores the succeeding pair
in a second such register. We call values that share a register in this
way \emph{packed}. Four short counters record where the
four values begin and end.
The circuit processes these packed values by scanning across their possible
bit positions. At each position, it uses the counters to decide whether that
bit belongs to the value currently being processed; we call this check a
\emph{counter test}. If it does, the circuit
performs the required one-bit arithmetic; otherwise it does nothing. The
fixed circuit must include this test and arithmetic for every position that
any input could require, even though a particular input uses only
a much smaller range of positions.

Each step runs several scans, one each for the remainder update, the coefficient
update, and the rearrangement of the packed values. The path
that the boundary between a remainder and a coefficient traces during an
inversion is the input's \emph{Euclidean walk}.

Each use of the inversion performs an in-place division or multiplication. For the slope, a
forward pass replaces $w$ by its inverse, one multiplication forms the
quotient in an accumulator, an $X$-basis measurement erases the old
numerator, and the reverse pass restores $w$. The second
inversion runs the same sequence in the opposite order to erase the slope.
Every step consists of in-place reversible operations on the packed registers
and counters, so the reverse pass is the inverse circuit and needs no
stored history. By contrast, the divstep circuit must
recover a decision at every step.

The discovered circuit keeps the baseline's packed register layout and reduces
the cost of its schedule mainly in three ways. First, it omits scan positions that
inputs rarely use. Second, it removes selected counter tests. Third, it
recompiles the scans of a step together so that they share intermediate
results. The first two changes fail on rare inputs, which count against the
$2\times10^{-4}$ failure budget per point addition. The third
is exact.

\paragraph{Fitted windows and the schedule cut.}
The baseline circuit includes arithmetic at every bit position required by its
worst-case bounds. Most inputs occupy a much narrower range at
each step. The specialized schedule therefore replaces each worst-case
range with a \emph{fitted window}: a range chosen separately for each step and fitted to the
positions that sampled walks use. The same fit shortens the schedule from
the worst-case 1,616 steps to 1,432 steps; we call this shortening the \emph{schedule cut}.
\Cref{fig:modinv-walk}(b) shows how closely sampled walks concentrate within
the windows, compared with the worst-case bounds.

\begin{figure}[t]
\centering
\includegraphics[width=\linewidth]{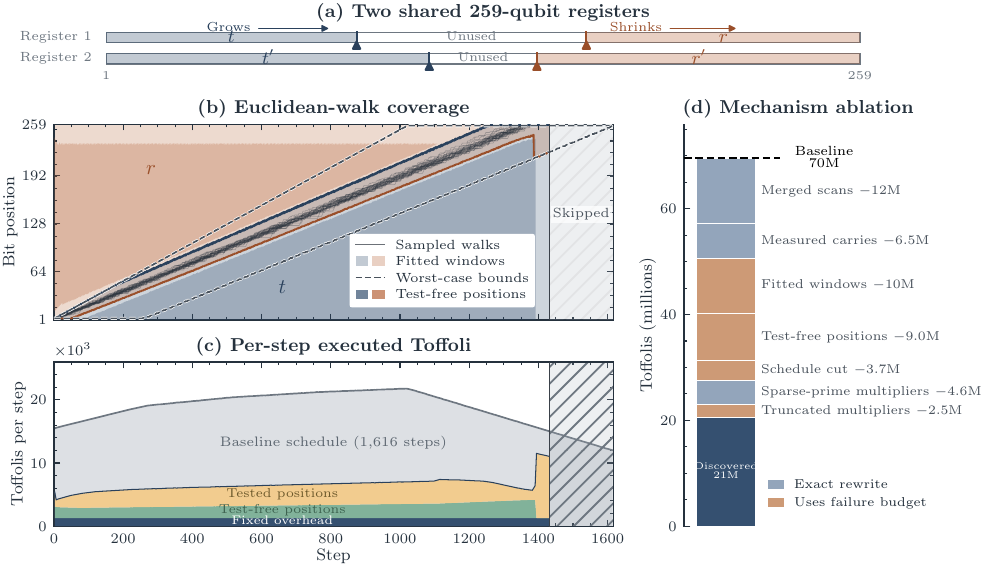}
\caption{Specializing the Euclidean schedule and its
contribution to the Toffoli reduction. (a) Two 259-qubit registers
store coefficients $t,t'$ from their low ends and remainders $r,r'$
from their high ends. Four counters mark the boundaries of these
values as the coefficients grow and the remainders shrink. (b) Each thin curve traces
the moving boundary between the shrinking remainder $r$ and growing
coefficient $t$ for one sampled walk. Light shading shows the fitted
windows retained by the discovered circuit, dashed curves show the exact
worst-case bounds, and dark shading marks the test-free positions, processed
without first testing the counters. The hatched region is removed by the
schedule cut from 1,616 to 1,432 steps. (c) Executed Toffolis per step.
The gray profile is the baseline schedule; the discovered circuit's profile splits into the
fixed overhead of a step, arithmetic at test-free positions, and arithmetic
at tested positions. (d)
Cumulative ablation from 70 million to 21 million Toffolis. Blue changes are exact, whereas brown changes count
against the failure budget.}
\label{fig:modinv-walk}
\end{figure}

\paragraph{Test-free positions.}
A fitted window still leaves a counter test at every position inside it, and
a counter test is expensive. In the baseline, a position whose arithmetic is
conditioned on a counter costs about ten Toffolis, several times the cost of
the arithmetic alone. Where every sampled input uses a position at a given
step, the schedule marks the position \emph{test-free}, and the circuit
performs its arithmetic unconditionally. Counter tests remain at the other
positions inside the window, near the moving boundaries of the values. On an
input that does not use a test-free position, the unconditional arithmetic
produces a wrong result. The two approximations remove different costs. A fitted
window deletes both the test and the arithmetic at the positions outside it.
A test-free position keeps the arithmetic and deletes the test.
\Cref{fig:modinv-walk}(c) shows the resulting per-step cost.

\paragraph{Merged scans and measured carries.}
The baseline circuit compiles the scans within a step separately, so
neighboring scans repeat some of the same comparisons and compute some of
the same control bits. Compiling the complete step as one operation removes
this repetition; we call this change \emph{merged scans}. The clearest
instance is the remainder update. The comparison that decides whether a
subtraction applies and the subtraction itself run over the same two
remainders, and the baseline compiles them as two passes. The discovered
circuit merges them into one compare-and-subtract scan whose intermediate
bits serve both operations. It then erases those bits by measurement, which
needs no Toffolis, instead of uncomputing them. The baseline also shifts a
register's contents, by an amount that depends on the input, as the values
change size. The discovered circuit renumbers the positions instead
of moving the qubits, which costs no gates. In the additions that
remain, a carry that only one position needs is likewise erased by
measurement (\emph{measured carries}). These changes are
exact.

\paragraph{Multiplier arithmetic.}
The coordinate update also requires modular squaring and multiplication. The
discovered circuit specializes these operations to the sparse prime $p$. The
exact part of the rewrite, the \emph{sparse-prime multipliers}, uses
the fold of \cref{eq:modinv-sparse-fold}, as in SI~\ref*{sec:modinv-divsteps}.
The approximate part, the \emph{truncated multipliers}, limits how far a carry
propagates and how many bits a comparison inspects.

\Cref{fig:modinv-walk}(d) shows the cumulative Toffoli reduction. The two
inversions, including their forward and reverse passes, together account for
$98\%$ of the Toffolis.

We estimate the failure probability empirically. Across
10 million sampled walks, 72 use a position outside a fitted window and 258 do
not use a test-free position, a combined rate of about $6.6\times10^{-5}$ per point addition. The schedule cut
adds $1.3\times10^{-5}$, and the truncated multipliers add at most
$2\times10^{-6}$. The total is about $8\times10^{-5}$ per point addition. Among 100,000 tests of the full circuit, four fail.

\section{Numerical function evaluation}\label{sec:function-evaluation}

Elementary-function evaluation has been a central problem in numerical
computing since the earliest mathematical software libraries. The routines
conventionally collected in \texttt{libm}, the mathematical-function library
supplied by C and Unix systems, evaluate roots, exponentials, logarithms, and
trigonometric functions by combining range reduction, table lookup,
polynomial approximation, and iterative refinement~\cite{cody1980software}.
These functions also appear throughout quantum algorithms. They determine
controlled rotation angles in quantum linear-systems
algorithms~\cite{harrow2009linear,haner2018optimizing}, set interaction phases in
real-space Hamiltonian
simulation~\cite{kivlichan2017bounding,su2021fault}, and enter the amplitudes prepared
for Hamiltonian simulation~\cite{babbush2018encoding}.

These classical techniques cannot simply
be ported into coherent quantum circuits. Data-dependent choices must be
compiled into a fixed gate sequence, and all temporary state must ultimately
be returned to $\ket{0}$. The latter reversibility constraint is particularly
costly for numerical routines because many classical methods proceed through
a long sequence of large, high-precision intermediate values. An ordinary
implementation can overwrite each value and reuse the same storage as soon as
the next value is formed. A reversible implementation must instead retain
enough information to reconstruct the overwritten value, or recompute it when
clearing the circuit. Consequently, composing individually efficient routines
can accumulate many large live registers and may require running the complete
computation backward after its result has been consumed. Iterative Newton-style
methods repeatedly replace one full-precision estimate with another, so their
reversible implementations carry large intermediate registers through every
step~\cite{bhaskar2016scientific}. Floating-point representations add exponent bits
to every such register to track a wide range of
magnitudes~\cite{haner2018floating}. The approximation, fixed-point
representation, and uncomputation schedule therefore become inseparable parts
of circuit design~\cite{haner2018optimizing}.

We consider three numerical subroutines:
Harrow--Hassidim--Lloyd (HHL) rotation-angle evaluation ($\arcsin$),
Coulomb-potential evaluation ($1/\sqrt{\|x\|^2}$), and Gaussian
evaluation ($e^{-x^2}$)~\cite{harrow2009linear,kivlichan2017bounding,babbush2016exponentially}. Each
implements an additive fixed-point map
$\ket{x}\ket{k}\mapsto\ket{x}\ket{k+\widetilde f(x)}$, with a uniform error
bound over its promised domain~\footnote{Throughout this section,
\texttt{round} denotes nearest-integer rounding, with half-integer ties away
from zero. Because each result is accumulated modulo $2^n$ in an $n$-bit
register, we measure the error between $a$ and $b$ by the circular distance
$\min_{m\in\mathbb Z}|a-b+m2^n|$, so that values that differ only by
wraparound count as close. Behavior outside the promised domain is
unrestricted.}. The additive form serves both common uses of the result. An
ordinary accumulator $k$ holds the value for later bitwise-controlled
operations, while a phase-gradient accumulator converts the addition directly
into a phase.

Our baselines use the piecewise-polynomial construction of H\"aner,
Roetteler, and Svore (HRS)~\cite{haner2018optimizing}. HRS partitions the
domain, fits a low-degree polynomial on each interval, and coherently loads
the selected polynomial's coefficients. It evaluates the polynomial in general fixed-point arithmetic by Horner's rule, which uses one
multiplication per degree. It adds the
result to the output and
reverses the computation to clear its temporary registers. Its cost has four
sources: locating an interval with arbitrary boundaries; the wide
multiplications of the Horner polynomial-evaluation stage; the registers that hold
intermediate products and the polynomial value; and the reverse pass that
erases the loaded coefficients. For each task we
tune the polynomial degree, fixed-point format, and published recomputation
trade-offs, and widen each interval until it reaches the error bound. We then
use the complete gate-level implementation with the lowest $TQ$-cost.

The three discovered circuits still implement piecewise-polynomial
approximations. They keep the HRS approximation family but reorganize it
for reversible execution, addressing
the four costs in turn. They choose finer
partitions whose intervals can be identified from the input bits. No comparison
is then needed, and a polynomial of lower degree on fewer input bits suffices:
the HHL and Gaussian circuits replace cubics by quadratics, and the Coulomb
circuit replaces quadratics by lines.
Where the intervals differ in width, as in the HHL circuit, a rearrangement of
the input bits maps them into one shared local coordinate, so that one
arithmetic kernel serves every interval. This is the conditional routing of the
main text: a cheap, interval-dependent rearrangement of the data surrounds one
shared kernel. The arithmetic accumulates directly into its destination,
avoiding registers for full products and for the function value.
MBU (Methods) then clears the loaded coefficients
without reversing the lookup.

One supporting technique makes these choices safe. An offline search fits the
coefficients to the exact quantized and truncated integer computation that
the circuit performs, rather than rounding an ideal real-valued approximation
afterward. Coefficient widths, omitted product columns, shifts, truncation,
and rounding therefore share one end-to-end error budget.

All three circuits use these techniques. They differ in how the input reaches
the arithmetic. The HHL circuit, which we describe in full, uses intervals of
unequal width and cyclically shifts the input bits into a shared coordinate. The Coulomb
circuit first builds the squared distance and normalizes it. The Gaussian
circuit folds its input onto the nonnegative axis, prunes its lookup, and
reuses coefficient qubits.

\subsection{HHL rotation angle}\label{sec:case-hhl-angle}

After phase estimation, HHL applies an eigenvalue-dependent rotation whose
angle is $\theta=\arcsin(C/\widetilde\lambda)$; conditioning on the rotated
ancilla reweights each eigencomponent by
$1/\widetilde\lambda$~\cite{harrow2009linear}. We take $C=1/2$ and represent
the eigenvalue estimate by a 21-bit integer $X$, with
$x=X/2^{16}\in[1,32)$. The circuit adds the 33-bit angle
\begin{equation}
  f(X)=\mathtt{round}\!\left(
    2^{32}\arcsin\!\left(\frac{2^{15}}{X}\right)
  \right)
  \label{eq:hhl-fixed-angle}
\end{equation}
to an accumulator $k$. We require error below $10^{-7}$ radians throughout the domain.

The tuned HRS baseline uses cubic fits on 35 intervals and general 44-bit
fixed-point arithmetic. It meets the error bound with 16,000 Toffolis and 300
qubits. Locating the interval requires coherent comparisons against arbitrary
boundaries, and each evaluation performs three multiplications on 44-bit numbers.

The discovered circuit restructures this dataflow with three
transformations, illustrated in \cref{fig:hhl-spine}.

\paragraph{Power-of-two-aligned intervals.}
The function $\arcsin(1/(2x))$ changes rapidly near $x=1$ and progressively
more slowly as $x$ increases. An arbitrary adaptive partition could follow
this variation, but reversibly locating one of its intervals would require
comparisons against general boundaries. The discovered circuit instead uses
narrow intervals near $x=1$ and wider ones where the function is flatter,
while restricting every interval to the power-of-two-aligned, or
\emph{dyadic}, form $[L,L+2^b)$ with $L$ a multiple of $2^b$. We call such an
interval a \emph{leaf}, because these intervals are the leaves of a binary
tree over the input bits. The leading bits of the input then determine the
leaf.

Because selection is cheap, a much finer partition is affordable: the circuit
uses 345 intervals of widths $2^8$ through $2^{15}$, whereas the baseline uses 35
arbitrary intervals. The finer partition simplifies the arithmetic in two ways.
A quadratic suffices instead of the baseline cubic, reducing the number of
multiplications, and each polynomial depends only on the narrow local
displacement $\delta=X-L$, reducing its operand widths. A uniform partition
at the finest width would require 7,936 intervals
(\cref{fig:hhl-spine}(b)), so the dyadic tree obtains these benefits without
paying for uniformly fine resolution.

\paragraph{Shared local coordinate.}
After selecting a leaf, a conventional polynomial-evaluation stage would form $\delta=X-L$ and
rescale it according to the leaf width, requiring arithmetic and temporary
storage. Moreover, leaves of different widths initially present differently
scaled inputs to the polynomial-evaluation stage. The circuit replaces this
coordinate arithmetic with an in-place cyclic shift of the input bits. For
$X=L+\delta$ in a width-$2^b$ leaf, cyclically shifting the low 15 bits
produces
\begin{equation}
  d=2^{15-b}\delta+c_L.
  \label{eq:hhl-shared-coordinate}
\end{equation}
The varying bits $\delta$ move to a common scale, while the fixed leaf bits
wrap around to form $c_L$. Because $c_L$ is constant within the leaf, the
offline fit absorbs it into the coefficients. Thus every leaf feeds the same
15-bit polynomial-evaluation stage without subtraction, arithmetic rescaling, or a coordinate
register (\cref{fig:hhl-spine}(b)). For example, in a width-$2^{12}$ leaf,
the low 15 bits consist of three bits fixed by the leaf followed by the 12
varying bits of $\delta$. Shifting those bits cyclically left by three places produces
$d=2^3\delta+c_L$.

\begin{figure}[t]
\centering
\includegraphics[width=\linewidth]{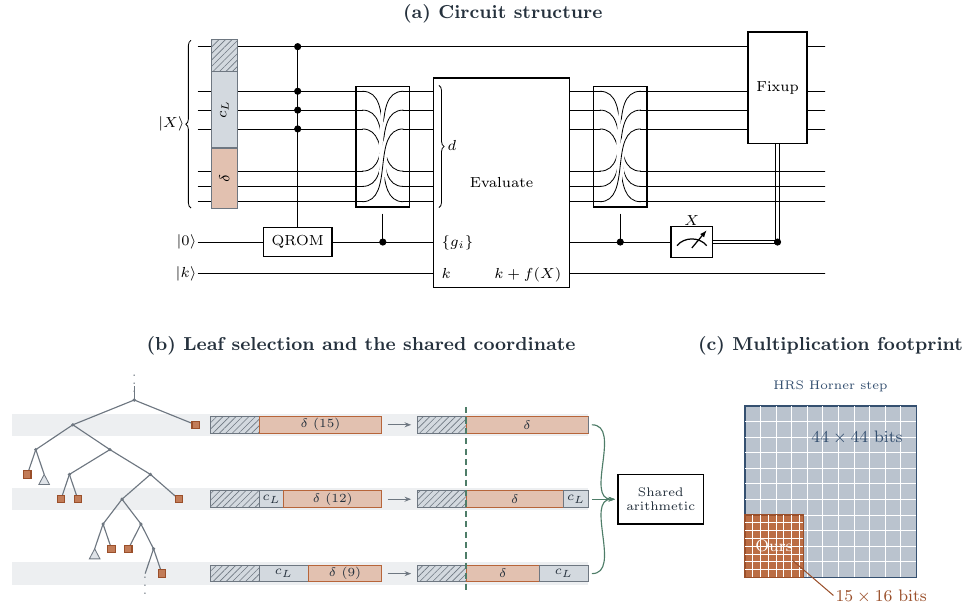}
\caption{Co-designed HHL rotation-angle circuit. (a) Complete reversible
dataflow: a dyadic lookup selects a local quadratic and a cyclic shift, the input is
mapped into a shared local coordinate, and the quadratic is accumulated
directly into $k$. The inverse shift restores $X$; measurement and a shared
phase correction clear the consumed coefficients without reversing the lookup. (b)
Dyadic leaves and the shared coordinate. Power-of-two-aligned
leaves are selected by the leading input bits. Shifting each leaf's varying bits
$\delta$ to a common scale allows leaves of different widths to feed one
15-bit polynomial-evaluation stage without subtraction or arithmetic rescaling. (c) Arithmetic
consequence: the circuit multiplies the 15-bit coordinate by a 16-bit slope,
compared with $44\times44$ bits for each multiplication in the general HRS Horner
polynomial-evaluation stage.}
\label{fig:hhl-spine}
\end{figure}

\paragraph{Narrow direct accumulation.}
For each leaf, the coefficient table defines the integer quadratic
\begin{equation}
  q_L(d)=g_0-d(g_1-g_2d),
  \label{eq:hhl-integer-quadratic}
\end{equation}
with intercept $g_0$, slope $g_1$, and curvature $g_2$.
A general Horner polynomial-evaluation stage multiplies large fixed-point numbers, retains work
registers for those products, and materializes the polynomial value before
adding it to the output. The shared polynomial-evaluation stage instead consumes the selected
coefficients in place. It first overwrites the loaded slope $g_1$ with the
inner Horner value, then accumulates the outer product directly into $k$:
\begin{equation}
  t\gets g_1-g_2d
  \quad\text{(stored over $g_1$)},
  \qquad
  k\gets k+g_0-dt.
  \label{eq:hhl-inplace-update}
\end{equation}
No separate register ever holds the completed polynomial or either full
product.

The products themselves are also truncated: the circuit omits the low-order
partial-product columns whose combined effect stays inside the error budget.
The offline search fits the integer coefficients $g_i$ to the exact updates in
\cref{eq:hhl-inplace-update}, including the truncated columns.

The worst-case error over all inputs is $9.95\times10^{-8}$ radians. The circuit uses
1,348 Toffolis and 115 qubits, for a $TQ$-cost of 160,000. Relative to the tuned HRS
baseline, this reduces the Toffoli count by $12\times$, the qubit count by
$2.6\times$, and the $TQ$-cost by $32\times$.

\subsection{Coulomb potential}\label{sec:case-coulomb}

Real-space simulation repeatedly applies phases proportional to the inverse
distance between particle pairs~\cite{kivlichan2017bounding,su2021fault}. For
distinct points $r_i,r_j\in\{0,\ldots,255\}^3$, our circuit adds
\begin{equation}
  F(r_i,r_j)=\mathtt{round}\!\left(
    \frac{2^{24}}{
      \sqrt{\sum_{c\in\{x,y,z\}}(r_{i,c}-r_{j,c})^2}}
  \right)
\end{equation}
to a 25-bit accumulator $k$. We require error below
$2\times10^{-6}$ in the inverse distance. The tuned HRS
baseline normalizes the squared distance and evaluates quadratics over 16
intervals, using 6,355 Toffolis and 252 qubits. The discovered
circuit instead builds the squared distance in a single accumulation, replaces the quadratics by a
finer table of lines, and erases that larger lookup cheaply.

\paragraph{Direct squared-distance accumulation.}
The inverse-square-root approximation begins from the exact squared distance
$d^2=\sum_c(r_{i,c}-r_{j,c})^2$. A conventional implementation would form
the three coordinate squares in separate product registers and then add them.
The discovered circuit instead adds the partial products of all three
squares directly into one accumulator. This yields $d^2$ exactly, without three product
registers or a separate summation stage.

\paragraph{A finer table of lines.}
The central approximation trade-off is to spend more lookup entries in order
to use a lower-degree polynomial. HRS uses 16 intervals and evaluates a quadratic in each. The discovered
circuit uses 128 intervals and evaluates only a line in each: the larger lookup is
cheaper than the additional reversible multiplication required by a
quadratic.

Both constructions first write $d^2=2^e u$, with mantissa $1\le u<2$, so that
\begin{equation}
  \frac{1}{\sqrt{d^2}}=2^{-e/2}\frac{1}{\sqrt{u}}.
\end{equation}
A few flag qubits record the scale $2^{-e/2}$. The parity of $e$ and the six
leading bits of the mantissa select one of 128 lines, and the next six bits
give a local coordinate. The remaining low bits are irrelevant within the
error bound.
The selected line therefore needs at most six controlled additions rather
than a general multiplication, after which the circuit restores the scale and
adds the result to $k$. As in the HHL circuit, the coefficients are fitted to
the exact integer computation.

The larger table is affordable because MBU of a lookup~\cite{berry2019qubitization}
removes it with 17 Toffolis, instead of the 126 that reversing it
would take.

The worst-case error over all attainable squared distances is
$1.91\times10^{-6}$ in the inverse distance. The
circuit uses 1,140 Toffolis and 142 qubits, for a $TQ$-cost of 160,000.
Relative to the tuned HRS baseline, this reduces the Toffoli count by $5.6\times$,
the qubit count by $1.8\times$, and the $TQ$-cost by $9.9\times$.

\subsection{Fixed-point Gaussian}\label{sec:case-gaussian}

The Gaussian $e^{-x^2}$ enters chemistry simulation through on-the-fly
evaluation of molecular basis functions~\cite{babbush2016exponentially} and
Gaussian state preparation~\cite{mcardle2022state}. The input is a 39-bit
two's-complement integer $X_s$ representing $x=X_s/2^{32}$, and the promised
domain is $x\in[-8,8)$. The
circuit adds
\begin{equation}
  f(X_s)=\mathtt{round}\!\bigl(2^{32}e^{-x^2}\bigr)
  \label{eq:gaussian-fixed-value}
\end{equation}
to a 39-bit accumulator with error below $10^{-7}$. The tuned HRS baseline first
forms $x^2$ and evaluates cubics over 32 intervals using general 40-bit
arithmetic, for 24,000 Toffolis and 362 qubits.

The discovered circuit never forms $x^2$. It folds the input onto the
nonnegative axis and approximates $e^{-y^2}$ directly. For a negative
two's-complement input, forming $|x|$ would complement the data bits and add
one. The circuit omits that addition, so the complement, which needs only
Clifford gates, instead represents $y=|x|-2^{-32}$. This removes an entire carry chain, and the resulting
one-unit offset lies within the error budget.

\paragraph{Pruned coefficient lookup.}
Ten bits of $y$ directly select one of 1,024 equal intervals of width
$2^{-7}$ that cover $[0,8)$, while the lower bits give the local coordinate. The Gaussian's decay divides
these intervals into three regimes: 455 require quadratics, the next
59 need only constants, and the remaining 510 return zero. A conventional
QROM would nevertheless spend gates on every interval. The variable-spaced QROM
of \citet{sanders2020compilation} instead stops whenever an entire subtree
shares one value. Pruning the constant and zero intervals reduces the lookup
from 1,022 Toffolis to 495.

\paragraph{Direct quadratic accumulation.}
On a quadratic interval, the lookup returns an intercept, slope, and curvature. A
truncated product updates the slope register in place, after which the circuit
adds the intercept and accumulates the final product directly into $k$.
Consequently, it never forms $y^2$ or a completed product, and it needs no separate
Gaussian-value register. The coefficients are again fitted to the exact
integer computation.

After MBU, the freed coefficient qubits serve as
carry ancillas for the later arithmetic~\cite{berry2019qubitization}. Together
with direct accumulation, this reuse lowers the peak qubit count from 362 in the
baseline to 139.

The worst-case error over the full domain is $9.99\times10^{-8}$. The circuit uses 1,583 Toffolis and 139 qubits, for a
$TQ$-cost of 220,000. Relative to the tuned HRS baseline, this reduces the Toffoli count
by $15\times$, the qubit count by $2.6\times$, and the $TQ$-cost by $39\times$.

Across the three tasks, these techniques reduce $TQ$-cost by
$9.9\times$--$39\times$ relative to the tuned HRS baselines.

\section{Quantum chemistry simulation}
\label{sec:case-alias-fe2s2}

In Pauli-based electronic-structure simulation, the electronic Hamiltonian is
a weighted sum of Pauli operators:
\begin{equation}
  H=\sum_{\ell=0}^{L-1} w_\ell P_\ell,
  \label{eq:alias-hamiltonian-sum}
\end{equation}
where each $P_\ell$ is a tensor product of $I$, $X$, $Y$, and $Z$. Qubitization encodes
this Hamiltonian in a unitary circuit. Such an encoding is needed because $H$ is not
itself unitary and therefore cannot simply be applied to a quantum state.
Writing $\lambda=\sum_\ell|w_\ell|$, an ideal \textsc{Prepare} creates a superposition of
term labels,
\begin{equation}
  \textsc{Prepare}\ket{0}
  =\sum_\ell\sqrt{|w_\ell|/\lambda}\ket{\ell},
  \label{eq:ideal-prepare-state}
\end{equation}
while \textsc{Select} applies the signed Pauli operator with that label:
\begin{equation}
  \textsc{Select}\ket{\ell}\ket{\psi}
  =\ket{\ell}\operatorname{sgn}(w_\ell)P_\ell\ket{\psi}.
\end{equation}
Preparing this superposition, applying the selected term, and reversing the
preparation embeds the normalized Hamiltonian as a block of a larger unitary:
\begin{equation}
  (\bra{0}\otimes I)
  \textsc{Prepare}^{\dagger}\textsc{Select}\textsc{Prepare}
  (\ket{0}\otimes I)
  =\sum_\ell\frac{w_\ell}{\lambda}P_\ell
  =\frac{H}{\lambda}.
  \label{eq:qubitization-block-encoding}
\end{equation}
Qubitization combines this block encoding with a reflection about the prepared
state to form a quantum walk operator $U$ with eigenvalues $e^{\pm i\arccos(E/\lambda)}$, where $E$ ranges over the eigenenergies of $H$. Phase estimation on $U$ therefore yields the energies; because it applies $U$ many times, the cost of one application sets the cost of the simulation.

We use the $\mathrm{Fe_2S_2}$ integral data published
by \citet{li2017spinprojected}, which models 30 active electrons in 20 spatial
orbitals. Iron--sulfur clusters are important strongly correlated targets, in
which many electronic configurations contribute significantly to the
ground state~\cite{sharma2014iron,low2025spectrum}. Each spatial orbital has a
spin-up and a spin-down mode, so these 20 orbitals become 40 spin-orbitals and,
after the Jordan--Wigner mapping, 40 qubits. Discarding terms with $|w_\ell|<10^{-4}$ Ha, following
the truncation practice of sparse qubitization~\cite{berry2019qubitization}, leaves
$L=41{,}052$ signed Pauli operators.

State-of-the-art resource estimates for this system use double-factorized
tensor hypercontraction (DFTHC), which compresses the Hamiltonian into factors
and orbital rotations before circuit construction~\cite{low2025spectrum}.
However, studying the DFTHC walk operator would
require validating networks of continuously parameterized Givens rotations
acting on superposed states, which goes beyond the \kickmix{} framework. We
therefore focus on the more direct sparse compilation, which stores the
Hamiltonian as an explicit list of its Pauli terms.

In this section we study the data-dependent part of one application of $U$: the alias
selection inside \textsc{Prepare}, the \textsc{Select} application, and the
erasure of the selection. Directly
preparing the unequal amplitudes in \cref{eq:ideal-prepare-state} is expensive,
so a standard construction uses the alias
method~\cite{walker1977alias,vose1991alias,babbush2018encoding,berry2019qubitization}.
Classical preprocessing spreads the probabilities $|w_\ell|/\lambda$ over $L$
equal-sized buckets, one per term. Bucket $i$ is shared between term $i$ and
one other term $\operatorname{alt}(i)$, and a 16-bit threshold
$\operatorname{keep}(i)$ sets the split. The circuit draws a bucket $i$ and a
16-bit value $\tau$ uniformly. It selects term $i$ if
$\tau<\operatorname{keep}(i)$ and term $\operatorname{alt}(i)$ otherwise:
\begin{equation}
  h(i,\tau)=\mathbf{1}[\tau<\operatorname{keep}(i)],
  \qquad
  \operatorname{sel}(i,\tau)=
  \begin{cases}
    i & \text{for } h=1,\\
    \operatorname{alt}(i) & \text{for } h=0.
  \end{cases}
  \label{eq:walk-select-definition}
\end{equation}
The preprocessing chooses the alternate terms and thresholds so that, across
all pairs $(i,\tau)$, each term $\ell$ is selected a number of times
proportional to its weight, rounded to 16 bits. Thus choosing $i$ and $\tau$
uniformly produces the rounded distribution.
Let $U_{\mathrm{alias}}$ denote the reversible
computation of $h$ and $\operatorname{sel}$ from $i$ and $\tau$. The
data-dependent circuit studied here is
$U_{\mathrm{alias}}^\dagger\textsc{Select}U_{\mathrm{alias}}$: it computes the
selected term, applies its signed Pauli operator, and erases the selection,
leaving $i$ and $\tau$ unchanged. The uniform-superposition preparation of
$i$ and $\tau$, its reversal, and the walk reflection lie outside this circuit.

Each table row stores its threshold, its alternate
index, and the signs of
both possible coefficients. We call this 34-bit record the
\emph{alias word}. Write $\theta_\ell=0$ for a positive coefficient and
$\theta_\ell=1$ for a negative one.

To state the required action on the system register $x$, observe that on a
computational-basis state, $X$ flips a bit, $Z$ contributes a sign determined
by that bit, and $Y$ does both. For term $\ell$, let $a_\ell$ mark the qubits
on which $P_\ell$ contains $X$ or $Y$, and let $z_\ell$ mark those on which it
contains $Z$ or $Y$. Write $n_Y(P_\ell)$ for the number of $Y$ factors in
$P_\ell$. Every Pauli operator in the table contains an even number of $Y$
factors, so its phase is real:
\begin{equation}
  P_\ell\ket{x}
  =(-1)^{\eta_\ell\oplus
      \operatorname{parity}(x\mathbin{\wedge}z_\ell)}
    \ket{x\oplus a_\ell},
  \qquad
  \eta_\ell=\frac{n_Y(P_\ell)\bmod4}{2}.
\end{equation}
The required map is therefore
\begin{equation}
  \ket{i}_{16}\ket{\tau}_{16}\ket{x}_{40}
  \longmapsto
  (-1)^\phi
  \ket{i}_{16}\ket{\tau}_{16}
  \ket{x\oplus a_{\operatorname{sel}}}_{40},
  \label{eq:alias-select-oracle}
\end{equation}
where
\begin{equation}
  \phi=\theta_{\operatorname{sel}}\oplus\eta_{\operatorname{sel}}
  \oplus\operatorname{parity}(x\mathbin{\wedge}z_{\operatorname{sel}}).
\end{equation}
The comparison bit $h$ and the selected index do not appear in this map: they
are intermediate values that the circuit must return to zero.
We require this map exactly for every $i<41{,}052$, every $\tau$, and every
40-bit system state.

The baseline combines the circuits used in sparse
qubitization~\cite{babbush2018encoding,berry2019qubitization}. QROM loads the 34-bit alias word, merging repeated alias words~\cite{sanders2020compilation}. A comparator
then calculates $h(i, \tau)$ to choose between $i$ and $\operatorname{alt}(i)$
and apply the selected coefficient sign. A second, independent traversal over
$\operatorname{sel}$ applies the selected Pauli operator to the 40 system
qubits. The baseline then erases the selected index, the comparison bit $h$, and
the alias word.

The lookup and the row-by-row \textsc{Select} traversal cost 13,000 and 41,000
Toffolis, respectively; together they contribute more than $97\%$ of the
total. The remaining comparison, selection, and cleanup bring the
baseline to 55,000 Toffolis and 139 qubits. The main opportunity is
to shorten the two table traversals.

For arbitrary, unstructured table data, QROM has no pattern to exploit: in the
worst case, every address may store an unrelated word. However, this table
is not arbitrary. Its Pauli operators and coefficients come from a physical
system with exact spin symmetries and repeated orbital environments, so many
coefficients are equal and many Pauli operators are related. We therefore
choose the table to suit the circuits. Two degrees of
freedom make this possible. First, permuting terms together with their
coefficients and Pauli operators leaves
\cref{eq:alias-hamiltonian-sum} unchanged. Second, many different alias tables
realize the same rounded distribution. The preprocessing uses the first freedom
to group Pauli operators related by fixed Clifford transformations into
blocks, shortening \textsc{Select}. It uses the second to place equal
alias words into repeated patterns, shortening QROM.

\paragraph{Frame-block \textsc{Select}.}
Many Pauli operators in this Hamiltonian are related by fixed Clifford
transformations, which we call \emph{frames}. Suppose, for example, that two commuting frames
generate four related Pauli operators from one representative. Instead of
selecting four operators independently, the circuit selects the representative
once and uses two bits of the address (the selected row index) to
control the two frames. More
generally, $k$ such bits can identify up to $2^k$ operators in a block. The
circuit pays for one block representative plus its controlled
frames, rather than every Pauli operator separately.

The circuit applies a frame by conjugation. It applies $\sigma^\dagger$ to the system
register under a control, then the representative $P$, then $\sigma$ under the same control.
The result is $\sigma P\sigma^\dagger$ when the control is set and $P$
otherwise. Each frame is applied once for the whole table rather than once per
block. Blocks that use the same frames are contiguous, so a few comparisons of
the address with fixed boundaries determine which frames are active.

These relations have a physical origin: the model contains two
equivalent spin channels, two similar iron sites, and corresponding orbitals
on repeated parts of the cluster. These repetitions create Pauli operators that
are related by a fixed transformation. Each frame is a permutation or change of basis of the
spin-orbitals. The Jordan--Wigner mapping turns it into a fixed Clifford
transformation of the 40 system qubits.

Two of the frames used here come from exact spin symmetries of the Hamiltonian.
Because the model treats the spin-up and spin-down channels
identically, $\sigma_{\alpha\beta}$ exchanges the spin-up and spin-down modes
of every spatial orbital. Spin-rotation symmetry produces a second pairing
after the Jordan--Wigner mapping: $\sigma_{XY}$ exchanges the $X$ and
$Y$ factors and adjusts the accompanying $Z$ strings required by that
mapping. Both transformations preserve the corresponding coefficient
magnitudes.

The other five frames come from parts of the cluster that resemble each
other; unlike the two spin frames, they are not symmetries of the full
Hamiltonian. The frame $\sigma_{\mathrm{Fe}}$ is the approximate left--right
reflection of the cluster, which exchanges the two iron centers and their
surrounding orbitals. The frame $\sigma'_{\mathrm{Fe}}$ combines this
reflection with several local orbital swaps and matches Pauli operators that
the bare reflection misses. The frames $\sigma_{\mathrm{thio}}$,
$\sigma_{3d}$, and $\sigma_{\mu\mathrm{S}}$ exchange corresponding orbitals on
the terminal thiolates, the iron $3d$ shells, and the bridging sulfur atoms.
These sites have similar but not identical electronic environments: the
permutation often maps one term's Pauli operator onto another's, but the
coefficients differ. Only the Pauli relation $P_{\ell'}=\sigma P_\ell\sigma^\dagger$ must hold exactly; $w_{\ell'}$
and $w_\ell$ remain independent entries in the alias table.

The preprocessing thus finds a commuting family of seven frames and orders related
Pauli operators into blocks whose sizes are powers of two, so that address bits
index the operators within a block. The 41,052 terms form 3,221 blocks. Selecting the block representatives costs 3,327
Toffolis. Applying the seven frames and determining which are active costs 534.
\Cref{fig:fe2s2-blocks} shows the resulting structure.

\begin{figure}[t]
\centering
\includegraphics[width=\linewidth]{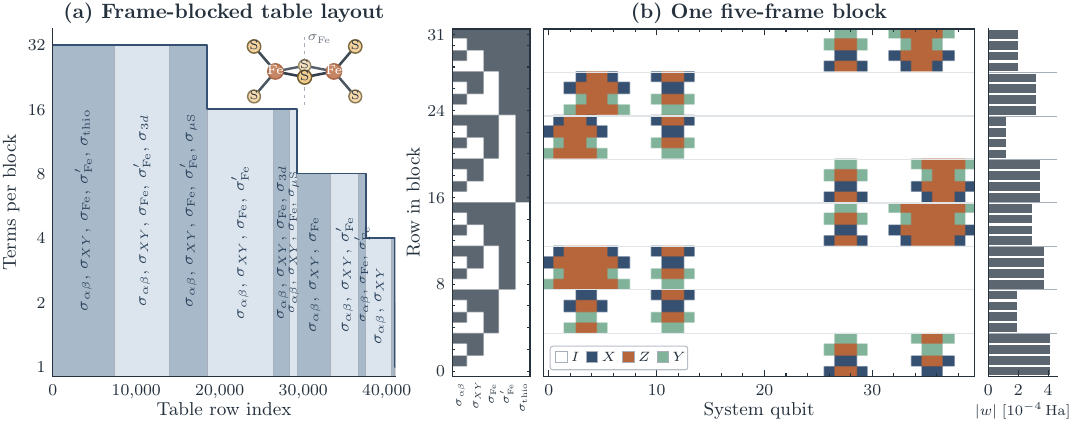}
\caption{Frame structure of the $\mathrm{Fe_2S_2}$ alias table.
(a) The preprocessing places Pauli operators related by fixed Clifford frames into
contiguous blocks. This groups the 41,052 table rows into 3,221 blocks that
\textsc{Select} handles as units. Band height gives the number of terms in
each block, and the labels give the frames used to generate them.
Inset: the $\mathrm{Fe_2S_2}$ cluster core (thiolate substituents
omitted); the dashed line marks the iron-site reflection
$\sigma_{\mathrm{Fe}}$.
(b) One block of 32 Pauli operators generated from a representative operator
using five frames. Each row of the left matrix specifies which frames are
applied; a dark cell means that the frame named below its column is included.
The center shows the resulting Pauli operator on the 40 system qubits, and the
bar on the right gives its coefficient magnitude. The spin frames
$\sigma_{\alpha\beta}$ and $\sigma_{XY}$, shown in the first two columns,
preserve coefficient magnitudes and therefore produce equal-weight groups of
four. The other three frames need not preserve the coefficients.}
\label{fig:fe2s2-blocks}
\end{figure}

\paragraph{Alias table with repeated rows.}
Recall that each alias-table row is a probability bucket split between its own
term and one alternate term. A term whose weight is smaller than one bucket
leaves empty space in its row; a term whose weight is larger than one bucket
fills such space in other rows.

A standard alias construction may pair underfilled rows with heavy terms in
many different ways. Our preprocessing chooses these pairings to make
nearby rows identical. For example, if four light rows of equal weight must draw their
alternate terms from two heavy terms $A$ and $B$, an ordinary schedule might
alternate $A,B,A,B$; ours prefers $A,A,B,B$. Rows 1--2 then share a threshold, an alternate term, and hence an
alias word, as do rows 3--4.

This strategy is particularly effective here because the two exact spin
frames produce many groups of four equal weights (\cref{fig:fe2s2-blocks}(b)). The coefficients also span a broad range, so heavier
terms can supply the missing weight of many lighter rows. Because the pairing only redistributes each term's weight among
buckets, it preserves the rounded distribution and creates long
runs of identical alias words without breaking the frame blocks. Of the 20,374
$\sigma_{\alpha\beta}$-related row pairs it places next to each other, 18,463 store
identical alias words.

The alias-table QROM realizes this saving by pruning the binary tree over its address $i$.
Whenever the two halves of a subtree contain the same sequence of alias words,
the QROM visits one copy and skips the address bit that distinguishes them. For example, the
eight-row pattern $A,A,B,B,A,A,B,B$ consists of two identical halves, so the
top address bit is irrelevant and the circuit looks up only $A,A,B,B$.
Together, this scheduling and repeated-subtree pruning reduce the cost of the alias-table QROM to
10,000 Toffolis, compared with 41,000 for a row-by-row traversal and 13,000 for the
baseline lookup with merged repeated alias words.

The rest of the circuit follows the baseline. A comparator
computes $h$, a controlled swap moves the selected index into place, and both
are reversed after \textsc{Select}, for 64 Toffolis. The circuit then clears
the $\operatorname{keep}$ and $\operatorname{alt}$ fields by MBU and corrects
the resulting phase in one pass over the address $i$, for 782 Toffolis.

We test every table row and a further $2^{24}$ uniformly random inputs. The discovered circuit uses
15,000 Toffolis and 126 qubits, for a $TQ$-cost of $1.9\times10^6$. Relative to the
baseline, this reduces the Toffoli count
by $3.7\times$, the qubit count by $9.4\%$, and the $TQ$-cost by
$4.1\times$.

The amount of compression is specific to this Hamiltonian, but the preprocessing
can search for such relations in other sparse Pauli tables, and it falls
back to conventional \textsc{Select} and QROM costs when it finds none. Exact spin
symmetry is common in spin-independent molecular Hamiltonians, so the scheduling
strategy should carry over. The opportunity is smaller when a Zeeman term or
spin--orbit coupling breaks the spin symmetries. In DFTHC, analogous choices
arise in the factorization
itself~\cite{lee2021hypercontraction,low2025spectrum}: the factorization is
not unique, and a circuit-aware choice could make its coefficient data
repeated or sparse and therefore cheaper to load.

\section{Nonabelian lattice-gauge simulation}\label{sec:case-nonabelian}

In Hamiltonian lattice gauge theory, space is replaced by a grid. A gauge-field
configuration assigns a group element $U_i(\mathbf x)$ to each directed edge,
or link, from site $\mathbf x$ in direction $i$. This element describes how the
gauge field transports a state between neighboring sites; traversing the link
backward contributes its inverse $U_i(\mathbf x)^{-1}$. Multiplying the links
around a closed path therefore describes the net transport along
that loop. Although this product depends on an arbitrary convention at each site (the
gauge), the real part of its matrix trace does not, and these loop traces
form the magnetic energy~\cite{lamm2019gauge}. Because the gauge group is
nonabelian, its elements need not commute, so the links must be multiplied in
the order in which the path traverses them.

We consider the group-register formulation of Hamiltonian lattice gauge
theory~\cite{lamm2019gauge}, in which a register holds the group element of
each link. The Hamiltonian separates into an
electric (kinetic) term $H_E$ and a magnetic (potential) term $H_B$. A Trotter
step evolves these terms separately. Here we implement the magnetic evolution
$e^{-i\Delta t H_B}$ for the improved Hamiltonian of
\citet{carena2022improved}. The smallest loop contributing to $H_B$ goes around
one square face of the lattice and is called a \emph{plaquette}; the standard
magnetic term sums a trace over these four-link loops. The improved
Hamiltonian adds weighted $2\times1$ rectangles, which cancel the leading
discretization error. Thus the
magnetic evolution requires the same basic operation at many locations:
follow a short closed path, multiply its links in order, extract a
trace, and add the weighted result to the phase. The
compilation task studied here is one application of the magnetic evolution to
the whole lattice, which we call the \emph{magnetic layer}.

We use a periodic
$4\times4\times4$ lattice~\footnote{The resource estimate of
\citet{osorio2025sigma72} uses a $10^3$ lattice. In this work, we use $4^3$ for ease of evaluation.}. The lattice has 192 links, and the layer contains
192 four-link plaquettes and 384 six-link rectangles. For directions $i$ and
$j$, representative loop products are
\begin{equation}
\begin{gathered}
  P_{ij}(\mathbf x)
  =U_i(\mathbf x)U_j(\mathbf x+\hat i)
    U_i(\mathbf x+\hat j)^{-1}U_j(\mathbf x)^{-1},\\
  R_{ij}(\mathbf x)
  =U_i(\mathbf x)U_i(\mathbf x+\hat i)
    U_j(\mathbf x+2\hat i)
    U_i(\mathbf x+\hat i+\hat j)^{-1}
    U_i(\mathbf x+\hat j)^{-1}U_j(\mathbf x)^{-1}.
\end{gathered}
\label{eq:gauge-loops}
\end{equation}
Here $\hat i$ is one lattice step in direction $i$, with coordinates wrapping
modulo four. Plaquettes use the pairs $i<j$, and rectangles use both
orderings $i\ne j$.

To store each link in a finite register, we restrict its value from the
continuous group $SU(3)$ to the 216-element subgroup
$G=\Sigma(72\times3)\subset SU(3)$ used by
\citet{osorio2025sigma72}. We encode these elements using the unique normal
form
\begin{equation}
  g=\omega^p C^qE^rV^{2s+t}X^u,
  \qquad
  p,q,r\in\{0,1,2\},\qquad s,t,u\in\{0,1\},
  \label{eq:nonabelian-normal-form}
\end{equation}
where $\omega=e^{2\pi i/3}$ and $C,E,V,X$ are fixed matrices that generate the
group. These six exponents are the coordinates of a group element. Each three-valued
exponent uses two qubits and each binary exponent one, for nine qubits per
link and 1,728 in total. We write $g$ for the value of one link and
$g_0,\ldots,g_{191}$ for all of them. Setting $u=0$ in
\cref{eq:nonabelian-normal-form} gives the 108-element subgroup
$\Sigma(36\times3)$, while $u=1$ gives its other coset.

The main difficulty in multiplying two group elements is that the generators do not commute. For instance, within the $\omega,C,E$ part, the relation
$EC=\omega CE$ gives
\begin{equation}
  (\omega^PC^QE^R)(\omega^pC^qE^r)
  =\omega^{P+p+Rq}C^{Q+q}E^{R+r},
  \label{eq:gauge-heisenberg-product}
\end{equation}
with exponents modulo three. Moving $V$ through the $C,E$ part
similarly transforms several coordinates. The action of $X$ is more involved:
\begin{equation}
  XC=C^2EX,\qquad XE=\omega^2CEX,\qquad
  XV=\omega^2EV^3X,\qquad X^2=C^2V^2.
  \label{eq:gauge-x-relations}
\end{equation}
Consequently, when the left factor contains $X$, every coordinate of the right
factor must be rewritten before the two can be combined.
\Cref{alg:gauge-multiplication-comparison}(a) gives the general rule.

Each $g\in G$ can be written as a $3 \times 3$ matrix $M_g$, and each completed
loop product contributes to $H_B$ only through the seven-valued function
\begin{equation}
  f(g)=2\operatorname{Re}\Tr(M_g)
  \in\{-3,-2,-1,0,1,2,6\}.
  \label{eq:gauge-trace-class}
\end{equation}
Let $S_P$ and $S_R$ be the sums of $f$ over all plaquettes and rectangles, respectively.
For the improved discretization, $H_B$ is proportional to
$S_P-\frac{1}{20}S_R$~\cite{carena2022improved}. We represent these phase
weights as 51-bit integers $a_P$ and $a_R$. The weight $a_P$ depends on the time step and the
coupling, so we use a generic 51-bit value, and we choose $a_R$ so that
$a_R/a_P\approx-1/20$. Because all of the loop traces are diagonal in the
group-element basis, the magnetic evolution leaves each link configuration
unchanged. In our 51-bit representation, the target operation multiplies each
link configuration by the phase
$e^{2\pi i(a_P S_P+a_R S_R)/2^{51}}$. We implement this operation using a
reusable phase-gradient state, which satisfies
\begin{equation}
  \ket{\mathrm{PG}_{51}}
  =\frac{1}{\sqrt{2^{51}}}\sum_{k=0}^{2^{51}-1}
    e^{-2\pi i k/2^{51}}\ket{k},
  \qquad
  \operatorname{ADD}_y\ket{\mathrm{PG}_{51}}
  =e^{2\pi i y/2^{51}}\ket{\mathrm{PG}_{51}},
  \label{eq:gauge-phase-gradient}
\end{equation}
where $\operatorname{ADD}_y$ denotes modular addition by $y$. Taking
$y=a_P S_P+a_R S_R$ therefore applies the desired phase to the link registers
while leaving the phase-gradient state unchanged. This reduces the
compilation task to constructing the underlying reversible addition
\begin{equation}
  \ket{g_0\cdots g_{191}}\ket{k}
  \longmapsto
  \ket{g_0\cdots g_{191}}
  \ket{k+a_P S_P+a_R S_R\bmod 2^{51}},
  \label{eq:gauge-layer-map}
\end{equation}
for arbitrary computational-basis values $k$.

The baseline treats all 576 loops independently, combining the inversion
and multiplication circuits of \citet{osorio2025sigma72}, whose inner
$\Sigma(36\times3)$ circuits come from \citet{gustafson2024sigma36}; we
implement these circuits from the published descriptions. For a loop $L$, let $a(L)=a_P$ for a plaquette and $a(L)=a_R$ for a
rectangle. For each loop, the baseline
starts with $W=1$ and successively multiplies the four or six directed links
into it, so $W$ is the running product of the links traversed so far. After
completing the loop, the baseline looks up the 51-bit addend $a(L)f(W)$ by
QROM~\cite{babbush2018encoding}, adds it to $k$, and reverses
the lookup and loop product. Each multiplication costs 146 Toffolis and each link inversion costs 87.
A plaquette costs 1,702 Toffolis and a rectangle costs 2,460, so the 192
plaquettes and 384 rectangles cost 1.3 million Toffolis. The baseline peaks at
1,927 qubits.

The discovered circuit combines three optimizations. First, it uses different
factor orders for the links and the running product, which makes updates of $W$
cheap. Second, it
reuses partial loop products across the layer instead of computing every loop
from the identity. Third, it avoids QROM by reading the
trace value directly from the coordinates of $W$ and using the result to
control the addition of a known constant.

\begin{algorithm}[t]
\caption{General normal-form multiplication and the specialized running-product update.
The former supports two arbitrary operands in the original normal form; the
latter uses different factor orders for the running product and the incoming link,
stores $\widetilde P=P+QR$, and writes $m=2s+t$.}
\label{alg:gauge-multiplication-comparison}
\footnotesize
\begin{minipage}[t]{0.485\linewidth}
\centering
\textbf{(a) General normal-form product}\par
\vspace{4pt}
\makeatletter\def\theHALG@line{gaugeprod.a.\arabic{ALG@line}}\makeatother
\begin{algorithmic}[1]
\Statex \textbf{Input:} running product $W=\omega^PC^QE^RV^MX^\upsilon$
\Statex \hspace{\algorithmicindent} link
  $g=\omega^pC^qE^rV^mX^u$
\Statex \textbf{Output:} $W\gets Wg$ in the original normal form
\If{$\upsilon=1$} \Comment{apply \cref{eq:gauge-x-relations}}
  \State Write $m=2s+t$, with $s,t\in\{0,1\}$
  \State $p_0\gets p+qr+2\cdot\mathbf{1}[q=2]
    +2\cdot\mathbf{1}[r\ne0]\pmod3$
  \State $(q_0,r_0)\gets(2q+r,q+r)\pmod3$
  \State $p\gets p_0+2s+sr_0+2t\pmod3$
  \State $(q,r)\gets(q_0+s,r_0+s+t(1+s))\pmod3$
  \State $s\gets s\mathbin\oplus t$
  \State $m\gets2s+t$
  \If{$u=1$} \Comment{apply $X^2=C^2V^2$}
    \State $p\gets p+(1-t)(2-s)r\pmod3$
    \State $q\gets q+(1-t)(2-s)\pmod3$
    \State $r\gets r+t(2-s)\pmod3$
    \State $s\gets s\mathbin\oplus1$
    \State $m\gets2s+t$
  \EndIf
\EndIf
\If{$M=0$} \Comment{commute $C^qE^r$ through $V^M$}
  \State $(\delta p,q',r')\gets(0,q,r)$
\ElsIf{$M=1$}
  \State $(\delta p,q',r')\gets(2qr,-r,q)$
\ElsIf{$M=2$}
  \State $(\delta p,q',r')\gets(0,-q,-r)$
\Else
  \State $(\delta p,q',r')\gets(2qr,r,-q)$
\EndIf
\State $P\gets P+p+\delta p+Rq'\pmod3$
\State $(Q,R)\gets(Q+q',R+r')\pmod3$
\State $M\gets M+m\pmod4$
\State $\upsilon\gets\upsilon\mathbin\oplus u$
\end{algorithmic}
\end{minipage}
\hfill
\begin{minipage}[t]{0.485\linewidth}
\centering
\textbf{(b) Specialized running-product update}\par
\vspace{4pt}
\makeatletter\def\theHALG@line{gaugeprod.b.\arabic{ALG@line}}\makeatother
\begin{algorithmic}[1]
\Statex \textbf{Input:} running product
  $W=\omega^PV^MX_o^\upsilon C^QE^R$
\Statex \hspace{\algorithmicindent} link
  $g=\omega^pC^qE^rV^{2s+t}X_o^u$
\Statex \textbf{Output:} $W\gets Wg$ in the specialized representation
\State $\widetilde P\gets
  \widetilde P+p+qr+Qr+2Rq\pmod3$
\State $(Q,R)\gets(Q+q,R+r)\pmod3$
\If{$t=1$}
  \State $(Q,R)\gets(R,-Q)\pmod3$
\EndIf
\If{$s=1$}
  \State $(Q,R)\gets(-Q,-R)\pmod3$
\EndIf
\If{$u=1$}
  \State $(Q,R)\gets(Q-R,-Q-R)\pmod3$
\EndIf
\State $M\gets M+(-1)^\upsilon(2s+t)+2\upsilon u\pmod4$
\State $\upsilon\gets\upsilon\mathbin\oplus u$
\end{algorithmic}
\end{minipage}
\end{algorithm}

\paragraph{Specialized running-product representation.}
Repeatedly multiplying the running product $W$ by links dominates the cost,
so the circuit uses a representation tailored to the restricted update
$W\gets Wg$ rather than to arbitrary group multiplication. Multiplication in
the original normal form is costly in part because $X$ has a complicated
conjugation action. However, the normal form can use any element of the
$u=1$ coset in place of $X$. We replace $X$ by $X_o=CEX$. Because $X_o$ differs from $X$ only by
factors of $C$ and $E$, rewriting a link in terms of $X_o$ changes only
$p,q,r$. $X_o$ is the unique such element that satisfies
\begin{equation}
  X_oVX_o^{-1}=V^{-1},
  \qquad X_o^2=V^2.
\end{equation}
Compare these relations with \cref{eq:gauge-x-relations}: the $V,X_o$ part now
closes on itself instead of spilling corrections into the other coordinates.

After making this replacement, the circuit uses different factor orders for
the links and for $W$. Each link is recoded as
\begin{equation}
  g=\omega^p C^qE^rV^mX_o^u,
\end{equation}
whereas the frequently updated product is stored in the opposite order,
\begin{equation}
  W=\omega^P V^M X_o^\upsilon C^Q E^R,
  \label{eq:gauge-internal-normal-form}
\end{equation}
with $p,q,r,P,Q,R\in\mathbb Z_3$, $m,M\in\mathbb Z_4$, and
$u,\upsilon\in\mathbb Z_2$. This places the incoming and accumulated $C,E$
factors next to one another in $Wg$, so they can be combined directly before
a few fixed permutations account for the remaining factors. One complication
remains: reordering the factors would ordinarily introduce recurring
corrections to the coordinate $P$. The circuit stores
$\widetilde P=P+QR\pmod3$, folding those corrections into the coordinate
itself rather than recomputing them during every multiplication.

The 192 links are recoded once
before the first loop and restored after the last. Each recoded link
remains in a compact nine-qubit encoding; storing every link one-hot would cost
three additional qubits per link. In contrast, there is only one running
product, so we can afford to store its three-valued coordinates $\widetilde P$, $Q$,
and $R$ one-hot (three qubits, exactly one set), increasing the size of $W$ from nine to twelve qubits. Adding one
modulo three then becomes a cyclic permutation of three qubits, while negation
becomes a single swap. Controlled additions and the fixed transformations
above can therefore use controlled qubit permutations rather than arithmetic
on the two-qubit encoding. Together, these
representation choices reduce each update of $W$ from 146
Toffolis in the baseline's general multiplication circuit to 35 Toffolis.
Multiplying by an inverse link $g^{-1}$ costs the same, so the circuit needs
no separate link inversion.
\Cref{alg:gauge-multiplication-comparison} gives the complete coordinate
rules for both representations.

\paragraph{Reusing loop products.}
The baseline constructs each loop product from the identity and then reverses
the construction. Neighboring loops, however, often share a sequence of
factors. The discovered circuit therefore retains $W$ after one trace
evaluation and edits it into the product needed for the next, reusing the
shared factors instead of rebuilding them.

More loops share factors than first appears, because the
real trace is invariant under cyclic rotation and inversion:
\begin{equation}
  f(h_1h_2\cdots h_n)
  =f(h_2\cdots h_nh_1)
  =f((h_1h_2\cdots h_n)^{-1}).
  \label{eq:gauge-trace-invariance}
\end{equation}
The first equality is cyclicity of the matrix trace; the second follows
because these matrices are unitary and $f$ takes the real part.
For example, to change $abcde$ into $abcxy$, the circuit right-multiplies by
$e^{-1}$ and $d^{-1}$ to cancel the old suffix, then appends $x$ and $y$,
retaining the common prefix $abc$. Inverting $W$, which costs 21 Toffolis, makes the other end of the
product available for the same operation. Cyclic rotations and inversion also
give several equivalent ways to write each loop, which we call its
\emph{representatives}, all with the same value of $f$. As \cref{fig:plaquette-layer}(a) illustrates, a good
choice can expose a longer segment shared with the next loop. A classical dynamic program chooses the representatives and the order in
which the loops are evaluated, minimizing the
cost of transforming $W$ between consecutive representatives. The chosen schedule uses 3,136 link
multiplications and 480 inversions of $W$ for the 576 loops, about 5.4
multiplications per loop. By comparison, building and reversing each loop separately takes 6 multiplications
for a plaquette and 10 for a rectangle.

\begin{figure}[t]
\centering
\includegraphics[width=\linewidth]{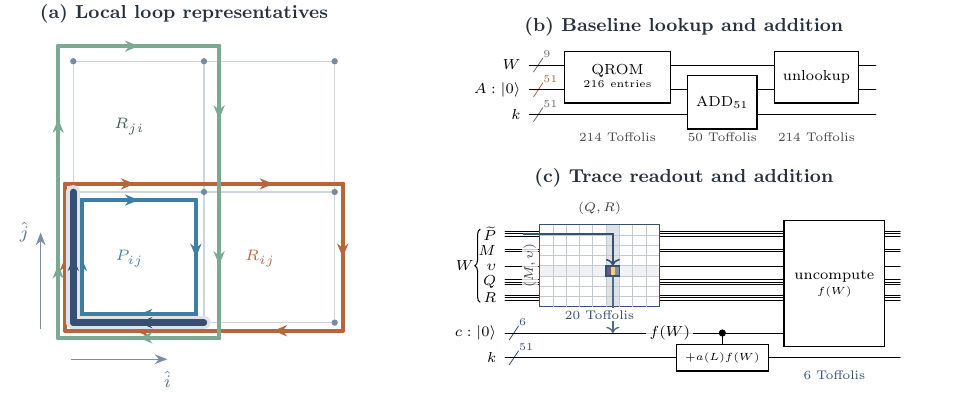}
\caption{Loop geometry and trace accumulation in the nonabelian lattice-gauge
circuit. (a) The blue, orange, and green paths are cyclic or inverted
representatives of $P_{ij}$, $R_{ij}$, and $R_{ji}$ whose trace values equal
those of \cref{eq:gauge-loops} by \cref{eq:gauge-trace-invariance}; the dark
segment is shared after these symmetry choices. (b) The baseline lookup,
modular addition, and unlookup for a completed loop product.
(c) The trace readout, drawn as a truth table over the coordinates
$(\widetilde P,M,\upsilon,Q,R)$ of the specialized representation, followed by
the controlled addition. Rows encode $(M,\upsilon)$, columns encode $(Q,R)$,
and the highlighted cell and stripes show how $(M,\upsilon,Q,R)$ and then
$\widetilde P$ determine the trace value.}
\label{fig:plaquette-layer}
\end{figure}

\paragraph{Reading the trace directly.}
The baseline in \cref{fig:plaquette-layer}(b) uses $W$ as the address of a
216-entry QROM and loads a full 51-bit addend, although $f(W)$ has only seven
possible values. The discovered circuit instead computes which of the seven
values occurs. In the specialized representation, a small Boolean circuit acting directly on
$(\widetilde P,M,\upsilon,Q,R)$ computes six flag qubits, one per nonzero trace
value, of which at most one is set (\cref{fig:plaquette-layer}(c)).
Each flag controls the addition of the corresponding constant $a(L)v$ to
$k$, where $v$ is the flagged trace value, so the circuit never stores a 51-bit lookup result. Computing the six flags costs 20 Toffolis.
Readout, addition, and cleanup cost 76 Toffolis for a
plaquette and 75 for a rectangle, compared with 478 for the baseline QROM lookup,
addition, and unlookup.

We test the complete circuit on 100,000 random inputs and each group
operation on all of its inputs. The discovered
circuit uses 170,000 Toffolis and 1,859 qubits, for a $TQ$-cost of $3.2\times10^8$.
Relative to the baseline, this reduces the Toffoli count by
$7.4\times$, the qubit count by $3.5\%$, and the $TQ$-cost by $7.7\times$.
Updates of $W$ account for 110,000 of the Toffolis, trace
readout and addition for 43,000, inversions of $W$ for 10,000, and
link recoding for 8,832.

\clearpage

\appendix
\phantomsection
\addcontentsline{toc}{section}{Appendix}
\addtocontents{toc}{\protect\maintocdepth{0}}
\begin{center}{\LARGE\bfseries Appendix}\end{center}
\medskip
\startcontents[appendices]
\bigskip
\crefalias{section}{appendix}
\crefalias{subsection}{appendix}
\numberwithin{equation}{section}
\renewcommand{\theequation}{\thesection\arabic{equation}}

\section{Exact branch recovery for divstep inversion}
\label{app:divstep-recovery}

This appendix proves the decoding rule of
\cref{eq:divstep-decoding-rule}~\cite{gu2026lean}. The argument first establishes the invariants carried by
the coefficients $f_v$ and $g_v$, then identifies the decisions already visible from the
updated state, and finally shows how the remaining bit $b$ is encoded in
a window of low-order bits. Let $p$ be an odd positive integer and let
$0\le w<p$.
The walk begins at
\begin{equation}
  (z,f,g,f_v,g_v)=(0,p,w,0,1).
  \label{eq:app-divstep-seed}
\end{equation}
Every step follows the recurrence
\begin{equation}
  (z',f',g',f_v',g_v')=
  \begin{cases}
    (-z,\ g,\ (g-f)/2,\ 2g_v,\ g_v-f_v) & \text{for } z\ge0,\ g\text{ odd},\\
    (z+1,\ f,\ (g+f)/2,\ 2f_v,\ g_v+f_v) & \text{for } z<0,\ g\text{ odd},\\
    (z+1,\ f,\ g/2,\ 2f_v,\ g_v) & \text{for } g\text{ even}.
  \end{cases}
  \label{eq:app-divstep-recurrence}
\end{equation}
Let $s$ indicate whether the step swaps, and let $b=1$ exactly when $g$
is odd, so $b$ records whether the two values are combined before halving. In this notation,
$s=1$ selects the first case of \cref{eq:app-divstep-recurrence}; when
$s=0$, the value of $b$ selects the second or third case. The walk
freezes when $g$ first becomes zero; the steps before that are \emph{active}.

For an integer $u$ and $k\ge0$, write $u[k]\in\{0,1\}$ for bit $k$ of
$u\bmod 2^{k+1}$, counted from zero. This notation agrees with the usual
binary digit for $u\ge0$ and with the two's-complement digit for $u<0$.
We also write $\operatorname{bit}_k(u)=u[k]$. Let $v_2(u)$ denote the
number of trailing zero bits of $u$ (its $2$-adic valuation), with $v_2(0)=+\infty$.

We first show, via the determinant and scaling relations, that the integer recurrence
computes a modular inverse.

\begin{lemma}[Basic invariants]
\label[lemma]{lem:app-divstep-basic}
Every reachable state satisfies
\begin{equation}
  f g_v-g f_v=p.
  \label{eq:app-divstep-det}
\end{equation}
Moreover, $f$ and $g_v$ are odd and $f_v$ is even.
\end{lemma}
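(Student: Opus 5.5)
The plan is induction on the number of steps taken from the seed \cref{eq:app-divstep-seed}, checking each of the three cases of \cref{eq:app-divstep-recurrence} separately. At the seed we have $f g_v - g f_v = p\cdot 1 - w\cdot 0 = p$. Also $f=p$ is odd by hypothesis, $g_v=1$ is odd, and $f_v=0$ is even. So all claims hold initially.

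For the inductive step, assume the current state satisfies $fg_v-gf_v=p$, with $f,g_v$ odd and $f_v$ even. We verify each case directly.

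In the swap case ($z\ge0$, $g$ odd),
\[
f'g_v'-g'f_v' = g(g_v-f_v) - \tfrac{g-f}{2}\cdot 2g_v = fg_v - gf_v = p.
\]
Here $f'=g$ is odd by the case condition. Also $g_v'=g_v-f_v$ is odd minus even, hence odd, and $f_v'=2g_v$ is even.

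In the odd nonswap case,
\[
f(g_v+f_v)-\tfrac{g+f}{2}\cdot 2f_v = fg_v - gf_v = p.
\]
Here $f'=f$ stays odd, $g_v+f_v$ is odd, and $2f_v$ is even.

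In the even case,
\[
fg_v - \tfrac g2\cdot 2f_v = p,
\]
and the parities are immediate.

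We should also confirm that each new $g'$ is an integer: $g\pm f$ is even because both terms are odd, and $g/2$ is an integer because $g$ is even. This holds for every case, including steps after $g=0$ if the walk is not frozen. Frozen steps leave the state unchanged, so they preserve the invariants trivially.

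No step is a real obstacle; the only care needed is the bookkeeping that the parities of $f$ and $g_v$ are what make the determinant computation and the integrality of the halvings consistent. That is why I would prove the parity claims jointly with the determinant identity in a single induction.
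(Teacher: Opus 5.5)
Your proof is correct and follows the same route as the paper's: check the seed, verify the determinant by direct substitution in each update (the paper merges your two nonswap cases via $g'=(g+bf)/2$ and $g_v'=g_v+bf_v$), note that frozen steps change nothing, and carry the parity claims in the same induction. Your explicit check that the halvings yield integers, using that $f$ is odd, is a point the paper leaves implicit.
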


\begin{proof}
The initial state gives $p\cdot1-w\cdot0=p$. Direct substitution in the nonswap and
swap cases gives
\begin{equation}
\begin{aligned}
  f(g_v+b f_v)-\frac{g+b f}{2}(2f_v)&=f g_v-g f_v,
    &&\text{(nonswap)},\\
  g(g_v-f_v)-\frac{g-f}{2}(2g_v)&=f g_v-g f_v,
    &&\text{(swap)}.
\end{aligned}
\end{equation}
Hence every active step preserves the determinant, while a frozen step leaves
the state unchanged. The parity claims also hold at the initial state. On a nonswap,
$f$ remains odd, $f_v'=2f_v$ is even, and
$g_v'=g_v+b f_v$ is odd. On a swap, the swap condition ($g$ odd) makes
$f'=g$ odd, while $f_v'=2g_v$ is even and
$g_v'=g_v-f_v$ is odd. Induction proves all four claims.
\end{proof}

\begin{proposition}[Scaling and termination]
\label[proposition]{prop:app-divstep-scale}
After $t$ active steps,
\begin{equation}
  f_vw\equiv2^t f\pmod p,
  \qquad
  g_vw\equiv2^t g\pmod p.
  \label{eq:app-divstep-scale}
\end{equation}
If $\gcd(p,w)=1$, then at termination $f=\pm1$, $g_v=fp$, and
\begin{equation}
  w^{-1}\equiv f f_v2^{-t}\pmod p.
  \label{eq:app-divstep-inverse}
\end{equation}
\end{proposition}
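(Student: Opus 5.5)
The scaling congruences follow by induction on the number $t$ of active steps, using the recurrence \cref{eq:app-divstep-recurrence} directly. At the seed \cref{eq:app-divstep-seed} with $t=0$ we have $f_vw=0\equiv p=f$ and $g_vw=w=g$, so both congruences hold modulo $p$. For the inductive step, assume $f_vw\equiv2^tf$ and $g_vw\equiv2^tg$, and check the three cases.

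In the even case, $f_v'w=2f_vw\equiv2^{t+1}f$ and $g_v'w=g_vw\equiv2^tg=2^{t+1}(g/2)$. The odd nonswap gives $f_v'w\equiv2^{t+1}f$ and $g_v'w=(g_v+f_v)w\equiv2^t(g+f)=2^{t+1}(g+f)/2$. The swap gives $f_v'w=2g_vw\equiv2^{t+1}g=2^{t+1}f'$ and $g_v'w\equiv2^t(g-f)=2^{t+1}g'$. In each case the new right-hand side is an exact integer identity multiplied by $2^{t+1}$, because the halved quantities are integers. Frozen steps change neither the state nor $t$, so the congruences persist.

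For termination, use that every active step preserves the gcd. In the swap and nonswap cases, $\gcd(f',g')$ equals $\gcd(f,g\pm f)$ or $\gcd(f,g)$, up to removing a factor of $2$. Removing that factor is harmless because $f$ is odd by \cref{lem:app-divstep-basic}. Hence $\gcd(f,g)=\gcd(p,w)=1$ throughout. At termination $g=0$, so $|f|=\gcd(f,0)=1$, giving $f=\pm1$. The determinant identity \cref{eq:app-divstep-det} with $g=0$ then reads $fg_v=p$, so $g_v=fp$ because $f^2=1$. Multiplying $f_vw\equiv2^tf$ by $f$ gives $ff_vw\equiv2^t$. Since $p$ is odd, $2$ is invertible modulo $p$, and therefore $w^{-1}\equiv ff_v2^{-t}$, which is \cref{eq:app-divstep-inverse}.

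The main obstacle is that the statement presupposes the walk actually reaches $g=0$. Everything above is conditional on termination occurring, which is the proposition's claim ``at termination''. I would treat existence of termination as outside the statement, or cite the known bound from \cite{bernstein2019divsteps,wuille2021safegcd}. The remaining point needing care is the gcd-preservation step, which relies on the oddness of $f$ from \cref{lem:app-divstep-basic}.
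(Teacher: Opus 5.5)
Your proposal is correct and follows the paper's own proof. Both argue by induction on active steps through direct substitution into the recurrence, use gcd preservation to get $f=\pm1$ at termination, use the determinant identity to get $g_v=fp$, and multiply the first congruence by $f$. You additionally write out the three case checks and the oddness-of-$f$ justification for gcd preservation, which the paper leaves implicit, and you treat termination as a presupposition just as the paper does (it proves termination separately, in its worst-case step bound proposition).
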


\begin{proof}
The congruences hold at the initial state. Substitution in
\cref{eq:app-divstep-recurrence} shows that if they hold after $t$ active
steps, the updated coefficients satisfy them with $t+1$ in place of $t$. They
therefore hold for every $t$. Every step also preserves $\gcd(f,g)$, so termination with
$g=0$ gives $|f|=\gcd(p,w)=1$. The determinant invariant then gives
$g_v=fp$, and multiplying the first congruence by $f$ yields
\cref{eq:app-divstep-inverse}.
\end{proof}

Before turning to the decoding rule, we record the independent worst-case
termination guarantee used by the circuit analysis.

\begin{proposition}[Worst-case step bound]
\label[proposition]{prop:app-divstep-bound}
Let $p=2^{256}-2^{32}-977$. For every
$0\le w<p$, the walk in
\cref{eq:app-divstep-seed} reaches $g=0$ within $590$ active steps.
If $w>0$, its final coefficients satisfy \cref{eq:app-divstep-inverse}; if $w=0$,
the walk starts frozen.
\end{proposition}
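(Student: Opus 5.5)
The plan is to reduce the statement to the known worst-case analysis of the safegcd divstep iteration~\cite{bernstein2019divsteps,wuille2021safegcd}, and then read off the remaining claims from \cref{prop:app-divstep-scale}.

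\textbf{Step 1: identify our recurrence with the half-delta divstep.} Set $\delta=z+\tfrac12$. A swap occurs exactly when $\delta>0$ and $g$ is odd, and it sends $\delta\mapsto1-\delta=(-z)+\tfrac12$. A nonswap sends $\delta\mapsto\delta+1=(z+1)+\tfrac12$. The $f,g$ updates in \cref{eq:app-divstep-recurrence} then coincide with the ``hddivstep'' map $(\delta,f,g)\mapsto(1-\delta,g,(g-f)/2)$ or $(1+\delta,f,(g+(g\bmod2)f)/2)$, started from $\delta=\tfrac12$. The coefficients $f_v,g_v$ do not influence the control flow, so the number of steps until $g=0$ is determined by $(f,g)=(p,w)$ alone.

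\textbf{Step 2: invoke the worst-case bound.} Wuille's exhaustive convex-hull refinement of the Bernstein--Yang analysis~\cite{wuille2021safegcd} proves the following: for every odd $f$ and every $g$ with $0\le g<f<2^{256}$, hddivsteps from $\delta=\tfrac12$ reach $g=0$ within $590$ steps. Since $p<2^{256}$ is odd and $0\le w<p$, the bound applies directly.

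The hard part is this bound itself, which is not a pen-and-paper fact. The Bernstein--Yang closed form only gives about $741$ steps for $256$-bit inputs. The sharper $590$ comes from a computer-assisted argument: it tracks the set of all $2$-adic transition matrices reachable after $k$ steps, with their $\delta$ values, and bounds the largest possible remaining $|g|$ by a convex-hull computation until that bound falls below $1$. I would cite that verification rather than reproduce it. If an independent check were wanted, I would rerun the same hull iteration, which gives a bound uniform over all inputs of that size and so covers $(p,w)$ in particular. I would not rely on sampling, which cannot certify a worst case.

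\textbf{Step 3: remaining claims.} If $w=0$, then $g=0$ in the initial state \cref{eq:app-divstep-seed}, so the walk is frozen from the outset by definition. If $w>0$, then $p$ prime gives $\gcd(p,w)=1$. Termination with $g=0$ after some $t\le590$ active steps then yields $f=\pm1$ and \cref{eq:app-divstep-inverse}, by \cref{prop:app-divstep-scale}.
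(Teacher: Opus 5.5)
Your proposal is correct and follows the same route as the paper: cite Wuille's worst-case bound of $590$ divsteps, treat $w=0$ as a walk that is frozen from the start, and for $w>0$ use the primality of $p$ together with the scaling-and-termination proposition. Your explicit identification of the recurrence with the half-delta divstep via $\delta=z+\tfrac12$ is a check the paper leaves implicit, and it matters, because the cited $590$ bound holds for that variant started at $\delta=\tfrac12$, not for the original divsteps.
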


\begin{proof}
The worst-case analysis of \citet{wuille2021safegcd} guarantees that every
$0<w<p$ reaches $g=0$ within $590$ divsteps.
The case $w=0$ starts with $g=0$. For $0<w<p$, $\gcd(p,w)=1$ because $p$ is prime,
so once the walk terminates, the remaining claims follow from \cref{prop:app-divstep-scale}.
\end{proof}

We now recover the bits $s$ and $b$ of an active step from its updated state.
The bit $s$, and $b$ in two special cases, can be read from a few bits of that state.

\begin{lemma}[Locally visible decisions]
\label[lemma]{lem:app-divstep-local}
Every active step has the following properties:
\begin{enumerate}
\item\label{local:app-divstep-swap}
  The swap is visible from the updated coefficient:
  \begin{equation}
    s=1
    \quad\Longleftrightarrow\quad
    f_v'\equiv2\pmod4
    \quad\Longleftrightarrow\quad
    v_2(f_v')=1.
    \label{eq:app-divstep-swap}
  \end{equation}
  If $s=0$, then $4\mid f_v'$; moreover, $s=1$ implies $b=1$.
\item\label{local:app-divstep-counter}
  If $s=0$ and $z'\ge1$, then $b=0$.
\item\label{local:app-divstep-freeze}
  If $g'=0$, then $b=1$.
\end{enumerate}
\end{lemma}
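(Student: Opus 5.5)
The three items follow directly from the parity invariants of \cref{lem:app-divstep-basic} and the case structure of \cref{eq:app-divstep-recurrence}. I would treat them in order, working from the pre-step state $(z,f,g,f_v,g_v)$, which is reachable and active (so $g\ne0$), to its image.

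\textbf{Item 1 (swap visibility).} On a swap, $f_v'=2g_v$ with $g_v$ odd, so $v_2(f_v')=1$ and $f_v'\equiv2\pmod4$. On a nonswap, $f_v'=2f_v$ with $f_v$ even, so $4\mid f_v'$. These two cases are exhaustive and have disjoint conclusions, so both stated equivalences hold. The implication $s=1\Rightarrow b=1$ is immediate, because the swap case requires $g$ odd.

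\textbf{Item 2 (counter test).} If $s=0$, the step is one of the two nonswap cases, so $z'=z+1$. Hence $z'\ge1$ gives $z\ge0$. If $g$ were odd with $z\ge0$, the step would have been a swap, contradicting $s=0$. So $g$ is even and $b=0$.

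\textbf{Item 3 (freeze).} Suppose $b=0$. Then $g'=g/2$, and since the step is active, $g\ne0$, so $g'\ne0$. Taking the contrapositive, $g'=0$ forces $b=1$. This is consistent with the recurrence: $(g\pm f)/2$ can vanish only when $g=\mp f$, which needs $g$ odd.

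None of these steps is a real obstacle. The only point needing care is the definition of ``active,'' which is what guarantees $g\neq0$ before the step; without it, item 3 would fail because a frozen step has $g=0$ and $b=0$.
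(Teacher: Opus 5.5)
Your proof is correct and is essentially the paper's argument, step for step: the parity invariants of \cref{lem:app-divstep-basic} separate $f_v'=2g_v$ from $f_v'=2f_v$, a failed swap condition with $z=z'-1\ge0$ forces $g$ even, and activity ($g\ne0$) rules out $g'=g/2=0$. Your observation that activity is the hypothesis item 3 depends on is exactly how the paper's proof uses it.
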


\begin{proof}
On a swap step, the incoming $g$ is odd, so
$b=1$; the update also gives $f'=g$, preserving the oddness of $f$. Moreover,
$f_v'=2g_v$, which is twice an odd integer by
\cref{lem:app-divstep-basic}. On a nonswap step,
$f_v'=2f_v$, which is divisible by four. This proves
part~\labelcref{local:app-divstep-swap}. For
part~\labelcref{local:app-divstep-counter}, a nonswap has
$z=z'-1\ge0$. Since the swap condition nevertheless failed, the incoming
value $g$ was even, and hence $b=0$. Finally, on a nonswap,
$g'=(g+b f)/2$. If $b=0$ and $g'=0$, then $g=0$, contrary to the step being
active; a swap already has $b=1$ by
part~\labelcref{local:app-divstep-swap}. This proves
part~\labelcref{local:app-divstep-freeze}.
\end{proof}

The steps these tests leave undecided, apart from a terminating step, are the nonswaps entered with $z<0$;
we call them \emph{negative-counter nonswaps}. Their decisions are the \emph{ambiguous}
decisions of SI~\ref*{sec:modinv-divsteps}: the preceding tests do not determine $b$, but the
coefficients still retain it. Whenever $f_v\ne0$, put
$j=v_2(f_v)$ and write $f_v=2^j q$ with $q$ odd. On an updated state, $j$ is the
excursion depth defined in the main text, and a swap followed by a run of ambiguous
decisions is an \emph{excursion}. The $j$ low-order bits of $g_vq^{-1}$, which we call the
\emph{window}, accumulate successive decisions.

\begin{lemma}[Window invariant]
\label[lemma]{lem:app-divstep-window}
Every reachable state satisfies the following conditions:
\begin{enumerate}
\item\label{inv:app-divstep-high} If $z\ge0$, then $2^{z+1}\mid f_v$.
\item\label{inv:app-divstep-low} If $z<0$, then $f_v\ne0$ and, with
  $j=v_2(f_v)$ and $q=f_v/2^j$, there is an integer $c$ such that
  $0\le c<2^j$ and
  \begin{equation}
    g_v\equiv q c\pmod{2^{j-z}}.
    \label{eq:app-divstep-window}
  \end{equation}
\end{enumerate}
\end{lemma}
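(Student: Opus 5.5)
The plan is to prove both parts together by induction on the number of steps, using the recurrence \cref{eq:app-divstep-recurrence} and the parity facts of \cref{lem:app-divstep-basic}. Frozen steps leave the state unchanged, so only active steps need checking. At the seed \cref{eq:app-divstep-seed} we have $z=0$ and $f_v=0$. Part~\labelcref{inv:app-divstep-high} then reads $2\mid0$, which holds, and part~\labelcref{inv:app-divstep-low} is vacuous. For the inductive step I split into three cases: a swap, a nonswap with $z\ge0$, and a nonswap with $z<0$. In each case I verify whichever part of the invariant applies to $z'$.

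\emph{Nonswap with $z\ge0$.} Here $g$ is even, so $b=0$, and $z'=z+1\ge1$. Part~\labelcref{inv:app-divstep-high} gives $2^{z+1}\mid f_v$. Hence $2^{z'+1}=2^{z+2}$ divides $f_v'=2f_v$, which is what part~\labelcref{inv:app-divstep-high} requires after the step.

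\emph{Nonswap with $z<0$.} Part~\labelcref{inv:app-divstep-low} gives $f_v=2^jq$ with $q$ odd and $f_v\ne0$, together with some $0\le c<2^j$ satisfying \cref{eq:app-divstep-window}.
\begin{itemize}
\item After the step, $f_v'=2^{j+1}q$. So the odd part $q$ is unchanged and the depth becomes $j'=j+1$.
\item We have $g_v'=g_v+bf_v\equiv q(c+b2^j)\pmod{2^{j-z}}$.
\item The modulus matches, since $j'-z'=(j+1)-(z+1)=j-z$.
\item Put $c'=c+b2^j$. Then $0\le c'<2^{j+1}=2^{j'}$.
\end{itemize}
This is exactly the window gaining one bit, as in \cref{eq:binary-history-example}. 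If instead $z'=0$, only part~\labelcref{inv:app-divstep-high} is needed: $2\mid f_v'$, which holds because $f_v'$ is twice an integer.

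\emph{Swap ($z\ge0$, $g$ odd).} Here $f_v'=2g_v$, and $g_v$ is odd by \cref{lem:app-divstep-basic}. So $f_v'\ne0$, with $j'=1$ and $q'=g_v$. Also $z'=-z$.
\begin{itemize}
\item If $z=0$, then $z'=0$ and part~\labelcref{inv:app-divstep-high} only asks for $2\mid f_v'$, which holds.
\item If $z>0$, then $z'<0$. We need $g_v'=g_v-f_v\equiv g_v\,c'\pmod{2^{1+z}}$ with $c'\in\{0,1\}$, and we take $c'=1$.
\item The congruence with $c'=1$ follows from part~\labelcref{inv:app-divstep-high} before the step: $2^{z+1}\mid f_v$, so $g_v-f_v\equiv g_v\pmod{2^{z+1}}$.
\end{itemize}

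The swap case is the only place where the two parts of the invariant interact. The nonnegative-counter divisibility supplies exactly the modulus $2^{j'-z'}=2^{1+z}$ needed for the fresh window of length one. Getting the exponent $z+1$ in part~\labelcref{inv:app-divstep-high} right, so that it both propagates under nonswaps and matches the swap, is the main point to check. The remaining verifications are the routine congruence bookkeeping above.
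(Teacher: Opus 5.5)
Your proof is correct and follows the paper's argument essentially step for step. Both use induction along the walk with frozen steps trivial, and both split into the same three cases: a nonswap from $z\ge0$ doubling $f_v$; a swap at $z=D\ge0$, where $2^{D+1}\mid f_v$ gives the fresh window with $j=1$, $q=g_v$, $c=1$; and a nonswap from $z<0$ with witness $c+b2^j$ and unchanged modulus $2^{(j+1)-(z+1)}=2^{j-z}$, with the boundary cases $z'=0$ needing only $2\mid f_v'$ as in the paper.
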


Because $q$ is odd, part~\labelcref{inv:app-divstep-low} says that
$g_vq^{-1}\bmod2^{j-z}$ fits in $j$ bits: the $-z$ bits above position
$j-1$ vanish.

\begin{proof}
Proceed by induction along the walk. A frozen step changes nothing. At the
initial state, $z=0$, $f_v=0$, and $2\mid f_v$.

Suppose first that a nonswap begins with $z\ge0$. Then
$z'=z+1$ and $f_v'=2f_v$, so
$2^{z'+1}\mid f_v'$.

Suppose next that a swap occurs at $z=D\ge0$. Then $z'=-D$ and
$f_v'=2g_v$. By \cref{lem:app-divstep-basic}, $g_v$ is odd, so
$v_2(f_v')=1$ and the odd part of $f_v'$ is $g_v$. Moreover,
$g_v'=g_v-f_v$, while the induction hypothesis gives
$2^{D+1}\mid f_v$. Hence
\begin{equation}
  g_v'\equiv g_v\pmod{2^{D+1}}.
\end{equation}
If $D>0$, part~\labelcref{inv:app-divstep-low} holds with $j=1$ and
$c=1$; if $D=0$, part~\labelcref{inv:app-divstep-high} asks only that
$2\mid f_v'$.

Finally, suppose that a nonswap begins with $z<0$. Let $j=v_2(f_v)$ and
let $c$ be the witness from part~\labelcref{inv:app-divstep-low}. Then
$z'=z+1$, $f_v'=2f_v$, and the odd part $q$ is unchanged. Since
$g_v'=g_v+b f_v$,
\begin{equation}
  g_v'\equiv q(c+b2^j)\pmod{2^{j-z}},
  \qquad 0\le c+b2^j<2^{j+1}.
\end{equation}
The required modulus after the step is unchanged because
$2^{(j+1)-z'}=2^{j-z}$. If $z'<0$, this proves
part~\labelcref{inv:app-divstep-low}; if $z'=0$, then
part~\labelcref{inv:app-divstep-high} asks only that $2\mid f_v'$.
\end{proof}

The invariant makes the bit $b$ of every negative-counter nonswap an exact
function of the updated state.

\begin{theorem}[Decoding rule for an ambiguous decision]
\label[theorem]{thm:app-divstep-ladder}
Let $(z',f',g',f_v',g_v')$ be the updated state of a
negative-counter nonswap. Put $j=v_2(f_v')$ and $q=f_v'/2^j$. Then
$j\ge2$, $q$ is odd, and
\begin{equation}
  b=\operatorname{bit}_{j-1}\!\left(g_v'q^{-1}\bmod2^j\right)
   =\operatorname{bit}_{j-1}\!\left(
       p(f'q)^{-1}\bmod2^j\right).
  \label{eq:app-divstep-ladder}
\end{equation}
In particular, for fixed odd $p$, the bit $b$ is determined by
$(j,f'\bmod2^j,q\bmod2^j)$.
\end{theorem}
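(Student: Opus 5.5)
The plan is to apply the window invariant of \cref{lem:app-divstep-window} to the state \emph{entering} the step and push it through the nonswap update. The second form of \cref{eq:app-divstep-ladder} then follows from the determinant identity of \cref{lem:app-divstep-basic}.

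\emph{Setup.} Let $(z,f,g,f_v,g_v)$ be the incoming state of the negative-counter nonswap, so $z<0$ and $s=0$. By part~\labelcref{inv:app-divstep-low} of \cref{lem:app-divstep-window}, $f_v\ne0$. Write $j_0=v_2(f_v)$ and $f_v=2^{j_0}q$ with $q$ odd, and let $0\le c<2^{j_0}$ satisfy $g_v\equiv qc\pmod{2^{j_0-z}}$. Since $f_v$ is even and nonzero by \cref{lem:app-divstep-basic}, we have $j_0\ge1$. The update gives $f_v'=2f_v=2^{j_0+1}q$, so the updated valuation is $j=j_0+1\ge2$ and the odd part is the same $q$. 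This settles the first two claims.

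\emph{Decoding.} The update also gives $g_v'=g_v+bf_v\equiv q\,(c+b2^{j-1})\pmod{2^{j_0-z}}$. Because $-z\ge1$, the modulus $2^{j_0-z}$ is at least $2^{j_0+1}=2^j$, so the congruence survives reduction modulo $2^j$. Multiplying by $q^{-1}$ modulo $2^j$ (which exists since $q$ is odd) gives
\[
g_v'q^{-1}\bmod 2^j=c+b2^{j-1}.
\]
Since $0\le c<2^{j-1}$, there is no carry into position $j-1$, and bit $j-1$ is exactly $b$. This proves the first form of \cref{eq:app-divstep-ladder}.

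\emph{Second form.} Apply \cref{eq:app-divstep-det} to the updated state: $f'g_v'-g'f_v'=p$. Because $2^j\mid f_v'$, this gives $f'g_v'\equiv p\pmod{2^j}$. Since $f'$ is odd, $g_v'\equiv p\,f'^{-1}\pmod{2^j}$. Substituting into the first form yields
\[
b=\operatorname{bit}_{j-1}\bigl(p(f'q)^{-1}\bmod 2^j\bigr),
\]
which depends only on $j$, $f'\bmod2^j$ and $q\bmod2^j$.

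The only delicate point is the modulus bookkeeping in the decoding step. One must check that the window invariant, applied to the incoming state with $z<0$, gives at least $j_0+1$ bits of precision, and that the bound $c<2^{j_0}$ keeps the added term $b2^{j_0}$ from being contaminated by a carry. Both facts follow directly from $z\le-1$ and the witness bound in \cref{lem:app-divstep-window}.
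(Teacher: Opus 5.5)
Your proof is correct and matches the paper's argument essentially step for step. You apply the window invariant to the incoming state, use $z\le-1$ to reduce the congruence modulo $2^j$, read off bit $j-1$ of $q(c+b2^{j-1})$ using $c<2^{j-1}$, and obtain the second form by reducing the determinant identity $f'g_v'-g'f_v'=p$ modulo $2^j$, which is exactly how the paper proceeds.
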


\begin{proof}
The step doubles $f_v$, so its incoming valuation is $j-1$. The window
invariant makes the incoming $f_v$ nonzero, while
\cref{lem:app-divstep-basic} makes it even; hence $j\ge2$. By definition,
$q$ is odd. The modulus provided by part~\labelcref{inv:app-divstep-low} of
\cref{lem:app-divstep-window} is $2^{(j-1)-z}$, which is divisible by
$2^j$. The invariant therefore supplies a $c$ with $0\le c<2^{j-1}$ and
$g_v\equiv q c\pmod{2^j}$. The odd part $q$ is unchanged, and the update
gives
\begin{equation}
  g_v'\equiv q(c+b2^{j-1})\pmod{2^j}.
\end{equation}
Since $q$ is odd and $0\le c+b2^{j-1}<2^j$, bit $j-1$ of
$g_v'q^{-1}\bmod2^j$ is exactly $b$. For the second form, reduce the
determinant identity
$f'g_v'-g'f_v'=p$ modulo $2^j$. Because $2^j\mid f_v'$ and $f'$
is odd,
$g_v'\equiv p(f')^{-1}\pmod{2^j}$. Substitution into the first form
gives the second.
\end{proof}

The proof also shows how the window records the decisions.
Immediately after the swap that begins an excursion, the window
contains $1$; each subsequent negative-counter nonswap places its bit $b$
in the next higher position. Thus, at depth $j$, the window contains
\begin{equation}
  1+\sum_{k=1}^{j-1}b_k2^k,
  \label{eq:app-divstep-shift-register}
\end{equation}
where $b_k$ is the bit $b$ of the $k$th negative-counter nonswap after the
swap.

\begin{corollary}[Finite dependence]
\label[corollary]{cor:app-divstep-finite}
For each fixed odd $p$, the bit $b$ at depth $j$ is determined by at most
$2j$ bits of the updated state.
\end{corollary}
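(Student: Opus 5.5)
This is a direct consequence of \cref{thm:app-divstep-ladder}, and I will read off the dependence from its second form. By that theorem, for a negative-counter nonswap at depth $j=v_2(f_v')$, the decision $b$ equals $\operatorname{bit}_{j-1}\bigl(p(f'q)^{-1}\bmod 2^j\bigr)$. This residue is a function of $p\bmod 2^j$, $f'\bmod 2^j$ and $q\bmod 2^j$ alone. The reason is that reduction modulo $2^j$ commutes with multiplication, and inversion of an odd residue is well defined on $(\mathbb Z/2^j)^\times$. Since $p$ is fixed, $b$ is a function of the $j$ low bits of $f'$ and the $j$ low bits of $q$, i.e.\ of at most $2j$ bits.

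It remains to check that these bits are bits of the updated state. The low $j$ bits of $f'$ are literally bits $0,\dots,j-1$ of the register $f'$. For $q=f_v'/2^j$, bit $i$ of $q$ is bit $j+i$ of $f_v'$, so $q\bmod 2^j$ is read from bits $j,\dots,2j-1$ of $f_v'$. This holds with the two's-complement convention $u[k]$ fixed in the appendix, because $u\mapsto u/2^j$ on multiples of $2^j$ shifts the $2$-adic expansion. Together, at most $2j$ bits of $(f',f_v')$ are consulted. The depth $j$ itself is treated as given, as in the statement ("at depth $j$"). I will note that $j$ is in turn determined by the trailing zeros of $f_v'$, namely bits $0,\dots,j$.

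The bound can be sharpened, which is worth a remark because it matches the $2(j-1)$-input decoders of SI~\ref*{sec:modinv-divsteps}. Both $f'$ and $q$ are odd (\cref{lem:app-divstep-basic} and the definition of $q$), so their bit $0$ is constant and only $2(j-1)$ bits actually vary. The only point needing care, and the one I expect to be the main obstacle, is making sure the modular reduction is legitimate for possibly negative $f'$ and $f_v'$. I will handle it by working throughout in $\mathbb Z/2^j$ and in the $2$-adic bit convention defined at the start of the appendix.
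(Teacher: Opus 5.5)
Your proposal is correct and matches the paper's proof: both read off from the second form of \cref{thm:app-divstep-ladder} that $b$ depends only on $f'\bmod 2^j$ and $q\bmod 2^j$, and each of these residues occupies $j$ bits. Your additional remarks are correct refinements that the paper's one-line proof leaves implicit; they are that $q\bmod 2^j$ is read from bits $j,\dots,2j-1$ of $f_v'$, and that the oddness of $f'$ and $q$ sharpens the count to $2(j-1)$ (as used in SI~\ref*{sec:modinv-divsteps}).
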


\begin{proof}
By \cref{thm:app-divstep-ladder}, the residues of $f'$ and $q$ modulo
$2^j$ determine $b$, and each residue is represented by $j$ bits.
\end{proof}

\begin{corollary}[Depth-two rule]
\label[corollary]{cor:app-divstep-j2}
At every negative-counter nonswap with $j=2$,
\begin{equation}
  b=g_v'[1]\mathbin{\oplus}f_v'[3].
  \label{eq:app-divstep-j2}
\end{equation}
\end{corollary}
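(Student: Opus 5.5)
Corollary~\labelcref{cor:app-divstep-j2} should follow directly from the first form of \cref{thm:app-divstep-ladder}, specialized to $j=2$. That gives $b=\operatorname{bit}_1(g_v'q^{-1}\bmod 4)$ with $q=f_v'/4$ odd. The plan is to simplify the inverse modulo $4$, multiply out two odd residues modulo $4$, and read off bit $1$.

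First, I will eliminate the inverse. Every odd integer $u$ satisfies $u^2\equiv1\pmod 4$, since $(1+2a)^2=1+4a+4a^2$. Hence $q^{-1}\equiv q\pmod 4$, and $b=\operatorname{bit}_1(g_v'q\bmod 4)$.

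Next, I will write the two factors in binary. By \cref{lem:app-divstep-basic}, $g_v'$ is odd, so $g_v'\equiv 1+2g_v'[1]\pmod 4$. Likewise $q\equiv 1+2q[1]\pmod 4$. Since $f_v'=4q$, bit $1$ of $q$ is bit $3$ of $f_v'$, so $q[1]=f_v'[3]$. This also holds for negative $f_v'$ under the paper's two's-complement bit convention, because multiplying by $4$ shifts residues modulo $2^{k+1}$ consistently. The product is then
\[
(1+2g_v'[1])(1+2f_v'[3])\equiv 1+2\bigl(g_v'[1]+f_v'[3]\bigr)\pmod 4 .
\]
Its bit $1$ is $g_v'[1]\oplus f_v'[3]$, which is \cref{eq:app-divstep-j2}.

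No step here is a real obstacle. The only point needing care is that the bit notation $u[k]$ behaves correctly for negative $g_v'$ and $f_v'$. It does: every manipulation above is a congruence modulo $4$ or a shift by $v_2(f_v')=2$, and the two's-complement bits are defined through residues.
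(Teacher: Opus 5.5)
Your proposal is correct and follows essentially the same route as the paper: specialize the first form of \cref{thm:app-divstep-ladder} to $j=2$, use $q^{-1}\equiv q\pmod 4$, multiply two odd residues modulo $4$ to get bit $1$ as the XOR of their bit-$1$ values, and note $q[1]=f_v'[3]$. Your extra check that the two's-complement bit convention commutes with the shift by $4$ is a harmless refinement of the same argument.
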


\begin{proof}
By \cref{thm:app-divstep-ladder},
$b=\operatorname{bit}_1(g_v'q^{-1}\bmod4)$ with $q=f_v'/4$.
Every odd $q$ satisfies $q^{-1}\equiv q\pmod4$. For odd $u$ and $v$,
$uv\equiv1+2(u[1]\mathbin{\oplus}v[1])\pmod4$, so
$\operatorname{bit}_1(uv\bmod4)=u[1]\mathbin{\oplus}v[1]$. Taking
$u=g_v'$ and $v=q$ gives the result because $q[1]=f_v'[3]$.
\end{proof}

\end{document}